\newtheorem{example}{Example}
\newtheorem{theorem}{Theorem}
\newtheorem{lemma}{Lemma}
\newtheorem{definition}{Definition}
\newtheorem{remark}{Remark}
\journal{Information Sciences}
\begin{document}

\begin{frontmatter}



\title{Optimal design, financial and risk modelling with stochastic processes having semicontinuous  covariances}


\author[IFAS, VAL]{M. Stehl\'\i k\corref{cor1}}
\ead{mlnstehlik@gmail.com}
\author[IFAS]{Ch. Helpersdorfer}
\author[IFAS]{P. Hermann}
\ead{philipp.hermann@jku.at}

\address[IFAS]{Department of Applied Statistics, Johannes Kepler University Linz, Altenbergerstra\ss e 69, 4040 Linz, Austria}
\address[VAL]{Institute of Statistics, University of Valpara\'iso, Blanco 951, Valpara\'iso, Region de Valpara\'iso, Chile}

\cortext[cor1]{Corresponding author}

\begin{abstract}
A.N.~Kolmogorov proposed several problems on stochastic processes, which has been rarely addressed later on. One of the open problems are stochastic processes with discontinuous covariance function. For example, semicontinuous covariance functions have been used in regression and kriging by many authors in statistics recently. In this paper we introduce purely topologically defined regularity conditions on covariance kernels which are still applicable for increasing and infill domain asymptotics for regression problems, kriging and finance.  These conditions are related to semicontinuous maps of Ornstein Uhlenbeck (OU) processes.  Beside this new regularity conditions relax the continuity of covariance function by consideration of semicontinuous covariance. We provide several novel applications  of the introduced class  for optimal  design of random fields, random walks in finance and probabilities of ruins related to shocks, e.g.~by earthquakes. In particular we construct a random walk model with semicontinuous covariance.
\end{abstract}
\begin{keyword}
Optimal design \sep semicontinuous  covariance \sep correlated process \sep strong convergence \sep topology


\end{keyword}

\end{frontmatter}

\textbf{Dedication:}  \emph{This paper discusses processes with jumps in correlations, issue dedicated to  Andrey N. Kolmogorov.}

\section{Introduction} \label{intro}
Mostly used dependence structure in regression problems is covariance. \cite{Nather85} states in his book: ``One of the fundamental assumptions, the knowledge of the covariance function, is in most cases almost unrealistic. It seems to be artificial, that the first moment $E(Y(x))$ is assumed to be unknown whereas the more complicated second one is assumed to be known...''. Recently, discontinuous covariance functions have been used in both regression and kriging by many authors. For instance, it was observed that in some situations it can be useful to use semicontinuous covariance functions instead of continuous ones. One example is an application of input deformations with Brownian motion filters for discontinuous regression (see \cite{Girdziusas}). For a treatment of discontinuous nature of kriging interpolation see e.g.~\cite{Meyer}.  Another widely spread use of semicontinuous covariances follows from usage of nugget effect, see \cite{Cressie}.
In computer experiment literature, typically used covariance functions vary from analytically smooth Gaussian to one-sided differentiable exponentially decaying (see \cite{SacksW}). However, less smooth covariance have not been studied from statistical perspective.

But how erratic may a covariance function be? If we would like to consider the pragmatical point of view, then any practically relevant discontinuous covariance function should be measurable. Then using the result of \cite{Crum}, measurable covariance function $C$ admits decomposition $C=C_0+C_1,$ where $C_0$ is  a continuous covariance and $C_1$ vanishes Lebesgue almost everywhere. In the latter we assume (without loss of generality) domination by a Lebesgue measure, which covers most of practical applications. However, it is essential to recall the assumption of measurability of $C$ (see \cite{Crum}). \cite{Crum} has also proven that if $C$ is isotropic and positive definite on $R^m,m>1$ then  $C$ is continuous except perhaps at $d=0.$ However, for $m=1$ the latter is not true anymore, as an example we may take $C(q)=1$ for all rational numbers $q\in Q$ and $0$ otherwise, which is positive definite, isotropic and discontinuous on $Q.$ In this  paper we consider the more complicated, but very practical, case $d=1$ (e.g.~time).

Beside the above discussed problem, the path restriction from continuity of covariance is well recognized (at least from \cite{BELAYEV} to the best knowledge of author, even though  Kolmogorov mentioned this problem already). \cite{BELAYEV} has proven that for a stationary Gaussian process with a continuous correlation function, assuming real values, one of the following alternatives holds: either, with probability one, the sample functions are continuous or, with probability one, they are unbounded in every finite interval. But this is too restrictive for many practical statistical problems. As a remedy, we suggest the semicontinuity of covariance functions as a good and practical substitute for a continuity. More precisely, semicontinuity is a more  appropriate variant  of a ``continuity'' framework,  which can still justify increasing domain asymptotics under mild regularities on space. We provide weak conditions on covariance functions which lead to feasible results for increasing domain asymptotics. We represent a class of processes obtained by specific semicontinuous maps of covariance functions of stationary Ornstein-Uhlenbeck (OU) processes. Several examples aim at convincing the reader that this
way of approach to regularity conditions for covariance functions opens a direct way to relax continuity  conditions in order to benefit statistical science.

\subsection{Model}

Let us consider for the sake of simplicity the isotropic stationary process (see e.g.~\cite{Cressie}) $$ Y\left( x\right) = \theta +\varepsilon \left( x\right) $$ with the design points $x_1,...,x_n$ taken from a compact design space $X$. For the sake of simplicity we consider $X=[a,b].$ The mean parameter $E(Y(x)):=\theta\in \Theta $ is unknown, the variance-covariance structure $C_r(d)$ depends on another unknown parameter $r\in \Omega $ and $d_{i}$ is the distance between two particular design points, $x_i$ and $x_{i+1}.$ Parametric spaces $\Theta$ and $\Omega $ are open sets with respect to standard topology.  Let  us assume $E\left(Y(s+h)-Y(s)\right)=0$ and
define $$2\gamma(h)=\mbox{var}\left(Y(s+h)-Y(s)\right)$$
(equation make sense only when right side depends only on $h$). If this is the case, we will say that the process is \emph{intrinsically stationary}, the function $2\gamma(h)$ is then called \emph{variogram} and $\gamma(h)$ is called \emph{semivariogram} (for more details see \cite{Cressie}). Let us briefly introduce the most common isotropic semivariograms, for further discussion see e.g.~\cite{Cressie}. Now consider the three basic isotropic models: linear, spherical and exponential.

\begin{itemize}

\item \textbf{Linear model} valid in $R^n,\ n\geq 1,$
\begin{eqnarray}
\gamma(d)=\left\{\begin{array} {ll}
  \tau^2+\sigma^2d ,&\mbox{for  $d>0$,}\\
      0 &\mbox{otherwise.}\end{array}\right.\label{eq:lin}
       \end{eqnarray}

\item \textbf{Spherical model} valid in $R^1,\ R^2, R^3$
 \begin{eqnarray}
\gamma(d)=\left\{\begin{array} {lll}
  \tau^2+\sigma^2(\frac{3d}{2r}-\frac12 (\frac dr)^3) ,&\mbox{for  $0<d\leq r$,}\\
 \tau^2+\sigma^2,&\mbox{for  $r<d$,}\\
      0 &\mbox{otherwise.}\end{array}\right.\label{eq:sph}
       \end{eqnarray}

\item \textbf{Exponential model} valid in $R^n,\ n\geq 1,$
\begin{eqnarray}
\gamma(d)=\left\{\begin{array} {ll}
  \tau^2+\sigma^2(1-\exp(-r d)),&\mbox{for
     $d>0$,}\\
      0 &\mbox{otherwise.}\end{array}\right.\label{eq:exp}
       \end{eqnarray}
\end{itemize}

The common property of all three isotropic covariance models is that the covariances decrease with distance of design points. This motivates the
introduction of a class of semicontinuous, but non-increasing covariances, which will be called later abc, defined as follows.

\begin{definition}\label{def1} \textbf{(abc class)}

We assume the class of positive definite functions $C_r(d):\Omega\times R^+\to R$ such that

a) $C_r(0)=1,\ C_r(d)\geq 0$ for all $r\in \Omega$ and $0<d<+\infty,$

b) for all $r$  mapping $d\to C_r(d)$ is semicontinuous, almost everywhere convex and
decreasing on $(0,+\infty)$

c) $\lim_{d\to +\infty}C_r(d)=0.$
\end{definition}

\begin{remark}

\textbf{i)} First, notice that conditions abc in Definition \ref{def1} define a class of valid covariance functions for continuous $C_r(d)$ (due to
the celebrated criterion of  \cite{Polya}). However, not every semi-continuous, but almost everywhere convex and non-increasing map $C_r(d)$ defines a valid correlation structure in $(0,+\infty)^n,$ since positive definiteness may be violated.  Therefore we have written ``positive definite" explicitly in Definition \ref{def1}.  However, it is easy to check that  for $n=2$ abc define always a valid correlation,  since $C_r(d)$ is decreasing. For higher $n$ one can use the fact that $C_r(d)$ has {maximally countable many points} of discontinuity, which will be well separated by a sufficiently regular support topology  (which may allow us to make a modified  proof from \cite{Polya}). A more detailed classification of such classes is out of scope of this paper and will be a valuable direction for further research.

\bigskip

\textbf{ii)} Relaxation of condition b) from (typically assumed) continuity to semicontinuity is worth some words. First, from practical point of view, many covariance structures recently used in theory and practice have been indeed discontinuous and still semicontinuous. A survey to semicontinuous functions can be found in  \cite{Neubrunn}. To illustrate semicontinuous (but not continuous) covariance functions (i.e.~covariance function satisfying abc) let us consider for appropriately chosen $c$ and $D$ the covariance defined at $R^+$
\begin{eqnarray} C(d)=\left\{\begin{array} {ll} \sigma^2,&\mbox{for
$d=0$,}\\
c\sigma^2,&\mbox{for
$0<d\leq D$,}\\
0, &\mbox{else}\end{array}\right.\label{disccov}
\end{eqnarray}
which is the prolongation of the covariance given by $c\in [0,1],$ $D,d\in$ $[0,2]$ in \cite{MuellerPazman}.
Figure \ref{fig:Paths2} contains several trajectories for different values of parameters $c$ and $\sigma$ of random walk $Y_t=\sum_{i=1}^t X_i$ based on Gaussian process $X_i$ with covariance (\ref{disccov}).
For this computation we fixed $D=2, d=0.1.$ The impact of the parameters $\sigma$ and $c$ should be visible within one plot. Therefore the comparative line (black) for the path of $Y_t$ was computed for $c = 0.1$ and $\sigma = 1$. It is clear that increasing $c$ yields in a shift in direction of the origin as well as the effect of a decrease of $\sigma$.
\begin{figure}[!ht]
	\centering
		\includegraphics[width=0.45\textwidth, trim = 0 20 0 0, clip]{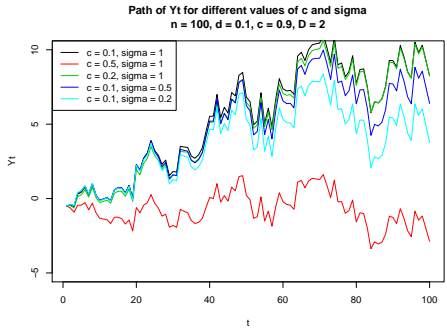}
	\caption{Paths for different values of parameters c and $\sigma$ on the basis of equation (\ref{disccov})}
	\label{fig:Paths2}
\end{figure}

The next theorem provides
the representation of covariance functions which are semicontinuous
maps of covariance of Ornstein Uhlenbeck process.

\begin{theorem} \textbf{(Representation Theorem)}

Let $C$ be abc. Then $C_r(d)=\sigma^2\exp(-\psi_r(d)),$ where
$\psi_r: [0,+\infty)\to R\cup \{+\infty\}$ is: semicontinuous, non
decreasing, $\lim_{d\to +\infty}\psi_r(d)=+\infty.$
\end{theorem}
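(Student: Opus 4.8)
The plan is to define $\psi_r$ as minus the logarithm of $C_r$ and then transcribe each of the three abc properties of $C_r$ into the corresponding property of $\psi_r$. Normalising so that $\sigma^2=C_r(0)=1$ (the general variance scale is recovered by replacing $C_r$ with $\sigma^2 C_r$, which merely shifts $\psi_r$ by the constant $\log\sigma^2$), I would put
\[
\psi_r(d):=\begin{cases}-\log C_r(d), & C_r(d)>0,\\ +\infty, & C_r(d)=0.\end{cases}
\]
Then $C_r(d)=\exp(-\psi_r(d))$ identically, which is the asserted form. Since $C_r$ is positive definite with $C_r(0)=1$ we have $0\le C_r(d)\le 1$ for every $d>0$, hence $\psi_r\ge 0$ and $\psi_r(0)=0$. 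I would also record at the outset that, $C_r$ being non-increasing and non-negative on $(0,+\infty)$, its zero set is necessarily a terminal interval, $\{d:C_r(d)=0\}=[d^{\ast},+\infty)$ or $(d^{\ast},+\infty)$ for some $d^{\ast}\in(0,+\infty]$; this halfline is exactly where $\psi_r\equiv+\infty$, so it creates no conflict with monotonicity.

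Monotonicity and the behaviour at infinity are then immediate. On $\{C_r>0\}$ the map $d\mapsto-\log C_r(d)$ is non-decreasing because $C_r$ is non-increasing there (condition b) and $-\log$ is decreasing; on the terminal interval $\psi_r=+\infty$ dominates every finite value, and $\psi_r(0)=0$ lies below every value; hence $\psi_r$ is non-decreasing on all of $[0,+\infty)$ in the extended-real sense. For the limit, condition c) gives $C_r(d)\downarrow 0$ as $d\to+\infty$, so either $\psi_r\equiv+\infty$ near $+\infty$, or $\psi_r(d)=-\log C_r(d)\to+\infty$ because $-\log t\to+\infty$ as $t\downarrow 0$; in both cases $\lim_{d\to+\infty}\psi_r(d)=+\infty$.

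The substantive step is semicontinuity. The point to exploit is that the extended-real-valued map $g\colon[0,1]\to[0,+\infty]$ with $g(t)=-\log t$ for $t>0$ and $g(0)=+\infty$ is a continuous, strictly decreasing (order-reversing) homeomorphism. Composing with such a $g$ turns lower semicontinuity into upper semicontinuity and conversely: if $C_r$ is lower semicontinuous then $\psi_r=g\circ C_r$ is upper semicontinuous, and if $C_r$ is upper semicontinuous then $\psi_r$ is lower semicontinuous; either way $\psi_r$ is semicontinuous, as claimed. Concretely, for any $c$ one has $\{\psi_r<c\}=\{C_r>g^{-1}(c)\}$ and $\{\psi_r>c\}=\{C_r<g^{-1}(c)\}$, so openness of the superlevel (resp.\ sublevel) sets of $C_r$ passes to the matching sets of $\psi_r$. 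The one spot deserving an explicit look is a point $d$ with $C_r(d)=0$, where $\psi_r(d)=+\infty$: if $C_r$ is upper semicontinuous there, then $\limsup_{d'\to d}C_r(d')\le 0$ and, by a), $\liminf_{d'\to d}C_r(d')\ge 0$, so $C_r(d')\to 0$ and $\psi_r(d')\to+\infty=\psi_r(d)$, which gives lower semicontinuity of $\psi_r$ at $d$; if $C_r$ is lower semicontinuous there, upper semicontinuity of $\psi_r$ at $d$ is automatic since $+\infty$ is the top value.

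The main obstacle is not depth but the bookkeeping at the boundary of the support $\{C_r>0\}$: reconciling the value $+\infty$ of $\psi_r$ on the terminal interval with the required monotonicity and with the precise flavour of semicontinuity, and making sure the composition rule with $-\log$ is applied on the correct (extended) codomain $[0,+\infty]$. Once the zero set of $C_r$ is identified as a halfline and $g$ is treated as a homeomorphism onto $[0,+\infty]$, these points are routine. It is worth noting that the almost-everywhere convexity in condition b) plays no role here — only the monotonicity and semicontinuity in b), together with a) and c) — so the representation is a consequence of the purely topological part of the abc hypotheses.
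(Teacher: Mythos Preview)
Your proof is correct and follows the same approach as the paper: define $\psi_r(d)=-\log\bigl(C_r(d)/\sigma^2\bigr)$ and read off each property from the corresponding abc hypothesis via composition with $-\log$. You are in fact more careful than the paper's brief argument, explicitly handling the terminal zero set of $C_r$, the extended value $+\infty$, and the fact that the order-reversing map $-\log$ swaps upper and lower semicontinuity rather than preserving each separately.
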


\begin{proof}{(Representation Theorem)}

We have
\begin{eqnarray}
\psi_r(d)= -\log\left(\frac{C_r(d)}{\sigma^2}\right), \label{komposicia}
\end{eqnarray}
where $\sigma^2=C_r(0).$ Thus semicontinuity  follows from
(\ref{komposicia}), since it is a composition of a continuous and
semicontinuous function. Non-decreasingness follows from
(\ref{komposicia}), since it is a composition of a monotonous and
non-increasing function. We have $lim_{d\to
+\infty}\psi_r(d)=-\lim_{d\to +\infty}\log C_r(d)=-log(0)=+\infty.$
\end{proof}

\bigskip

\textbf{iii)} Notice that  assumptions abc are fulfilled with many covariance structures (e.g.~power exponential correlation family or Mat\'ern class, see e.g.~\cite{SacksW}).

\textbf{iv)}
Let $C_1,C_2\in abc$ then for any $\alpha, \beta>0$ $\alpha C_1+\beta C_2\in abc.$

Let $C\in abc$, then for any $\alpha\in N$ also $ C^\alpha\in abc.$ Here $N$ denotes the set of all natural numbers.

\end{remark}

The paper is organized as follows: in section 2 we
study properties of abc class of covariance functions in more details.  Also a discussion on the topological convergences preserving the regularity conditions abc is given for different choices of the target space regularities. Pointwise convergent sequences of abc-covariance  in metric space is proven to be convergent to the abc-covariance. Strong convergence, a purely topologically defined uniform convergence for non-uniform spaces is found to be an appropriate candidate for preserving conditions abc when target space is endowed with general topology. Section 3 considers optimal design for regression problem with correlated errors for abc covariance function. In particular,  the lower and upper bound for $M_\theta(d)$ for the class of processes with covariances satisfying abc is given and the monotonicity of functions $d\to LB(d),M_\theta(d), UB(d)$ is proven. Optimality of equidistant design is proven for parameter $\theta$ and covariance functions from abc class. We also provide conditions for existence of admissible optimal design for parameter $r$ in abc class. Several examples illustrate increasing domain asymptotics for various covariance functions. In  section 4 we provide illustrations of introduced stochastic processes and their applications  to finance. The first application is  forecasting of stock markets with implications to Efficient Market Hypothesis. The second application is to  model the probability of ruin with semicontinuous covariance, with potential importance regarding catastrophes such as earthquakes or rain-storms.
To maintain the continuity of explanation, the technicalities and proofs are put into the Appendix.

\section{abc class}

\subsection{Topological convergences preserving class abc}

Here we provide results on topological convergences preserving regularity conditions abc. We prove that the pointwise convergence is sufficient for the locally separable metric domain and metric target space (usually both are $R$ with Euclidean topology). However, there are situations where it is more appropriate to endow the target space $R$ by a different topology: for this case we give regularity conditions on the topology under which abc is preserved by strong convergence. We show that the minimal regularity of the target space is to be a regular and fully normal topological space.

We have found strong convergence (as introduced by \cite{KupkaToma}) to be the most relevant for our setup. It is defined purely topologically and it is a topological uniform convergence for non-uniform spaces as the analogue of uniform convergence for metric and uniform spaces which preserve the continuity of the functions. Strong convergence can be defined by the convergent nets and has many nice properties, e.g.~preserving the fixed point property (FPP) (see \cite{Kupka00}).  The strong convergence is the appropriate one, since the uniform convergence in a non compact space does not
preserve FPP (see \cite{Kupka00}). In such a context strong convergence is the weakest one with FPP: pointwise convergence does not have such a property and convergences based on graph topology, fine and open cover are ``too strong'', i.e.~FPP is also preserved by the graph topology, but the graph convergence implies the strong convergence. Theorem \ref{th2} shows that properties abc are preserved by pointwise convergence for the metric target space.

\begin{theorem} \label{th2}
Let $X_n$ be a sequence of isotropic random fields with covariance kernels $K_n$ defined on a locally separable metric space mapping to the metric space satisfying conditions abc. Let $K_n$ be uniformly  convergent to $K,$ then $K$ also satisfies abc.
\end{theorem}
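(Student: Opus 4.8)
\medskip\noindent\textbf{Proof proposal.}
The plan is to check that the radial profile $d\mapsto K(d)$ on $R^{+}$ inherits, one by one, every ingredient of Definition~\ref{def1} together with positive definiteness. Three of these are immediate from pointwise convergence alone. Positive definiteness: for any finite set $x_{1},\dots,x_{k}$ in the domain and scalars $c_{1},\dots,c_{k}$ one has $\sum_{i,j}c_{i}c_{j}K_{n}(\rho(x_{i},x_{j}))\ge 0$ for every $n$, and this finite sum depends continuously on the finitely many values $K_{n}(\rho(x_{i},x_{j}))$, so letting $n\to\infty$ preserves the inequality. Condition a): $K(0)=\lim_{n}K_{n}(0)=1$ and $K(d)=\lim_{n}K_{n}(d)\ge 0$ for $d>0$. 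The monotonicity half of b): if $0<d_{1}<d_{2}$ then $K(d_{1})=\lim_{n}K_{n}(d_{1})\ge\lim_{n}K_{n}(d_{2})=K(d_{2})$, so $d\mapsto K(d)$ is non-increasing (this is what makes the semicontinuity discussion below tractable).

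Condition c) is where the hypothesis of \emph{uniform} convergence is actually used (a pointwise limit of the admissible kernels $K_{n}(d)=\exp(-d/n)$ would be the constant $1$, violating c)). Given $\ve>0$, pick $n$ with $\sup_{d\ge 0}|K_{n}(d)-K(d)|<\ve/2$ and then $D$ with $K_{n}(d)<\ve/2$ for $d>D$; then $0\le K(d)<\ve$ for $d>D$, and since $\ve$ was arbitrary, $\lim_{d\to+\infty}K(d)=0$.

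The two remaining clauses of b) are the substance. For semicontinuity: a non-increasing function is semicontinuous at a point precisely when it is one-sidedly continuous there, so each $K_{n}$ is either lower or upper semicontinuous; by pigeonhole one of the two types occurs for infinitely many $n$, and along that subsequence (whose uniform limit is still $K$) I would invoke the standard fact that a uniform limit of lower (resp.\ upper) semicontinuous functions is lower (resp.\ upper) semicontinuous --- for $x_{0}$ and $\ve>0$, choose $n$ in the subsequence with $\|K_{n}-K\|_{\infty}<\ve/3$ and a neighbourhood $U$ of $x_{0}$ with $K_{n}>K_{n}(x_{0})-\ve/3$ on $U$, whence $K>K(x_{0})-\ve$ on $U$. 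For almost-everywhere convexity, I would fix the interpretation that each $K_{n}$ coincides, off an at most countable (hence Lebesgue-null) exceptional set $N_{n}$, with a convex non-increasing $g_{n}$. Then $N:=\bigcup_{n}N_{n}$ is still countable, its complement in $(0,+\infty)$ is dense, and on that complement $g_{n}=K_{n}\to K$ pointwise. Since a sequence of convex functions converging on a dense subset of an open interval converges there pointwise to a convex function, $g_{n}\to g$ on all of $(0,+\infty)$ with $g$ convex; and $g=K$ a.e.\ because they agree on the co-null set $(0,+\infty)\setminus N$. (That $g$, hence the a.e.\ convex representative of $K$, is also non-increasing follows from the continuity of $g$ together with $K$ being non-increasing.)

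I expect the main obstacle to be exactly this convexity clause: one must commit to a precise reading of ``almost everywhere convex,'' check that the countably many exceptional sets amalgamate into a single null set, and then push convexity through the limit via the good convergence theory of convex functions rather than by a naive pointwise argument on all of $(0,+\infty)$, which can fail when the jump sets of the $K_{n}$ are dense. The possibly varying semicontinuity type across $n$ is a lesser nuisance, handled by the subsequence above. Finally, I note that local separability of the domain appears not to be needed for this theorem --- conditions abc constrain only the radial profile $d\mapsto K(d)$ --- whereas it is the metric, hence $\ve$-quantifiable, structure of the \emph{target} space that drives the arguments for c) and for semicontinuity; the subsequent results in this section are precisely what replace this when the target carries only a general topology.
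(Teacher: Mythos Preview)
Your argument is correct and, in fact, considerably more complete than the paper's own proof, which is a three-line sketch: it cites a theorem of Neubrunnov\'a to the effect that pointwise convergence of maps from a locally separable metric space to a metric space preserves semicontinuity, remarks that nonnegativity and monotonicity pass to the limit, and then interchanges the limits $d\to\infty$ and $n\to\infty$ to obtain c). The paper does not address positive definiteness or the almost-everywhere convexity clause at all; you do, and your treatment of both is sound. The genuinely different choice is in the semicontinuity step: the paper invokes Neubrunnov\'a's pointwise-convergence result, and \emph{this is precisely why the locally separable metric hypothesis on the domain is present in the statement}; your direct $\ve/3$ argument from uniform convergence bypasses that theorem entirely, which is why you (correctly, for your route) found the hypothesis superfluous. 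One small wrinkle in your version: the pigeonhole step tacitly assumes each $K_{n}$ is \emph{globally} lower or upper semicontinuous, whereas the abc definition only asks for semicontinuity at each point, and for a non-increasing function the type (left- versus right-continuous) may vary from point to point. This is harmless---simply run your $\ve/3$ argument at a fixed $x_{0}$, choosing the subsequence according to which type occurs infinitely often \emph{at that point}---but it is worth stating the argument pointwise rather than selecting a single global subsequence.
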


\begin{remark}
Notice, that if we have a continuous covariance, we will need to have  a uniform convergence to preserve continuity.
\end{remark}

Now let us consider the general situation, i.e.~general topology is endowed on the target space. Theorem \ref{theo1} gives the preservation of abc for continuous covariances.

\begin{theorem}\label{theo1}
Let $\{X_\gamma\},\gamma\in \Gamma $ be a net of isotropic random fields with continuous covariance kernels $K_\gamma$ mapping to the regular topological space satisfying conditions abc. Let $K_\gamma$ be strongly convergent to $K,$ then $K$ also satisfies abc.
\end{theorem}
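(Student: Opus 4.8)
\textbf{Proof proposal (Theorem \ref{theo1}).}
The plan is to exploit two features of strong convergence (\cite{KupkaToma}): it implies pointwise convergence, and --- because the target is regular --- it preserves continuity of the limit. The second feature immediately gives that $K$ is continuous; it then remains to check conditions a), b), c) for $K$, and for almost all of this one simply passes to the pointwise limit, with only c) genuinely using the ``uniform convergence for non-uniform spaces'' strength of strong convergence.

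First I would settle condition a) together with the order-theoretic content of b). Since strong convergence yields $K_\gamma(d)\to K(d)$ for every $d$, we get $K(0)=\lim_\gamma K_\gamma(0)=1$ and $K(d)=\lim_\gamma K_\gamma(d)\ge 0$ for $0<d<\infty$. Positive definiteness is a closed condition on finitely many evaluations: for any $x_1,\dots,x_m$ and scalars $c_1,\dots,c_m$ one has $\sum_{i,j}c_ic_jK(x_i-x_j)=\lim_\gamma\sum_{i,j}c_ic_jK_\gamma(x_i-x_j)\ge 0$, so $K$ is positive definite. Likewise, if $0<d_1\le d_2$ then $K_\gamma(d_1)\ge K_\gamma(d_2)$ for every $\gamma$, hence $K(d_1)\ge K(d_2)$, so $d\mapsto K(d)$ is decreasing on $(0,\infty)$.

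Next, convexity. The key observation is that a \emph{continuous} function on an interval that is convex outside a Lebesgue-null set is convex everywhere: if $K_\gamma(\lambda x+(1-\lambda)y)\le\lambda K_\gamma(x)+(1-\lambda)K_\gamma(y)$ failed for some $x<y$ and $\lambda\in(0,1)$, continuity would make the strict reverse inequality persist on a set of triples of positive measure, contradicting a.e.\ convexity. Since each $K_\gamma$ is continuous by hypothesis and a.e.\ convex by abc, each $K_\gamma$ is in fact convex on $(0,\infty)$; a pointwise limit of convex functions is convex, so $K$ is convex on $(0,\infty)$, a fortiori a.e.\ convex. Being continuous, $K$ is also semicontinuous, which completes b).

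It remains to verify c), which I expect to be the main obstacle. Since $K$ is nonnegative and decreasing, $L:=\lim_{d\to\infty}K(d)$ exists and $L\ge 0$, and I must show $L=0$. Here I would use that strong convergence is the topological analogue of uniform convergence relative to the neighbourhood filter of the diagonal: for each $\varepsilon>0$ the set $\{(s,t):|s-t|<\varepsilon\}$ is an open neighbourhood of the diagonal of the (regular) target, so there is $\gamma_0$ with $|K_{\gamma_0}(d)-K(d)|<\varepsilon$ for all $d\in(0,\infty)$, whence $K(d)\le K_{\gamma_0}(d)+\varepsilon$; letting $d\to\infty$ and using $\lim_{d\to\infty}K_{\gamma_0}(d)=0$ gives $L\le\varepsilon$, so $L=0$. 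The delicate points are exactly this last argument --- making precise that strong convergence controls the limit uniformly in $d$ through diagonal neighbourhoods of the target topology, and confirming that regularity of the target (rather than the full normality invoked elsewhere) suffices once the approximating kernels are already continuous --- together with the check, flagged above, that the a.e.-convexity step uses only the continuity of the $K_\gamma$ and would fail under mere semicontinuity.
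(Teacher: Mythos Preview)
Your proposal is correct and follows essentially the same route as the paper: both arguments use that strong convergence into a regular target implies pointwise convergence (giving a) and the monotonicity part of b)) and preserves continuity (giving the semicontinuity part of b)), and then handle c) by passing the limit $d\to\infty$ through the net. You are in fact more careful than the paper's own proof --- you explicitly verify positive definiteness and the a.e.-convexity clause (which the paper's proof does not mention), and for c) you spell out the diagonal-neighbourhood/uniform-control reason behind the limit interchange that the paper simply asserts in one line.
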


Before proving the general theorem \ref{theo1} for the case of abc covariance functions we need following two technical Lemmas. Their proofs  can be found in \cite{Stehlik2014}.
\begin{lemma}
Let $X$ be a topological space and $Y$ be a regular space. Let $\{f_\gamma:X\to Y,\gamma\in \Gamma\}$  be a net of functions semicontinuous at $x\in X$ that converges strongly to a function $f$. Then $f$ is semicontinuous at $x$.
\end{lemma}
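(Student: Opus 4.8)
The plan is to mimic the classical argument that a strong limit of continuous maps into a regular space is continuous, but to run it using only \emph{one-sided} neighbourhoods of $f(x)$, so that merely semicontinuous approximants $f_\gamma$ suffice and the conclusion is the one-sided one. Fix the point $x$ and treat, say, the lower-semicontinuous case (the upper one is symmetric); concretely I regard ``$g$ semicontinuous at $x$'' as ``$g$ continuous at $x$ into $Y$ equipped with a fixed coarser topology $\tau_s$'' --- for $Y=\mathbb{R}$, the topology generated by the rays $(c,+\infty)$ --- so that every $\tau_s$-open set is open in the ambient regular topology $\tau_Y$. Recall that strong convergence of $\{f_\gamma\}$ to $f$ means: for every $t\in X$ and every $\tau_Y$-open $V\ni f(t)$ there are a neighbourhood $U$ of $t$ and an index $\gamma_0\in\Gamma$ with $f_\gamma(U)\subseteq V$ for all $\gamma\succeq\gamma_0$; in particular $\{f_\gamma\}$ converges pointwise to $f$.

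First I would fix an arbitrary $\tau_s$-open $W\ni f(x)$; the goal is to produce a neighbourhood of $x$ that $f$ maps into $W$. Since $W$ is $\tau_Y$-open and $Y$ is regular, I interpose a closed set, choosing a $\tau_Y$-open $V$ with $f(x)\in V\subseteq\overline{V}\subseteq W$ (for $Y=\mathbb{R}$ with $W=(c,+\infty)$: take $V=(c',+\infty)$ for some $c<c'<f(x)$, so $\overline{V}=[c',+\infty)\subseteq W$). Then I apply strong convergence at the point $x$ to this $V$, obtaining a neighbourhood $U$ of $x$ and $\gamma_0\in\Gamma$ with $f_\gamma(U)\subseteq V$ for every $\gamma\succeq\gamma_0$.

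Now for each $t\in U$ the net $\{f_\gamma(t)\}$ converges to $f(t)$ and is eventually (for $\gamma\succeq\gamma_0$) contained in the closed set $\overline{V}$, hence $f(t)\in\overline{V}\subseteq W$. Thus $f(U)\subseteq W$, and as $W$ was an arbitrary $\tau_s$-neighbourhood of $f(x)$, the function $f$ is semicontinuous at $x$. The semicontinuity of the individual $f_\gamma$ is exactly the (weak) available regularity of the approximants and is all the scheme needs --- it never invokes a two-sided neighbourhood; in the companion Theorem \ref{theo1} the identical argument run with two-sided neighbourhoods upgrades this to continuity, which is why the target is required to be regular there. The main obstacle, and the only place regularity enters, is the passage from what strong convergence gives directly, namely $f(U)\subseteq\overline{V}$ (the \emph{closure}), to $f(U)\subseteq W$: this is what forces the pre-shrinking $V\subseteq\overline{V}\subseteq W$, which is possible precisely because $Y$ is regular --- without it, already pointwise limits of semicontinuous functions fail to be semicontinuous. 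A minor point to check is that the one-sided admissible set $W$ is genuinely $\tau_Y$-open so that strong convergence may be invoked on its regular shrinking, which is immediate since $\tau_s\subseteq\tau_Y$.
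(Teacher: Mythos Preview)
Your argument is correct and is the standard Kupka--Toma proof that strong limits into a regular space are continuous, run with one-sided neighbourhoods. The paper itself does not prove this lemma; immediately after stating it, the authors write that the proofs of both technical lemmas ``can be found in'' an external technical report, so there is no in-paper proof to compare against. Your route---interpose $V\subseteq\overline V\subseteq W$ by regularity, invoke the defining property of strong convergence to push an entire neighbourhood $U$ of $x$ into $V$ for all large $\gamma$, then pass to the pointwise limit inside the closed set $\overline V$---is the natural one and almost certainly what the cited source does.

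One point you half-acknowledge but should state plainly: with the definition of strong convergence you give, the semicontinuity of the $f_\gamma$ is never actually used. The identical three steps applied to an arbitrary open $W\ni f(x)$ (not just a $\tau_s$-open one) show that any strong limit into a regular space is \emph{continuous} at $x$, regardless of what the approximants $f_\gamma$ are. Thus the lemma as formulated carries a redundant hypothesis and an unnecessarily weak conclusion. This is not a flaw in your proof---it is a known feature of strong convergence into regular spaces---but your sentence that ``the semicontinuity of the individual $f_\gamma$ is exactly the (weak) available regularity of the approximants and is all the scheme needs'' is misleading: the scheme does not need even that.
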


Example 2 in \cite{KupkaToma} shows, that the regularity of target space $Y$ cannot be omitted. Now we need a Lemma providing a more general version of interchange of limits theorem for a net of not necessarily continuous covariance functions satisfying abc to justify that limit also satisfies c).

\begin{lemma}
Let $Z$ be a topological space, $Y$ is a fully normal, $T_1$-space and $\emptyset\ne X\subseteq Z.$ Let $a\in Z$ be an accumulation point of a set $X.$ Let net $f_\gamma (:X\to Y) \stackrel{s}{\to}f$ and $\forall \gamma\in \Gamma$ exists $\lim_{x\to a}f_\gamma(x):=A_\gamma\in Y.$ Then a net $\{A_\gamma\}$ is convergent and the limits interchange is valid, i.e.
$$ \lim_{x\to a}\lim_{\gamma\in \Gamma} f_\gamma(x)= \lim_{\gamma\in \Gamma} \lim_{x\to a}f_\gamma(x)$$
\end{lemma}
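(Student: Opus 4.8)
The plan is to read the statement as a topological version of the Moore--Osgood iterated--limit theorem and to imitate the classical ``uniform convergence'' argument, with members of open covers of $Y$ playing the role of metric balls and the \emph{fine uniformity} of $Y$ playing the role of a complete metric. Since $Y$ is fully normal and $T_1$, it is paracompact and Hausdorff (A.~H.~Stone), so the sets $E_{\mathcal V}:=\bigcup_{V\in\mathcal V}V\times V$, one for each open cover $\mathcal V$ of $Y$ (equivalently their closures, since the closed entourages form a base), constitute a base of a uniformity $\mathcal U$ inducing the topology of $Y$, and $Y$ is complete in $\mathcal U$ because every paracompact Hausdorff space is Dieudonn\'e complete. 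The one feature of strong convergence that I invoke is that it behaves like uniform convergence on a neighbourhood of $a$: for each open cover $\mathcal V$ of $Y$ there are a deleted neighbourhood $N$ of $a$ and an index $\gamma_0$ with $f_\gamma(x),f(x)$ lying in a common member of $\mathcal V$ for all $\gamma\ge\gamma_0$ and all $x\in N\cap X$; strong convergence also yields the pointwise limit $f_\gamma(x)\to f(x)$, which I use at the very end.

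\textbf{Step 1 (the net $\{A_\gamma\}$ converges).} Fix an open cover $\mathcal W$ of $Y$, take a star-refinement $\mathcal V$ of $\mathcal W$ (full normality), and choose $N,\gamma_0$ as above for $\mathcal V$. Then for $\gamma,\gamma'\ge\gamma_0$ and $x\in N\cap X$ we have $f_\gamma(x),f(x)\in V_1$ and $f_{\gamma'}(x),f(x)\in V_2$ for some $V_1,V_2\in\mathcal V$, hence $f_\gamma(x),f_{\gamma'}(x)\in\mathrm{St}(f(x),\mathcal V)$, which lies in a member of $\mathcal W$; in particular $(f_\gamma(x),f_{\gamma'}(x))\in E_{\mathcal W}$. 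Letting $x\to a$ within $N\cap X$ (possible since $a$ is an accumulation point of $X$) we get $(f_\gamma(x),f_{\gamma'}(x))\to(A_\gamma,A_{\gamma'})$ in $Y\times Y$, so $(A_\gamma,A_{\gamma'})\in\overline{E_{\mathcal W}}$. As $\mathcal W$ was arbitrary, $\{A_\gamma\}$ is a Cauchy net for $\mathcal U$ and therefore converges; its limit $A$ is unique since $Y$ is Hausdorff.

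\textbf{Step 2 (identifying the iterated limits).} To prove $\lim_{x\to a}f(x)=A$, fix an open $W\ni A$, pass to a sufficiently fine open cover (a star-refinement of a star-refinement suffices), and chain the three comparisons $f(x)\leftrightarrow f_\gamma(x)\leftrightarrow A_\gamma\leftrightarrow A$: using Step 1 and the uniform closeness near $a$, pick $\gamma\ge\gamma_0$ so that $f_\gamma$ is cover-close to $f$ on a deleted neighbourhood of $a$ and $A_\gamma$ is cover-close to $A$, then shrink the neighbourhood so that $f_\gamma(x)$ is cover-close to $A_\gamma$ there; the star-refinement collapses the chain to $f(x)\in W$. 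Hence $\lim_{x\to a}f(x)=A$. Combining this with the pointwise convergence $\lim_\gamma f_\gamma(x)=f(x)$ gives $\lim_{x\to a}\lim_\gamma f_\gamma(x)=\lim_{x\to a}f(x)=A=\lim_\gamma A_\gamma=\lim_\gamma\lim_{x\to a}f_\gamma(x)$, which is the asserted interchange.

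\textbf{Main obstacle.} The three-cover bookkeeping of Step 2 is routine; the real difficulties are in Step 1. First, one must move from ``$f_\gamma(x)$ and $f_{\gamma'}(x)$ are uniformly cover-close near $a$'' to a statement about the \emph{limits} $A_\gamma,A_{\gamma'}$, even though the member of the cover containing the two values may change with $x$; this is exactly what the passage to the closure $\overline{E_{\mathcal W}}$ before taking $x\to a$ is designed to handle. Second, and decisively, the resulting Cauchy net $\{A_\gamma\}$ must actually converge \emph{inside} $Y$: this is where full normality is genuinely needed, through paracompactness and the Dieudonn\'e completeness of $Y$ in its fine uniformity, with $T_1$ (hence Hausdorffness) guaranteeing uniqueness of that limit.
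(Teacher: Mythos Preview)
The paper does not actually prove this lemma: immediately before stating it, the authors write ``Their proofs can be found in \cite{Stehlik2014}'' and no argument for the interchange-of-limits lemma appears anywhere in the text or the Appendix (the item labelled ``Proof of Lemma~2'' in the Appendix is in fact the proof of Theorem~\ref{theo1}, verifying conditions a), b), c)). So there is no in-paper proof to compare against.

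On its own merits, your proposal is the natural and essentially correct argument. You recast the problem in the fine uniformity of $Y$ (available because fully normal $+$ $T_1$ $\Rightarrow$ paracompact Hausdorff by Stone's theorem), run the classical Moore--Osgood scheme with open covers and star-refinements in place of $\varepsilon/3$ balls, and close the argument by invoking Dieudonn\'e completeness of paracompact Hausdorff spaces so that the Cauchy net $\{A_\gamma\}$ actually converges in $Y$. Your diagnosis of the two genuine obstacles --- pushing the cover-closeness of $f_\gamma(x),f_{\gamma'}(x)$ to the limits $A_\gamma,A_{\gamma'}$ via closures of the entourages $E_{\mathcal W}$, and then needing completeness rather than mere Hausdorffness to land the limit inside $Y$ --- is exactly right, and explains precisely where ``fully normal'' and ``$T_1$'' are spent.

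Two small remarks. First, the Kupka--Toma notion of strong convergence is global (for every open cover $\mathcal V$ there is $\gamma_0$ with $f_\gamma(x),f(x)$ $\mathcal V$-close for \emph{all} $x\in X$ once $\gamma\ge\gamma_0$); your localised version on a deleted neighbourhood of $a$ is weaker and follows immediately, so this is harmless. Second, you rely on strong convergence implying pointwise convergence; this holds because $Y$, being paracompact Hausdorff, is regular --- a fact the paper itself uses and attributes to \cite{KupkaToma}.
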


Now we are ready to formulate the general theorem for preserving abc class.

\begin{theorem}
Let $\{X_\gamma\},\gamma\in \Gamma $ be a net of isotropic random fields with covariance kernels $K_\gamma$ mapping to the regular and fully normal topological space satisfying conditions abc. Let $K_\gamma$ be strongly convergent to $K.$ Then $K$ also satisfies abc.
\end{theorem}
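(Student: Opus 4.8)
The plan is to reduce the statement to the continuous case (Theorem~\ref{theo1}) together with the two technical Lemmas, using the basic fact that strong convergence of a net implies its pointwise convergence; hence $K(d)=\lim_{\gamma\in\Gamma}K_\gamma(d)$ for every $d\in R^+$. Conditions a), positive definiteness, and the monotonicity/convexity part of b) are then obtained exactly as in the proof of Theorem~\ref{th2}, since those arguments use nothing beyond pointwise convergence; the genuinely topological content — semicontinuity and condition c) — comes from Lemma~1 and Lemma~2 respectively.

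First I would dispatch the ``algebraic'' conditions. As $[0,+\infty)$ is closed, $K(d)=\lim_\gamma K_\gamma(d)\ge 0$, and $K(0)=\lim_\gamma 1=1$, which gives a). Positive definiteness passes to the limit: for any finite collection of locations $x_1,\dots,x_k$ and complex coefficients $c_1,\dots,c_k$,
\[
\sum_{i,j} c_i\bar c_j\,K(x_i-x_j)=\lim_{\gamma\in\Gamma}\sum_{i,j} c_i\bar c_j\,K_\gamma(x_i-x_j)\ge 0,
\]
so $K$ is positive definite. For the decreasing property and almost everywhere convexity in b) one repeats the argument from the proof of Theorem~\ref{th2}, which only invokes pointwise convergence; the sole extra point is that, for a net, the exceptional (null) sets must be controlled through the at-most-countable discontinuity set of the decreasing limit $K$ rather than by a union over the possibly uncountable index set $\Gamma$.

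Next, semicontinuity of $K$. Each $K_\gamma$ is semicontinuous by condition b) applied to $X_\gamma$, the target space is regular by hypothesis, and $K_\gamma\stackrel{s}{\to}K$; hence Lemma~1 yields that $K$ is semicontinuous at every point of $R^+$. This is exactly where regularity of the target space is indispensable (cf.\ Example~2 in \cite{KupkaToma}), so it cannot be weakened here.

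Finally, condition c). I would realize ``$d\to+\infty$'' as a topological limit $x\to a$ by working in the ambient space $Z=[0,+\infty]$ with the order topology, in which $a=+\infty$ is an accumulation point of $X=(0,+\infty)$; the target space $Y$ is fully normal and $T_1$ by hypothesis, and $\lim_{d\to+\infty}K_\gamma(d)=0=:A_\gamma$ exists for each $\gamma$ by condition c) for $K_\gamma$. Lemma~2 then gives
\[
\lim_{d\to+\infty}K(d)=\lim_{d\to+\infty}\lim_{\gamma\in\Gamma}K_\gamma(d)=\lim_{\gamma\in\Gamma}\lim_{d\to+\infty}K_\gamma(d)=\lim_{\gamma\in\Gamma}A_\gamma=0,
\]
which is c). I expect the main obstacle to be bookkeeping of the topological hypotheses — simultaneously securing ``regular'' (needed for Lemma~1) and ``fully normal $T_1$'' (needed for Lemma~2), and verifying that the ``$d\to+\infty$'' limit genuinely fits the hypotheses of Lemma~2 in the chosen ambient space $Z$ — together with the null-set subtlety in the almost-everywhere convexity step when $\Gamma$ is an arbitrary directed set rather than a sequence.
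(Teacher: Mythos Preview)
Your approach is essentially the same as the paper's: condition a) via pointwise convergence (which strong convergence implies for a regular target), semicontinuity in b) via Lemma~1, and condition c) via the interchange-of-limits Lemma~2 using full normality. If anything your treatment is more complete, since you spell out positive definiteness, the monotonicity/convexity part of b), and the realization of $d\to+\infty$ as a limit in $Z=[0,+\infty]$, all of which the paper's three-line proof leaves implicit.
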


\section{Optimal Design for regression with correlated errors from abc class}

The determination of optimal designs for models with correlated errors is substantially more difficult and for this reason not so
well developed. Here we concentrate on Ornstein-Uhlenbeck processes. For the influential papers there is a pioneering
work of \cite{Hoel}, who considered the weighted least square estimate, but considered mainly equidistant designs.
Optimal design for estimation of parameters of OU processes has been studied in \cite{KS}, optimal designs for prediction of OU sheets in \cite{BaranSS}.
We can find applications of various criteria of design optimality for second-order spatial models in the literature. Since in our setup the information matrix is scalar, maximizing of $M_\theta$ leads to the optimal design in the sense of D, E, A or G optimality. Theoretical justifications for using the Fisher information for $D$-optimal designing under correlation can be found in \cite{Abt, Pazman07,Zhu}.
In our setup, Fisher information matrix for trend parameter $\theta$ is defined as
\begin{equation}\label{Mtheta}
M_\theta(n)=1^TC^{-1}\left( r \right) 1,
\end{equation}
where $n$ denotes the number of design points and ${\cal D}$ is the vector of distances.
According to the results of \cite{Pazman07}  the Fisher information matrix on $r$ has the form
\begin{equation}\label{Mr}
M_{r}(n):=\frac 12 tr \left\{C^{-1}(n,r)\frac{\partial C(n,r)}{\partial
    r}C^{-1}(n,r)\frac{\partial C(n,r)}{\partial r} \right\}.
\end{equation}

Notice, that Ornstein-Uhlenbeck process is a special case of class abc for $\psi_r(d)={rd}.$ For  Ornstein Uhlenbeck processes all distance gradients of $M_\theta$ increase with same speed, since FIM has the form $M_\theta(n,{\cal D})=1+\sum_{i=1}^{n-1}\frac{\exp(r d_i)-1}{\exp(r d_i)+1}$ (see \cite{KS} or \cite{Zagoraiou}). Thus the optimal design is equidistant at any fixed compact design space.

The FIM for both parameters is $M_rM_\theta$ and collapses for the OU process for two point designs, nevertheless it gives a  design with a finite distance for $n>2.$ This is proven in the following:

\begin{lemma}
The distance of neighboring points  in the optimal design for estimation of parameters $(r,\theta)$ is collapsing for $n=2$ and it is equidistant for $n>2.$
\end{lemma}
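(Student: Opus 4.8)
The plan is to analyze the combined Fisher information $M_r \cdot M_\theta$ for the OU process as a function of the vector of neighboring distances $\mathcal{D} = (d_1,\dots,d_{n-1})$ and to show its critical points. First I would write both factors explicitly. From the excerpt, $M_\theta(n,\mathcal{D}) = 1 + \sum_{i=1}^{n-1} \frac{\exp(r d_i) - 1}{\exp(r d_i)+1}$, which is increasing in each $d_i$ and bounded above by $n$. For $M_r$, using the tridiagonal structure of $C^{-1}$ for the OU (Markov) process, the trace formula in (\ref{Mr}) factorizes over consecutive pairs, giving $M_r(n,\mathcal{D}) = \frac{1}{2}\sum_{i=1}^{n-1} g(d_i)$ for an explicit function $g$ with $g(d) = \frac{d^2 e^{-2rd}}{(1-e^{-2rd})^2}\cdot(\text{const})$ up to normalization — the key qualitative feature being $g(d) \to 0$ both as $d\to 0^+$ (the collapse) and as $d\to+\infty$, with a single interior maximum. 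I would verify this factorization by recalling that for a one-dimensional stationary Markov process the likelihood splits into one-step transition densities, so the information is additive in the gaps.

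Next I would treat the case $n=2$ separately. Here $M_\theta = 1 + \frac{e^{rd_1}-1}{e^{rd_1}+1}$ and $M_r = \frac{1}{2} g(d_1)$, so the product $M_r M_\theta$ is a single-variable function of $d_1$ that vanishes as $d_1\to 0^+$ (since $g(d_1)\to 0$) and also tends to $0$ as $d_1\to\infty$ (since $g(d_1)\to 0$ while $M_\theta$ stays bounded by $2$). One then checks that the supremum is not attained at any finite positive $d_1$ in the relevant sense described in the paper — i.e. the design "collapses" — by showing the relevant scaling: on a fixed compact design space $[a,b]$ with $n=2$ points, pushing the two points apart to the endpoints does not maximize the product because the $M_r$ factor is then sub-optimal, and conversely the formal optimum over all positive distances is degenerate. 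I would make precise in which sense "collapsing" is meant (the argument in \cite{KS}, \cite{Zagoraiou}) and reproduce that computation.

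For $n > 2$ I would argue via symmetry and concavity/Schur-type reasoning. Both $M_\theta$ and $M_r$ are symmetric functions that are sums of identical one-variable functions of the $d_i$, subject to the constraint $\sum_{i=1}^{n-1} d_i = b-a$ on a fixed compact space. Writing $F(\mathcal{D}) = M_r(\mathcal{D}) M_\theta(\mathcal{D})$ and using Lagrange multipliers, a critical point in the interior must satisfy $\frac{\partial F}{\partial d_i} = \frac{\partial F}{\partial d_j}$ for all $i,j$; because $F$ is a product of two sums of the same shape, the stationarity condition reduces to an equation $h(d_i) = h(d_j)$ for a function $h$ which, for $n > 2$, has enough monotonicity to force $d_i = d_j$, yielding the equidistant design with finite common distance $(b-a)/(n-1)$. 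The additional summands (there are at least two terms, $n-1 \ge 2$) are what prevent the degenerate collapse seen at $n=2$: with two or more gaps, shrinking any one gap to zero forces the others to grow, strictly decreasing the $M_r$ factor, so the optimum is interior.

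The main obstacle I anticipate is establishing the unimodality of $g$ (equivalently, the single-peaked shape of the per-gap contribution to $M_r$) and then showing that the stationarity equations for the product $F$ genuinely collapse to $d_i = d_j$ rather than admitting spurious asymmetric solutions — this requires a careful second-derivative or log-concavity argument on $g$ and $M_\theta$'s summand, and it is here that the hypothesis $n > 2$ must be used decisively. I would lean on the explicit OU formulas and the calculations already carried out in \cite{KS} and \cite{Zagoraiou}, citing them for the one-gap analysis and supplying the multi-gap symmetrization argument, with the detailed verification deferred to the Appendix as the paper's structure indicates.
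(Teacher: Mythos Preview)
Your analysis of the $n=2$ case rests on a factual error. You assert that $g(d)\to 0$ as $d\to 0^+$, but the per-gap information for $r$, written in the paper as $g(d)=\dfrac{d^{2}(e^{2rd}+1)}{(e^{2rd}-1)^{2}}$, satisfies $g(d)\to \dfrac{1}{2r^{2}}>0$ as $d\to 0^+$ (just expand $e^{2rd}-1\approx 2rd$; even your own expression $\dfrac{d^{2}e^{-2rd}}{(1-e^{-2rd})^{2}}$ has this same positive limit). Hence $M_rM_\theta$ does \emph{not} vanish at $d_1=0$; it tends to $1/(2r^{2})$ there and decays to $0$ only as $d_1\to\infty$. ``Collapsing'' means exactly that this boundary value is the supremum: $d\mapsto M_r(d)M_\theta(d)$ is decreasing, so the optimal two-point design degenerates to $x_1=x_2$. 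The paper handles this simply by citing \cite{KS}; your picture of a product that vanishes at both ends with an interior peak is the opposite of what occurs.

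The same mistake undermines your $n>2$ plan. Because $g$ is monotone decreasing on $(0,\infty)$ rather than unimodal, your claim that ``shrinking any one gap to zero \dots\ strictly decreas[es] the $M_r$ factor'' is backwards: shrinking a gap \emph{raises} its contribution to $M_r$. So the mechanism preventing collapse for $n>2$ cannot be the one you describe, and the ``unimodality'' obstacle you single out as the main difficulty is a non-issue --- there is no interior maximum of $g$ to locate. The paper's own argument takes a different and much shorter route: it asserts that $\partial(M_rM_\theta)/\partial d_i<0$ in each coordinate and infers that on the simplex $\sum d_i=1$ the maximum is attained at the symmetric point $d_1=\cdots=d_{n-1}$. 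If you want to salvage a Lagrange/symmetry argument, it has to be built on the correct monotonicity of $g$, and the interplay with the increasing summand of $M_\theta$ must be analyzed from that starting point.
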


\subsection{Example:  nugget effect and  the efficiency}

This section considers how the nominal level of Fisher information of the design without nugget would
be affected by a nugget.
We consider class of abc covariances of the form:
\begin{equation}\label{cov1Jump}
Cov(x_s,x_t)=\left\{\begin{matrix}1&\ldots s=t\\
{c}\;exp^{-r\left (|  t-s\right |)} &\ldots s\neq t
\end{matrix}, \right.
\end{equation}
where $0<c\leq 1$ regulates nugget. For $c=1$ we receive a standard OU covariance without nugget, however, $c<1$
introduce nugget $\tau^2=1-c.$
Let $M_{\theta,c}, M_{r,c}$ denote the Fisher information for trend $\theta$ and covariance parameter $r$, respectively. We define  \emph{effectiveness}   in the following form
\begin{equation}\label{effectiveness}
M_{\theta,c}/M_{\theta,1}, M_{r,c}/M_{r,1}.
\end{equation}
We call (\ref{effectiveness}) effectiveness, since it is similar to efficiency, where
the ratio of Fisher information for design and optimal design is computed.
Figure \ref{Effectiveness} illustrates the behavior of \emph{effectiveness} (\ref{effectiveness})
at the design space  $X = [0,1]$. Here we denote distance between design points $d=y-x$ and fix the parameters $r = 1$ and $\theta = 1$.
It is clear that decrease of covariance have decreasing effect on \emph{effectiveness}.
\begin{figure}[!ht]
				\subfigure[Effectiveness for $\theta$ of linear model (\ref{eq:lin}), comparison with one jump]{\includegraphics[width = 0.4\textwidth]{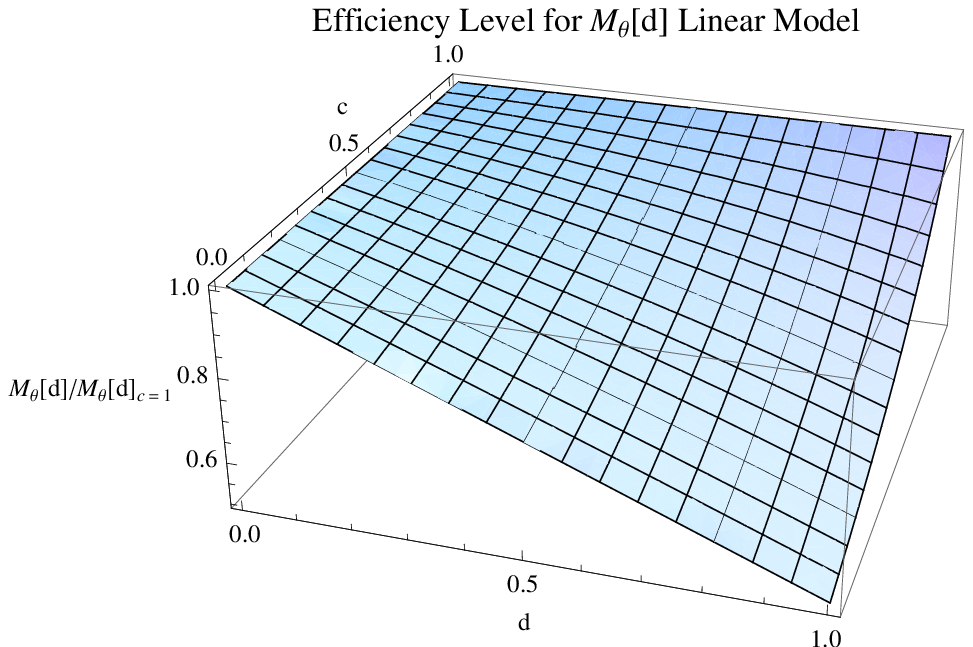}}\hfill
		\subfigure[Effectiveness for $r$ of linear model (\ref{eq:lin}), comparison with one jump]{\includegraphics[width = 0.4\textwidth]{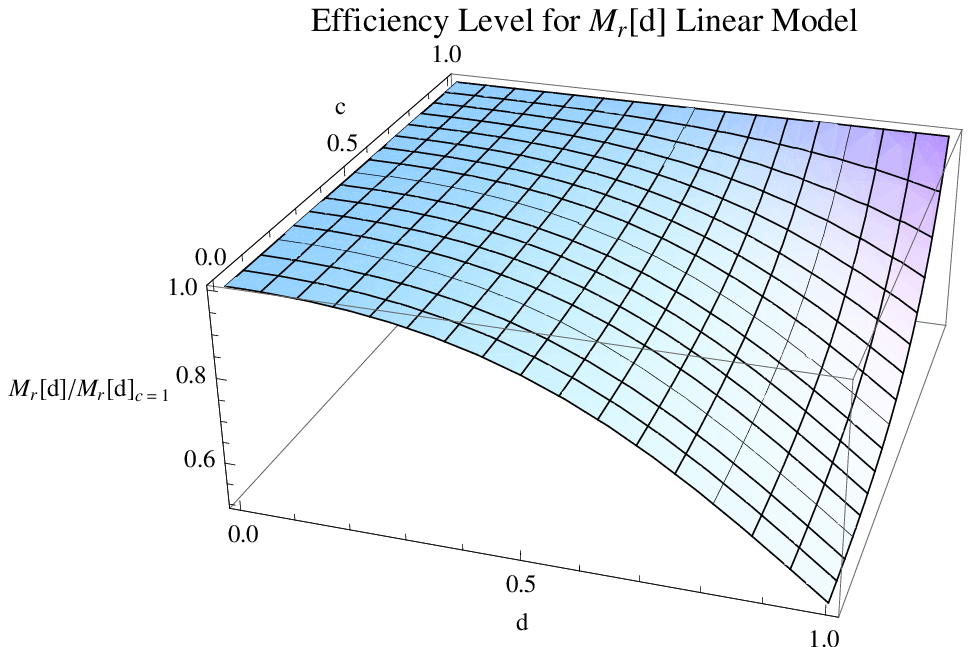}}
				\subfigure[Effectiveness for $\theta$ of spherical model (\ref{eq:sph}), comparison with one jump]{\includegraphics[width = 0.4\textwidth]{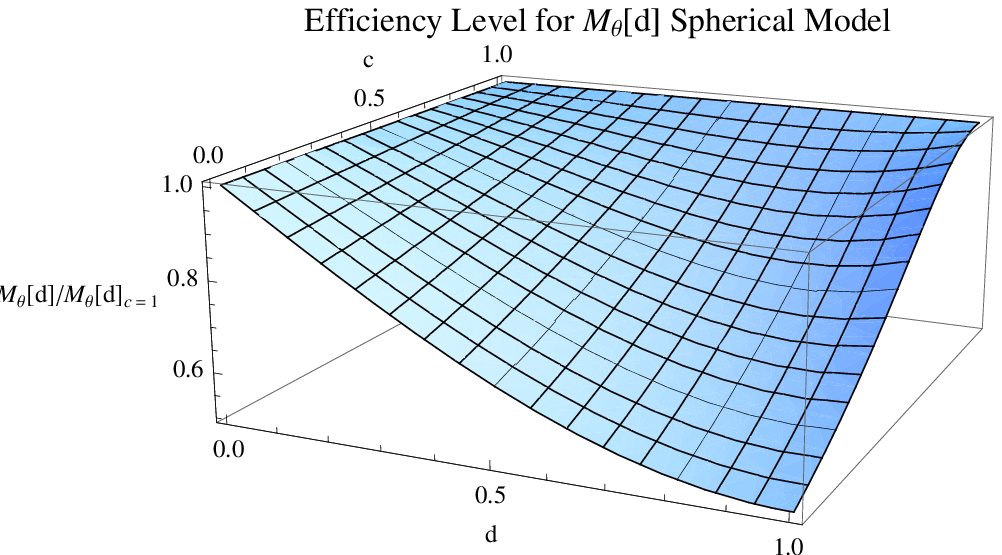}}\hfill
		\subfigure[Effectiveness for $r$ of spherical model (\ref{eq:sph}), comparison with one jump]{\includegraphics[width = 0.4\textwidth]{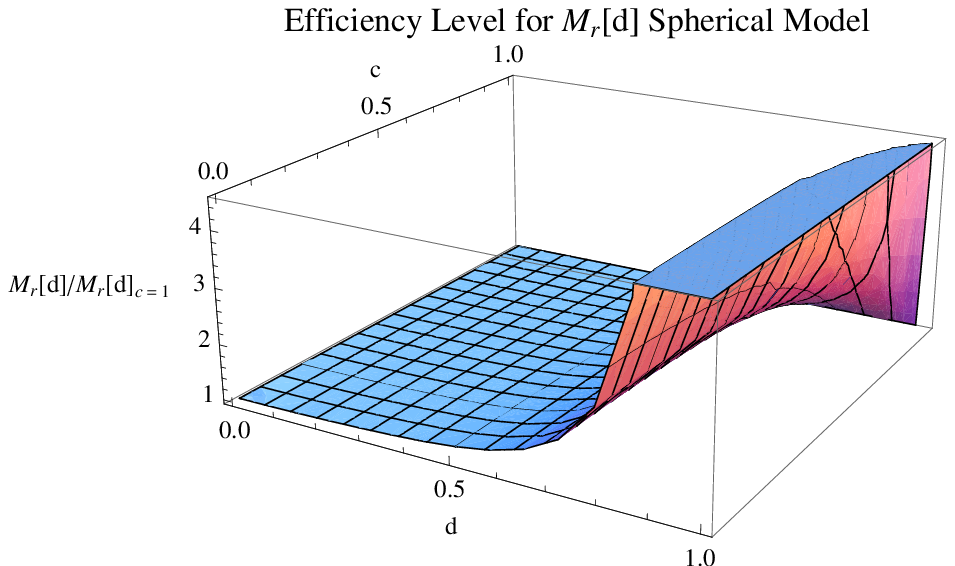}}
				\subfigure[Effectiveness for $\theta$ of exponential model (\ref{eq:exp}), comparison with one jump]{\includegraphics[width = 0.4\textwidth]{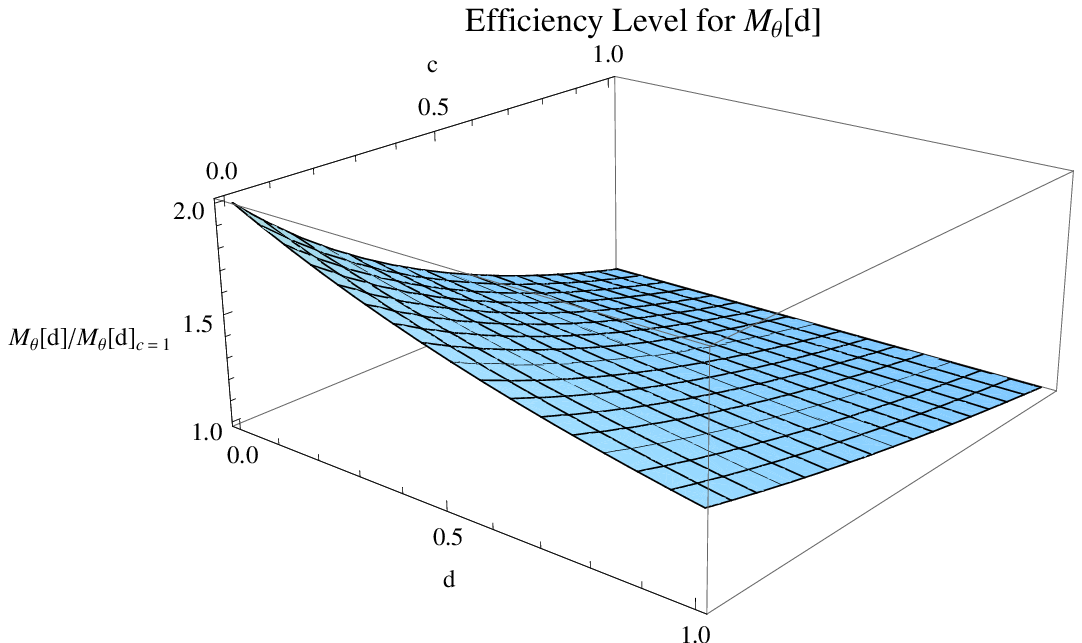}}\hfill
		\subfigure[Effectiveness for $r$ of exponential model (\ref{eq:exp}), comparison with one jump]{\includegraphics[width = 0.4\textwidth]{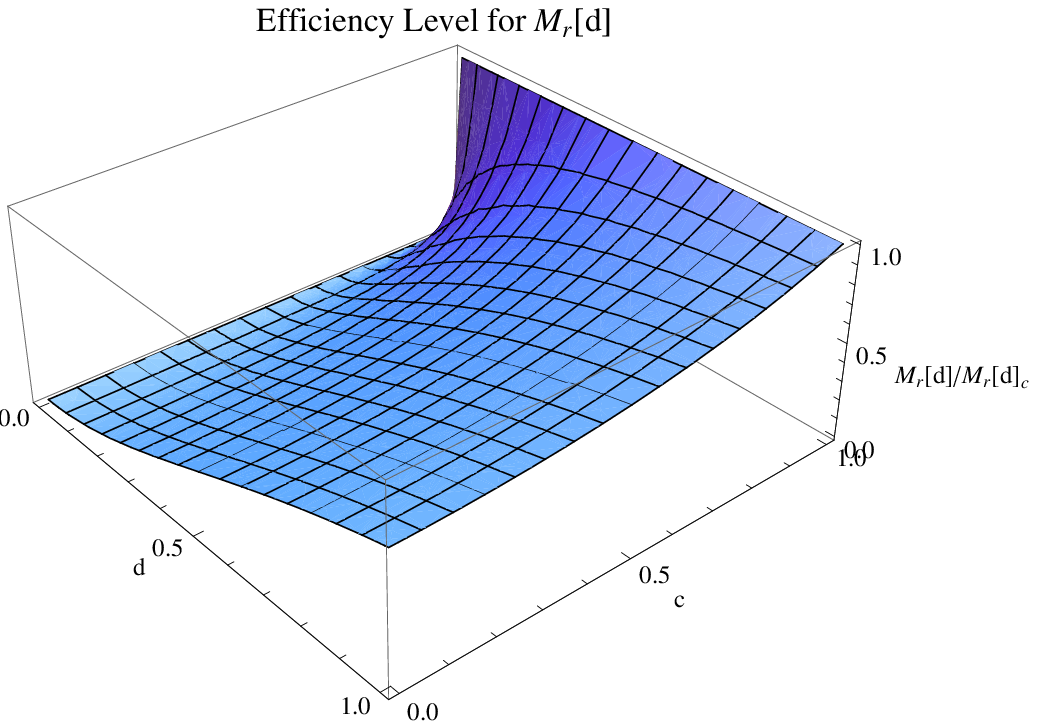}}

	\caption{Effectiveness for linear model shown in (a), (b); the spherical model in (c), (d) and the exponential model in (e), (f). }
	\label{Effectiveness}
\end{figure}

\subsection{Behavior of $M_\theta(n)$}

Hereafter, we introduce the lower and upper bounds for $M_\theta$ and study their properties, together with properties of $M_\theta.$ Let us consider a lower and upper bound for $M_\theta(n, {\cal D})$ of the forms
$$LB(n, {\cal D}):=n\inf_{x}\frac{x^TC^{-1}x}{x^Tx}.$$
$$UB(n, {\cal D}):=n\sup_{x}\frac{x^TC^{-1}x}{x^Tx}.$$
It is easy to see that $LB(n, {\cal D})\leq M_\theta(n, {\cal D})\leq UB(n, {\cal D}).$ The following theorem holds.

\begin{theorem}\label{theorem1}
 \textbf{i)} Let $C_r(d)$ be a covariance structure satisfying abc. Then for any design $\{x,x+d_1,x+d_1+d_2,...,x+d_1+...+d_{n-1}\}$ and for any subset of distances $d_{j},j=1,...,n-1.$

a)  the lower bound function $(d_{i_1},...,d_{i_m})\to LB(n, {\cal D})$ is nondecreasing in ${\cal D}.$ The upper bound function $(d_{i_1},...,d_{i_m})\to UB(n, {\cal D})$ is nondecreasing in ${\cal D}.$  Fisher information $(d_{i_1},...,d_{i_m})\to M_\theta(n, {\cal D})$ is nondecreasing in ${\cal D}.$

b) Especially, for any equidistant design ($\forall i:d_i=d$) functions $d\to LB(n, {\cal D}),$ $d\to UB(n, {\cal D})$ and $d\to M_\theta(n, {\cal D})$ are nondecreasing.

\textbf{ii)} Denote by $a(n,n-1)$ the ratio $M_\theta(n, {\cal D})/M_\theta(n-1, {\cal D}).$ Then $\lim_{\forall i:d_i\to +\infty} a(n,n-1)=\frac{n}{n-1}.$

\textbf{iii)} equidistant design is optimal for $\theta$ in abc on every compact design space $X,$ more precisely for $X=[0,1]$ any point $(d_1,...,d_{n-1})$ of a set $\otimes_{i=1}^{n-1}\psi_r^{-1}(L/(n-1))$ such that $d_i\geq 0, \sum d_i\leq 1$ is a set of optimal inter-distances

\textbf{iv)} for a stationary OU there does not exist an admissible design for parameter $r$ (i.e.~optimal design for $r$ is collapsing). However, in abc class {not necessarily} e.g.~nugget effect can bring a regularization and thus admissible designs may exist.

\end{theorem}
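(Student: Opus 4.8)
\emph{Proof plan.} I would treat $LB$, $UB$ and $M_\theta$ as functionals of the correlation matrix $C=C(\mathcal D)$, whose $(i,k)$ entry is $C_r$ evaluated at the sum of the inter-distances separating $x_i$ and $x_k$, and exploit two elementary effects. First, enlarging a single distance $d_j$ acts, by abc(b), only on the entries whose index pair straddles the $j$-th gap (the pairs $i\le j<k$) and decreases them, leaving the remaining entries and the unit diagonal fixed; so the new matrix $C'$ satisfies $0\le C'\le C$ entrywise, both being correlation matrices with nonnegative entries by abc(a). Second, by abc(c) each off-diagonal entry tends to $0$ as the corresponding distance grows, hence $C(n,\mathcal D)\to I_n$ when all $d_i\to+\infty$. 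I would also use the Representation Theorem to write $C_r(d)=\sigma^2e^{-\psi_r(d)}$ with $\psi_r$ nondecreasing, so that the coordinates $u_i=\psi_r(d_i)$ are the ``$\psi_r$-scaled distances'' in which the Ornstein--Uhlenbeck formulae are available.

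\emph{Parts i) and ii).} Write $LB(n,\mathcal D)=n/\lambda_{\max}(C)$, $UB(n,\mathcal D)=n/\lambda_{\min}(C)$ and $M_\theta(n,\mathcal D)=\mathbf 1^TC^{-1}\mathbf 1=\bigl(\min_{\mathbf 1^Ty=1}y^TCy\bigr)^{-1}$. For $LB$ I would invoke Perron--Frobenius monotonicity of the spectral radius on nonnegative matrices: $0\le C'\le C$ gives $\lambda_{\max}(C')=\rho(C')\le\rho(C)=\lambda_{\max}(C)$, so $LB$ is nondecreasing; $UB$ is governed in the same way by the behaviour of $\lambda_{\min}(C)$ under entrywise domination. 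For $M_\theta$ the key ingredient is the Lemma $v:=C^{-1}\mathbf 1\ge 0$ componentwise, which I would derive from the one-dimensional oscillation/Green-matrix structure of abc kernels --- the representation $C_r=\sigma^2 e^{-\psi_r}$ together with the convexity clause of abc(b) forcing $C^{-1}$ to be an $M$-matrix with nonnegative row sums. Granting it, with $v^\ast=v/(\mathbf 1^T v)\ge 0$ one gets $\min_{\mathbf 1^Ty=1}y^TC'y\le(v^\ast)^TC'v^\ast\le(v^\ast)^TCv^\ast=\min_{\mathbf 1^Ty=1}y^TCy$ (the middle step because $C'\le C$ and $v^\ast\ge 0$), hence $M_\theta$ is nondecreasing in $d_j$; equivalently $\partial M_\theta/\partial d_j=2\sum_{i\le j<k}v_i v_k\,\lvert C_r'(\mathrm{dist}_{ik})\rvert\ge 0$ at points of differentiability, the jump points of $C_r$ being absorbed by a limiting argument from the continuous (P\'olya) case. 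Statement b) is the specialisation $d_1=\dots=d_{n-1}=d$. Part ii) is then immediate: abc(c) gives $C(n,\mathcal D)\to I_n$ and $C(n-1,\mathcal D)\to I_{n-1}$, and continuity of matrix inversion yields $M_\theta(n,\mathcal D)\to n$, $M_\theta(n-1,\mathcal D)\to n-1$, so $a(n,n-1)\to n/(n-1)$.

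\emph{Parts iii) and iv).} For iii), with $X=[0,1]$ and $x_1=0$ a design is $(d_1,\dots,d_{n-1})$ with $d_i\ge 0$, $\sum d_i\le 1$. By i)a) $M_\theta$ is nondecreasing in each $d_i$, so a maximiser lies on the face $\sum d_i=1$; on that face I would pass to $u_i=\psi_r(d_i)$ and show that $(u_1,\dots,u_{n-1})\mapsto M_\theta$ is Schur-concave (for the OU kernel this is Jensen applied to the concave $t\mapsto\tanh(t/2)$ in $M_\theta=1+\sum_i\tanh(\psi_r(d_i)/2)$; for general abc it follows from the representation and the convexity in abc(b)), so the maximum over the simplex is attained exactly when all $\psi_r(d_i)$ equal $L/(n-1)$, $L$ being the admissible $\psi_r$-budget. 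Pulling this back through $\psi_r$, which being merely nondecreasing may be constant on subintervals, gives the whole product set $\otimes_{i=1}^{n-1}\psi_r^{-1}(L/(n-1))$ intersected with $\{d_i\ge 0,\ \sum d_i\le 1\}$ as the optimal inter-distances, collapsing to the single equidistant design when $\psi_r$ is strictly increasing (e.g.\ OU). For iv), for the stationary OU one computes $M_r=d^2\rho^2(1+\rho^2)/(1-\rho^2)^2$ with $\rho=e^{-rd}$ for $n=2$ (and the known OU expressions for general $n$) and checks that $\sup_{d>0}M_r=\lim_{d\to 0^+}M_r$ is not attained, since $C$ degenerates to the singular matrix $\mathbf 1\mathbf 1^T$ as $d\downarrow 0$; thus the Fisher information for $r$ admits no maximiser among genuine designs and no admissible optimal design for $r$ exists. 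With a nugget $\tau^2=1-c>0$, i.e.\ covariance (\ref{cov1Jump}), one has $C=(1-c)I+c\,K_{\mathrm{OU}}\succeq(1-c)I$ uniformly over the compact design space, so $\lVert C^{-1}\rVert$ stays bounded, $M_{r,c}$ is continuous on that compact set and attains its maximum --- an admissible optimal design for $r$ can then exist, the nugget acting as a regulariser.

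\emph{Main obstacle.} The delicate point is carrying the transparent OU computations over to the whole abc class once $C$ is no longer a Markov (tridiagonal-inverse) matrix: both the sign property $C^{-1}\mathbf 1\ge 0$ underlying the monotonicity of $M_\theta$ in i) and the Schur-concavity underlying iii) must be extracted from the structural clauses of abc --- the Representation Theorem plus the convexity of $C_r$, via total positivity of the kernel --- and the mere semicontinuity of $C_r$, with its at most countably many jump points, has to be handled throughout by approximating from the continuous P\'olya case and passing to the limit using the convergence results for the abc class.
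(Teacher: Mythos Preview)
Your plan for parts ii), iii) and iv) is essentially the paper's own. For ii) the paper likewise uses $C(n,\mathcal D)\to I_n$ in the Banach space of $n\times n$ matrices together with smoothness of inversion to get $M_\theta(n)\to n$. For iii) the paper performs exactly your change of variables $\xi_i=\psi_r(d_i)$ to reduce to the Ornstein--Uhlenbeck case, cites the known equidistant optimum for OU, and pulls back through the nondecreasing $\psi_r$; your Schur-concavity/$\tanh$ argument is extra detail not present in the paper. For iv) the paper just cites the collapsing result for OU and points to the possible jump of $\psi_r$ at $0$ as the regularising mechanism; your explicit boundedness argument $C\succeq(1-c)I$ is more than the paper provides but in the same spirit.

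For part i) the treatment of $LB$ via Perron--Frobenius monotonicity of the spectral radius under entrywise ordering is exactly the paper's argument. Your route to the monotonicity of $M_\theta$, however, is genuinely different. The paper does \emph{not} use the variational identity $M_\theta=(\min_{\mathbf 1^Ty=1}y^TCy)^{-1}$ nor any sign lemma for $C^{-1}\mathbf 1$; instead it argues through the spectral radius of $C^{-1}$, Lemma~5.6.10 of Horn--Johnson to pick an auxiliary matrix norm $\|\cdot\|_\star$ close to $\rho$, and then identifies $M_\theta(d)$ with the entrywise $\ell_1$-norm $\|C^{-1}(d)\|_1=\sum_{i,j}|(C^{-1})_{ij}|$, comparing the two matrices through norm equivalence. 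Your approach isolates a sharper structural hypothesis ($C^{-1}\mathbf 1\ge 0$), while the paper's identification $M_\theta=\|C^{-1}\|_1$ tacitly needs the stronger condition that all entries of $C^{-1}$ be nonnegative; so the ``main obstacle'' you honestly flag is not avoided by the paper's argument but replaced by an at least as demanding implicit sign assumption. In particular, the oscillation/$M$-matrix/total-positivity machinery and the approximation-from-the-continuous-P\'olya-case that you propose are not used anywhere in the paper; if you want to match the paper you can drop them and instead mimic the spectral-radius-plus-norm argument. Finally, neither the paper nor your sketch actually proves the $UB$ claim: the paper writes ``similarly'' and you write ``governed in the same way'', but Perron--Frobenius controls $\lambda_{\max}(C)$, not $\lambda_{\min}(C)$, so that step would require a separate argument in either version.
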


\begin{remark} Notice that i) of Theorem {\ref{theorem1}}  shows that the interval over which observations are to be made should be extended as far as possible. This is supporting the idea of increasing domain asymptotics. Also notice, that both results i, ii) of Theorem \ref{theorem1} generalize the findings of \cite{Hoel} and \cite{KS}.

Particularly, to illustrate result i,b) let us consider the equidistant n-point design for parameter $\theta$ of Ornstein Uhlenbeck process. The covariance matrix is Toeplitz with entries $c^{|i-j|},c=\exp(-rd).$ Then we know that $C^{-1}$ is tridiagonal and that $(1-c^2)C^{-1}$ has the entry $-c$ in every upper and sub-diagonal position and has main diagonal entries $1,1+c^2,...,1+c^2,1$ (see \cite{Horn}, Example 13, page 409). Thus
we get $M_\theta(d)=\frac{2-n+ne^{rd}}{1+e^{rd}}$ (see also Lemma 1 in \cite{KS}) and so $M_\theta(n,{\cal D})$ is an increasing function of distance $d.$

 The formula for Fisher information $M_r(n)$ on correlation parameter $r$ has been recently derived (see \cite{Zagoraiou} and \cite{MullerStehlik3}):
$M_r(n)=\sum_{i=1}^{n-1}\frac{d_i^2(e^{2rd_i}+1)}{(e^{2rd_i}-1)^2}.$ There exists no admissible design for $r.$

\end{remark}

\begin{example}
\textbf{Stationary Ornstein Uhlenbeck process with the nugget}

Let us have
$$Y(x_i)=f(x_i,\vartheta)+e(x_i),\ i=1,2,\ x_1,x_2\in X, C_r(d)=e^{-rd},d:=|x_1-x_2|$$
and only covariance parameter $r$ is the  parameter of interest. Then we know that the maximal Fisher information is obtained for $d=0$ (Collapsing effect, \cite{KS}).
To avoid such 'inconvenient' behavior we decrease the non-diagonal elements by multiplying with factor $\alpha,$ $0<\alpha <1.$ By this we include the nugget effect (micro-scale variation effect) of the form
\begin{eqnarray}
\gamma(d,r)= \left\{\begin{array} {ll} 0&\mbox {for
$d=0$,}\\1-\alpha+\alpha
(1-\exp(-rd))&\mbox{otherwise.}\end{array}\right.
\end{eqnarray}
If  $\gamma(d)\to 1-\alpha>0,$ as $d\to 0,$ then $1-\alpha$ has been called the \emph{nugget effect} by \cite{Matheron}. This is because it is believed that microscale variation is causing a discontinuity at the origin.
Then we obtain
$$M_{r}=\frac{\alpha^2d^2\exp(-2dr)(\alpha^2\exp(-2dr)+1)}{(1-\alpha^2\exp(-2dr))^2}.$$

\cite{Test} have proven that the distance $d$ of the optimal design is an increasing function of nugget effect $1-\alpha.$
The nugget effect makes $C_r$ discontinuous, $C_r(0)=1-\alpha$ and $C_r(d)=\alpha\exp(-rd)$ for $d>0.$ Thus $C_r$ is the member of abc
class for $\alpha<1/2.$ We see that in coherence with Theorem 5 iv) the optimal design is not collapsing.
This issue intrinsically relates to "twin-points" design (see \cite{Crary02} and \cite{Crary15}).

\end{example}

\begin{remark} \emph{Loewner optimality:} The most gratifying criterion refers to Loewner comparison of information matrices. It goes hand in hand with estimation problems, testing hypothesis and general parametric model building (see Sections 3.4-3.10 in \cite{Pukelsheim}). Notice that Theorem 5 says that under the regularity conditions abc the Loewner comparison of two information matrices amounts to comparing their distance vectors.
\end{remark}

\begin{example} Let us consider the two point optimal design (i.e.~design maximizing $M_\theta$). Then for the class of decreasing covariances (corresponding to the increasing variograms) we obtain the optimal design with the maximal inter-point distance. More formally, let $\{x,z\}$ be the two point design in compact design space $X\subseteq R^k$ and let us assume increasing semivariogram $\gamma.$
 Then $M_\theta=\frac2{2-\gamma(d)}$ and the design $\{x,z, ||x-z||=diam X\}$ is optimal. The information gained by the optimal design has  the form
$\frac2{2-\gamma(diam X)}.$ Here we remind that many semivariograms are increasing, e.g.~linear, spherical, exponential, Gaussian, rational quadratic among others. There exist also non monotonous semivariograms, e.g.~wave variogram.
\end{example}

\begin{example}To illustrate Theorem 5 \textbf{iii}, let us consider design space $X=[0,1],$ discontinuous covariance function $C(d)=\exp(-rd)$ for $d<1/2$ and $0$ otherwise and the two point design for the sake of simplicity. Then $M_\theta$ is increasing for $d<1/2$ and stands constant otherwise. Therefore both designs $\{0,1/2\}$ and $\{0,1\}$ are optimal (not only equidistant design is optimal).
\end{example}

\begin{example}  Let us digress to the simple linear regression $ Y(x)=\vartheta_1+\vartheta_2x+\epsilon(x),$
 with modified N\"{a}ther covariance structure, studied in \cite{Stehlik04}. There we consider a modification of Example 6.4 discussed in \cite{Nather85} having the design space $X=[-1,1]$ and a covariance function
\begin{eqnarray}
\mbox{C}(d)= \left\{\begin{array} {ll}
\sigma^2(1-\frac{d}{r}) &\mbox{for $d<r,$}\\
0&\mbox{otherwise.}\end{array}\right.\label{eq:no4}
\end{eqnarray}
Notice that for $r<2$ covariance function  $C(d)$ is not differentiable with respect to parameter $r,$ however, both one sided derivatives exist. The process with such a correlation can be thought of as a model for function required to have one sided first order derivatives (see \cite{SacksY66}). The classic Fisher information assumes the differentiability with respect to the parameter. Still, the Fisher information can be well defined over some open set.

Notice, that for $r>2,$ the modified N\"{a}ther covariance structure (\ref{eq:no4}) constitutes on $[-1,1]$ a linear semivariogram structure with $\gamma(d)=\frac{d}{r}.$ Let us assume now, that only the intercept is the parameter of interest and covariance  parameter $r>2$ is fixed. Then design $\{-1,1\}$ is uniformly optimal because only correlated observations are possible. We have $M_\theta=\frac{2r}{2r-(x_n-x_1)}$ for design $\{x_1,x_2,...,x_n\},\ -1\leq x_1<x_2...<x_{n-1}<x_n \leq 1$ and information gained by the optimal design has the form $\max M_\theta=r/(r-1).$ So $\max M_\theta$ decreases with the positive correlation. As far as $C(d)$ decreases with the distance, the optimal distance is maximal. The same concept occurs also in the case of $3$ observations, when both slope and intercept are estimated (see \cite{Stehlik04}). \cite{Nather85} has shown that since the covariance function can be represented as a linear function of responses, a uniformly optimal design is available for estimating $(\vartheta_1,\vartheta_2),$ which concentrates on the points $\{-1,0,1\}.$
\end{example}

It is clear that optimal designs for both parameters $\theta$ and $r$ in any given compact interval are in some sense trade-offs between collapsing and equidistant design, since the optimal strategy for the estimation of the trend parameter $\theta$ conflicts with the one for estimating the correlation parameter $r.$ This may led to compromises like geometric progressive designs (GPD, used e.g.~by \cite{Zagoraiou} for the case of OU process) or compound designs (see e.g.~\cite{MullerStehlik3}).

\section{Illustrations and Applications to Finance}
If one deals with stochastic models in finance, one will naturally encounter the Efficient Market Hypothesis (EMH) which forms the basis for many financial market models of modern portfolio management (\cite{Fama}). The core statement of the EMH is: nobody who is using the available information can achieve permanently above-average returns when dealing on financial markets. At the same time it is supposed that the actors of financial markets act absolutely rationally. The EMH distinguishes between three versions as the weak-form, semi-strong-form and strong form EMH, respectively. For more details see \cite{Fama1970} among others.
%

In all three forms of EMH, actors on financial markets have the same information available, so one cannot achieve  above-average returns permanently (see \cite{Fama1970}). If the financial markets act efficiently as the EMH claims, the financial crisis, which began in 2007 and lasts till today, cannot be explained easily. Can the forming and bursting of financial bubbles be explained by the use of insider information or by irrational decisions of the actors, or should even the EMH be questioned?

\cite{Lim2010} carried out an investigation to the weak-form EMH of stock markets. Aggregated stock exchange indices of 50 states formed the database. From 1.1.1995 until 31.12.2005, the pitches were collected on a daily basis. The examined 11 years led therefore to 2,870 values per country. Afterwards these 2,870 values were divided into rolling windows. These rolling windows represent a part of the time row, including 200 sample points. They are continuously shifted by one day. The 2,670 samples per country were checked by a hypothesis test if they correspond to a random-walk. To evaluate the efficiency of a stock exchange index, the proportion of samples which did not correspond to a random-walk by a confidence level of 5 \%, was taken. The lower the proportion of a country is, the more its stock exchange index corresponds to an efficient market. From the results of \cite{Lim2010} one can suppose that efficiency as stated in the EMH is not given on stock markets - at least on a temporarily prospect.  Here we develop an alternative random walk model with semicontinuous covariance. Such model is based on an Ornstein-Uhlenbeck process whose covariance structure will be modified.

\subsection{Random walk with semicontinuous covariance}\label{ch:model}
This section presents a model based on the abc covariance structure for high frequency data, which is simple in its implementation. The influences of the model parameters on the time series will be analyzed by simulations with the software package R (\cite{R}).
We  adjust the Ornstein-Uhlenbeck process as follows. For $t \in \mathbb{R}$
		\begin{equation}
			Y(t)=\int_{0}^{t}x(s) ds,\; \text{for}\; t\geq0,
		\end{equation}
		\[
			\textbf{x}=(x(0),\ldots, x(t))\sim N(0,\Sigma),\; \text{with}
		\]
		\begin{equation}
			Cov(x(s),x(t))= \text{e}^ {- |  t-s |}
		\end{equation}

The vector $\textbf{x}$ represents realizations of a stationary Ornstein-Uhlenbeck process and so $Y(t)$ is a so called Integrated-Ornstein-Uhlenbeck process. For the following simulations, the discrete parameter space is in the interval $[1,100]$. The distance between two time points $\Delta t$ is defined as 1. Hence, the process changes to its empirical version:
\begin{equation}
			Y_t=\sum_{i=1}^{t}x_{i},\; \text{for}\; t\in T,
		\end{equation}
		\[
			\textbf{x}=(x_1, x_2,\ldots, x_{t})\sim N(0,\Sigma),\; \text{with}
		\]
		\begin{equation}
			Cov(x_s,x_t)= \text{e}^ {- |  t-s |}
		\end{equation}
		
Figure \ref{fig:model1} shows the theoretical autocorrelation function of  $x_i$ and a path of the stochastic process $Y_t$.

\begin{figure}[!ht]
\begin{center}
\subfigure[Theoretical Autocorrelation Function]{\includegraphics[width = 0.45\textwidth, trim = 0 0 0 40, clip]{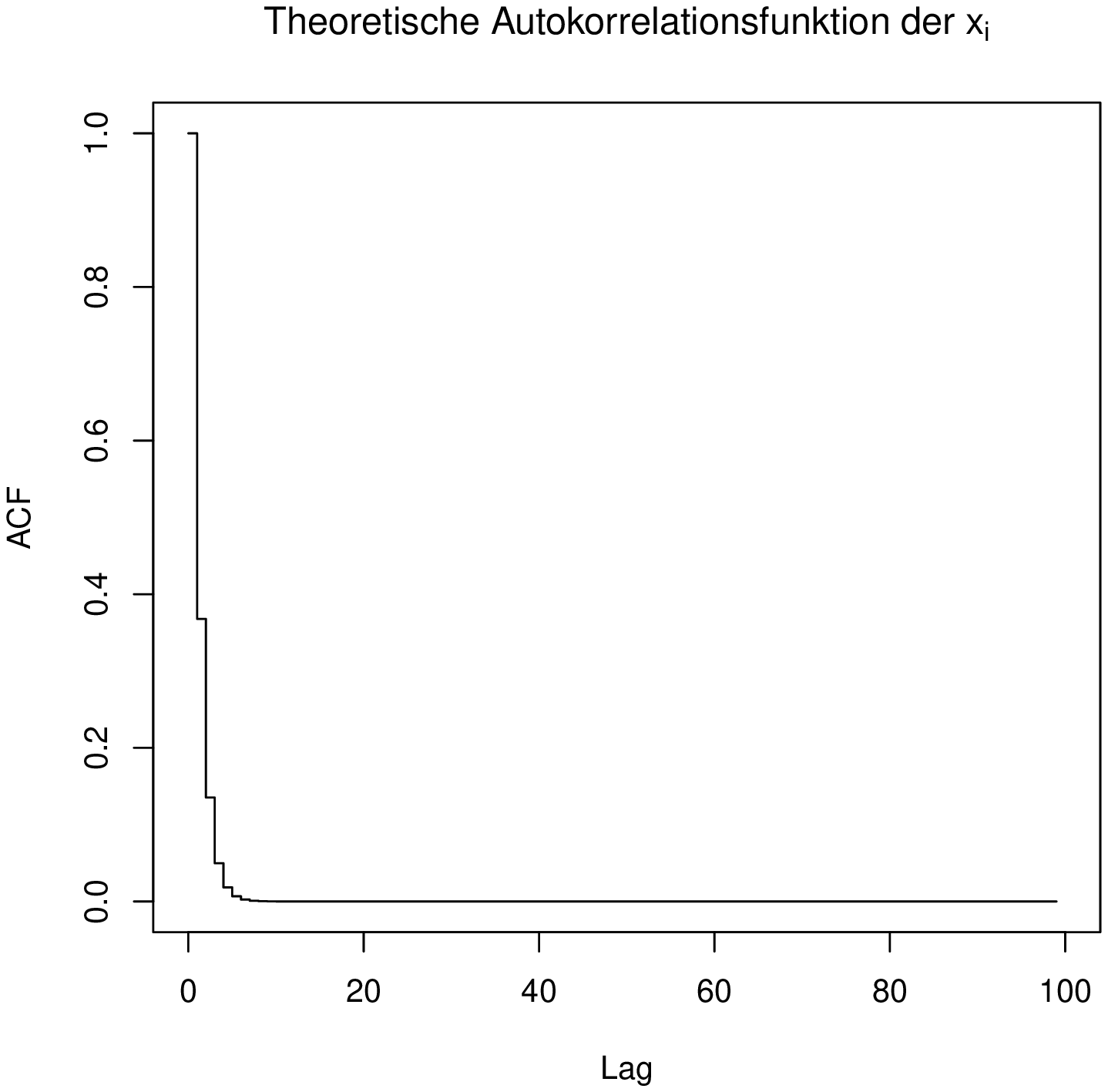}}
\subfigure[Path of $Y_t$]{\includegraphics[width = 0.45\textwidth, trim = 0 0 0 40,clip]{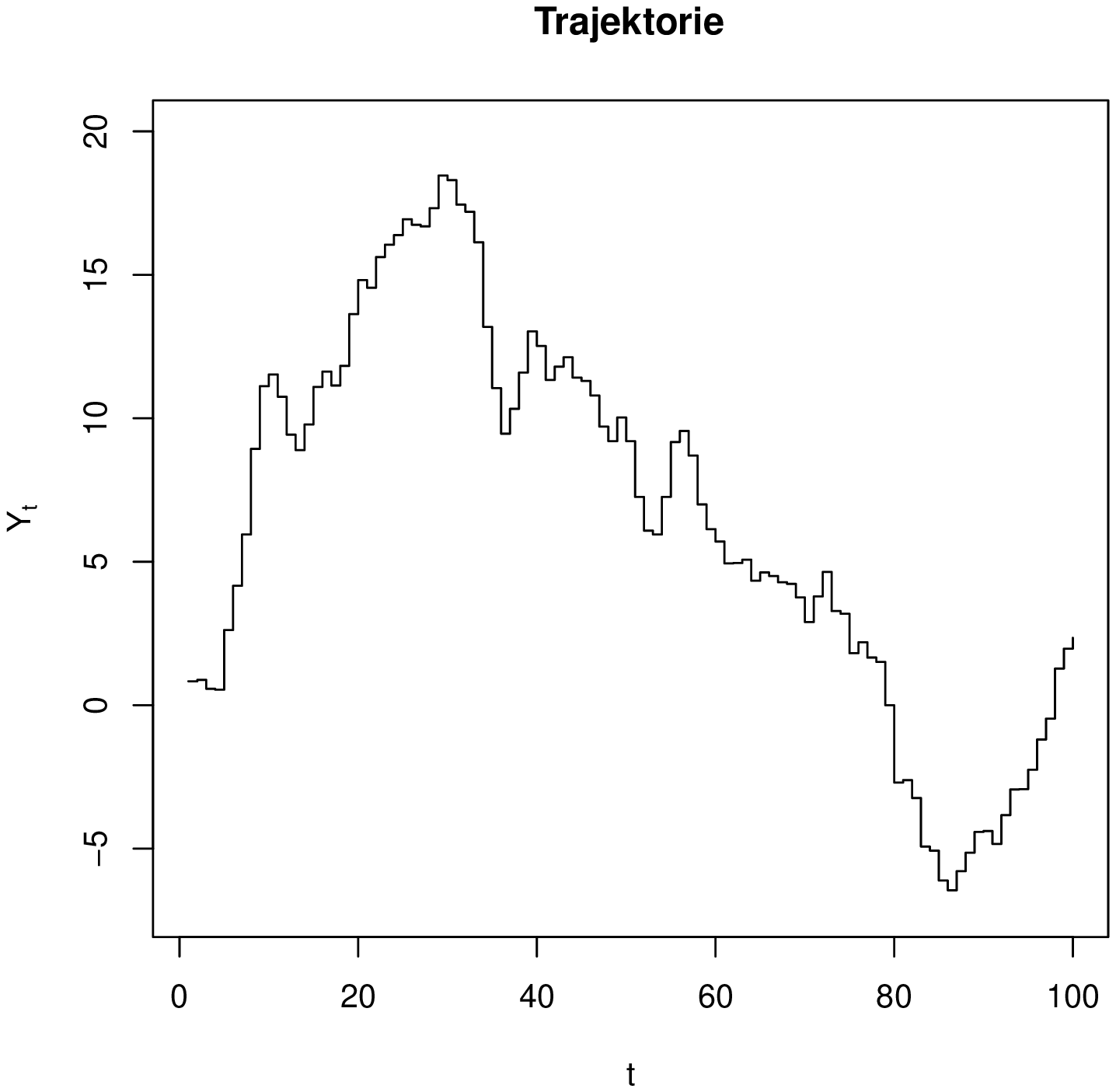}}
\caption{(a) Theoretical autocorrelation function of the $x_i$ and (b) a path of $Y_t$.}
\label{fig:model1}
\end{center}
\end{figure}

The introduced model can be adopted only with limitations for a specific data. Therefore the covariance structure is extended to abc with two parameters $c$ and $r$. The vector $\textbf{x}$ changes to:
\[
\textbf{x}=(x_1, x_2,\ldots, x_{t})\sim N(0,\Sigma),\; \text{with}
\]
\begin{equation}\label{cov1Jump}
Cov(x_s,x_t)=\left\{\begin{matrix}1&\ldots s=t\\
c\;exp^{-r\left (|  t-s\right |)} &\ldots s\neq t
\end{matrix}\right.
\end{equation}

With $r>0$ and $c\in(0,1]$. For $c<1$ the covariance matrix becomes discontinuous. In the following, the influences of these parameters are illustrated by varying them. Figure \ref{fig:model2} (a) and (b) show the influences of $r$, whereas the plots (c) and (d) those of $c$ with the aid of the theoretical autocorrelation function of $x_i$ and a path of $Y_t$.

\begin{figure}[!ht]
\begin{center}
\subfigure[Theoretical autocorrelation function (fixed $c$, varying $r$)]{\includegraphics[width = 0.4\textwidth, trim = 0 0 0 35,clip]{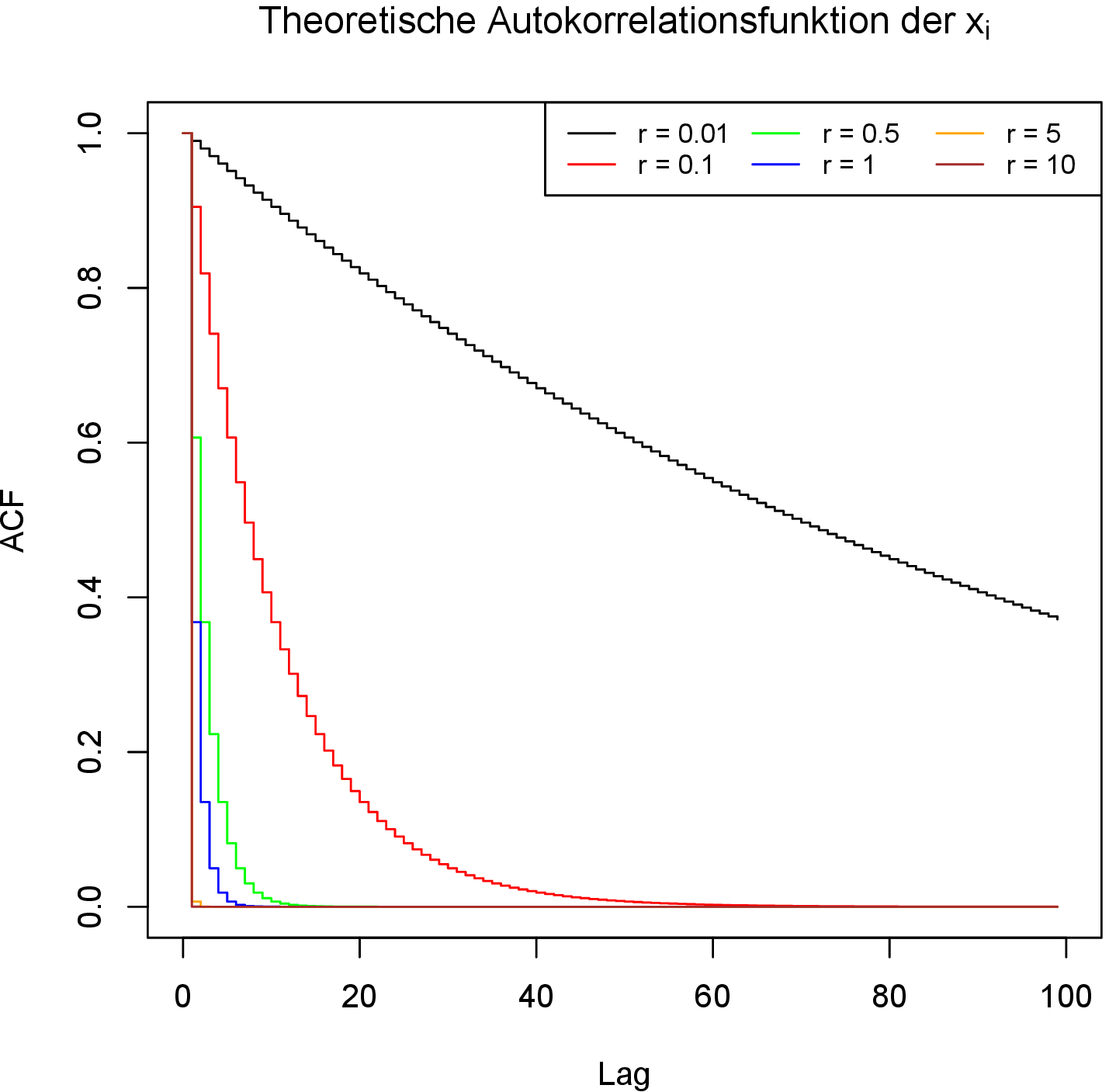}}
\subfigure[Path of $Y_t$ (fixed $c$, varying $r$)]{\includegraphics[width = 0.4\textwidth, trim = 0 0 0 35,clip]{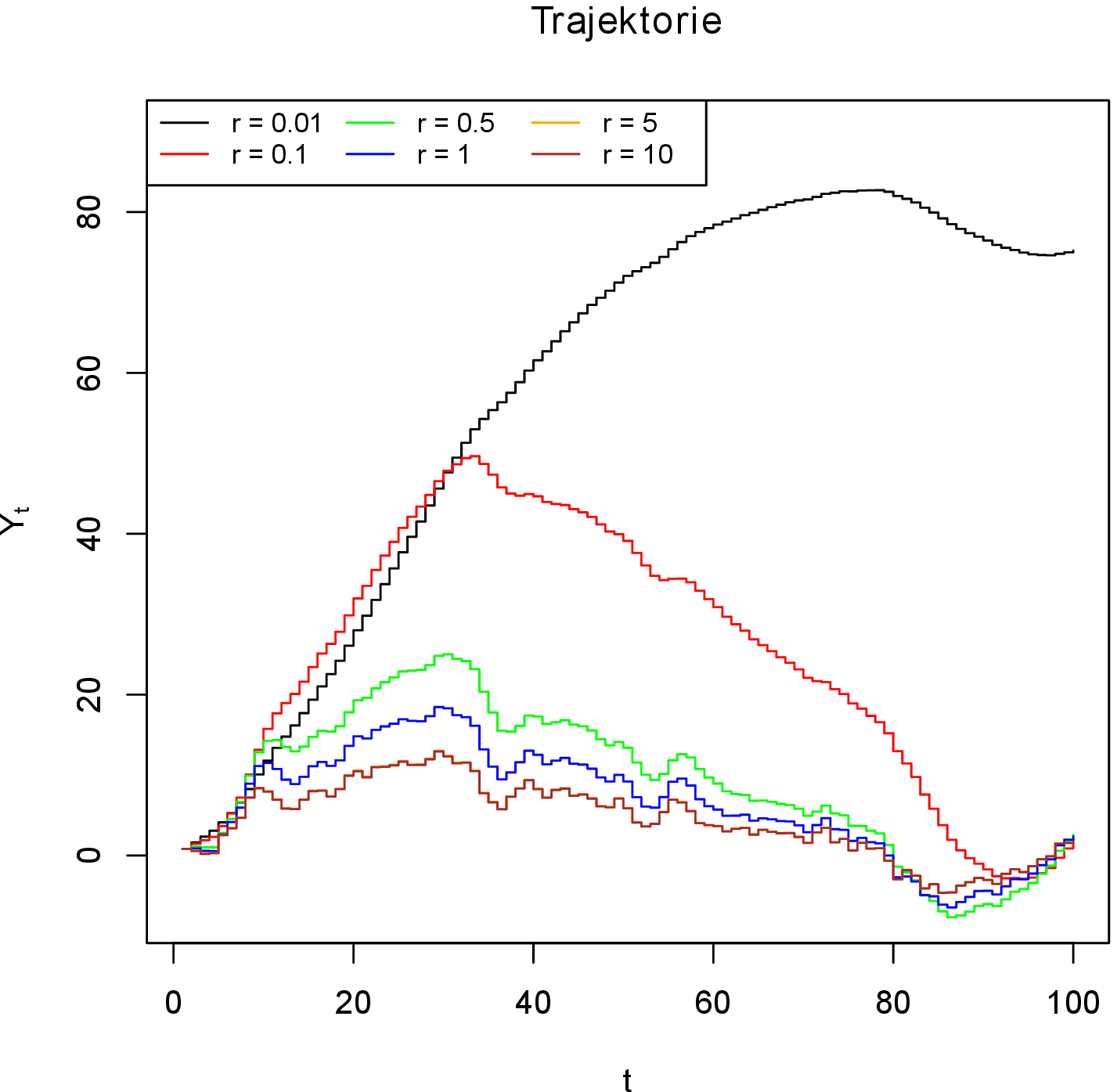}}
\subfigure[Theoretical autocorrelation function (fixed $r$, varying $c$)]{\includegraphics[width = 0.4\textwidth, trim = 0 0 0 35,clip]{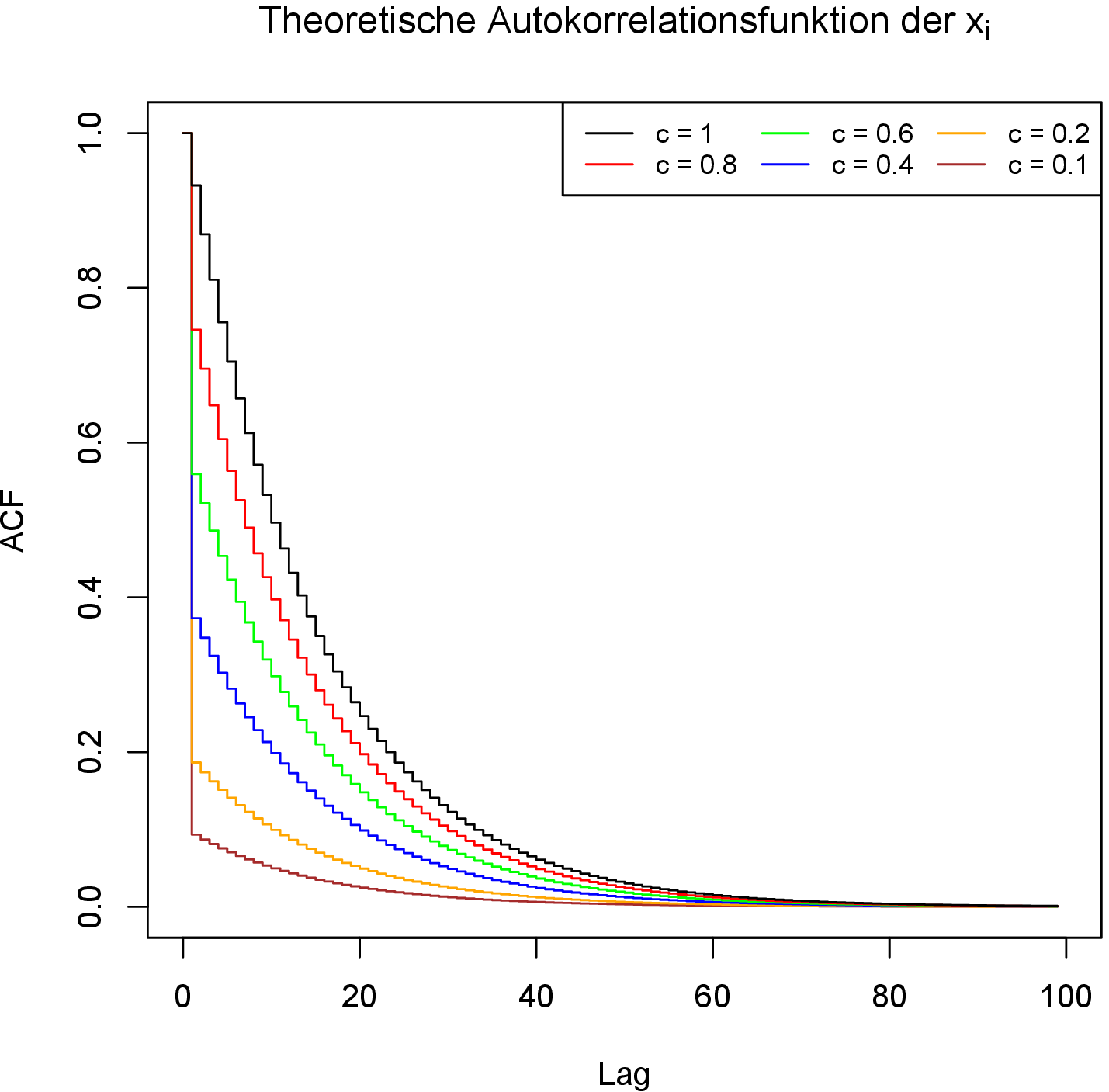}}
\subfigure[Path of $Y_t$ (fixed $r$, varying $c$)]{\includegraphics[width = 0.4\textwidth, trim = 0 0 0 35,clip]{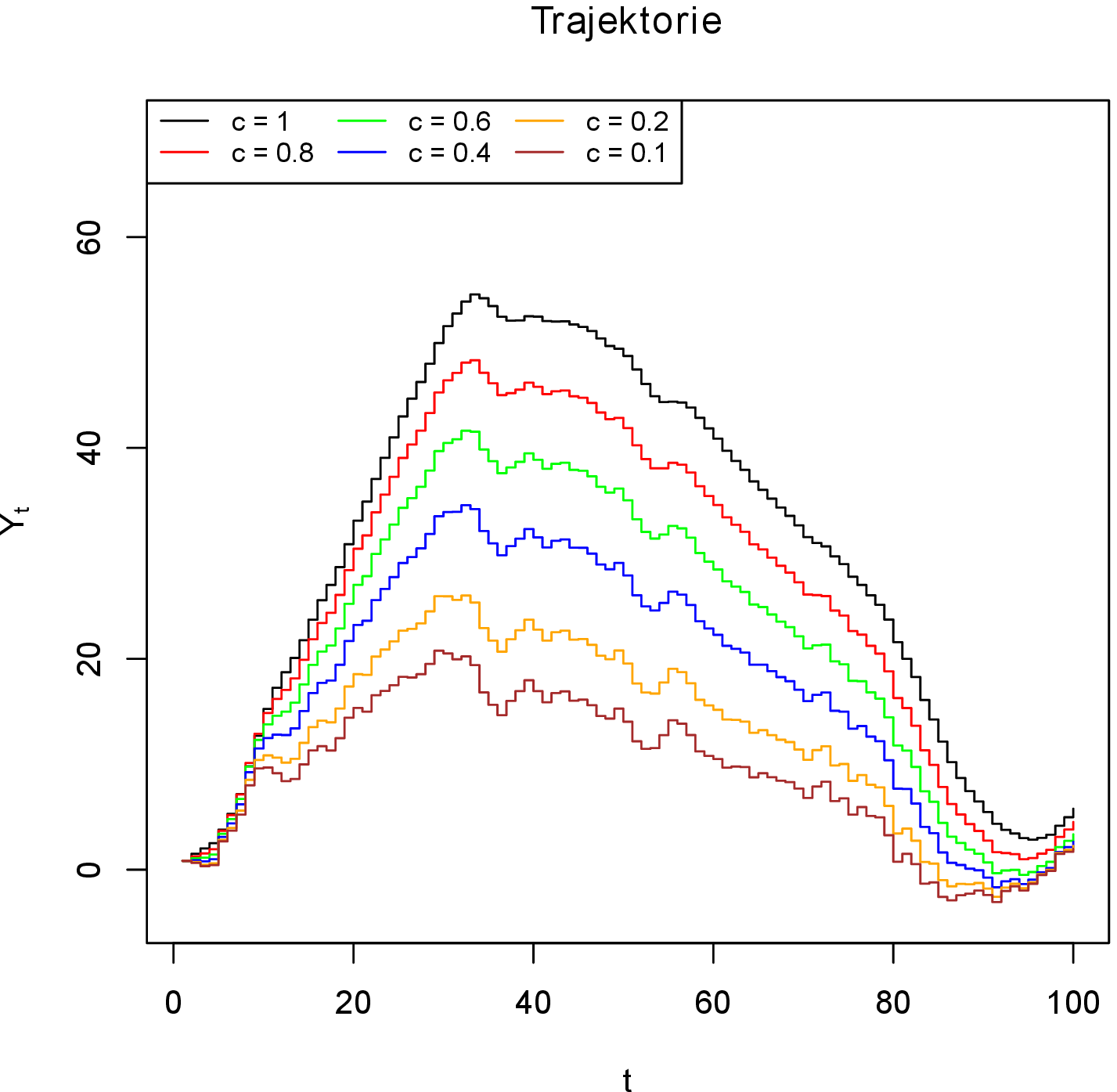}}
\caption{(a) Theoretical autocorrelation function of the $x_i$ and a path of $Y_t$ by varying $c$ with $r=0.07$ (a) and (b). The case of fixed $c=1$ and varying $r$ is plotted in (c) and (d).}
\label{fig:model2}
\end{center}
\end{figure}

Through the parameter $c$ a  jump at Lag 1 of the theoretical autocorrelation function is given. To improve the adjustment of the model on the data, further jumps at different Lags are enabled. The problem at this point is, that the covariance matrix has to be positive semi-definite. The model is expanded, so that up to four jumps in the theoretical autocorrelation function can occur, in order gain information about the influences of the jumps on the process $Y_t$:
		\[
			\textbf{x}=(x_1, x_2,\ldots, x_{t})\sim N(0,\Sigma),\; \text{with}
		\]
		\begin{equation} \label{severaljumps}
			Cov(x_s,x_t)=\left\{\begin{matrix}1&\ldots s=t\\
			0.8\; exp(-r\left |  t-s\right |)&\ldots 0 < |t-s| < 30\\
			0.7\; exp(-r\left |  t-s\right |)&\ldots 30 \leq |t-s| < 73\\
			0.6\; exp(-r\left |  t-s\right |)&\ldots 73 \leq |t-s| < 88\\
			0.5\; exp(-r\left |  t-s\right |)&\ldots 88 \leq |t-s| < \infty \\
			\end{matrix}\right.
		\end{equation}

Graphical representations of the influences are given in Figure \ref{fig:model4}, where the number of jumps is increased by steps.

\begin{figure}[!ht]
\begin{center}
\subfigure[Theoretical autocorrelation function]{\includegraphics[width = 0.45\textwidth, trim = 0 0 0 35,clip]{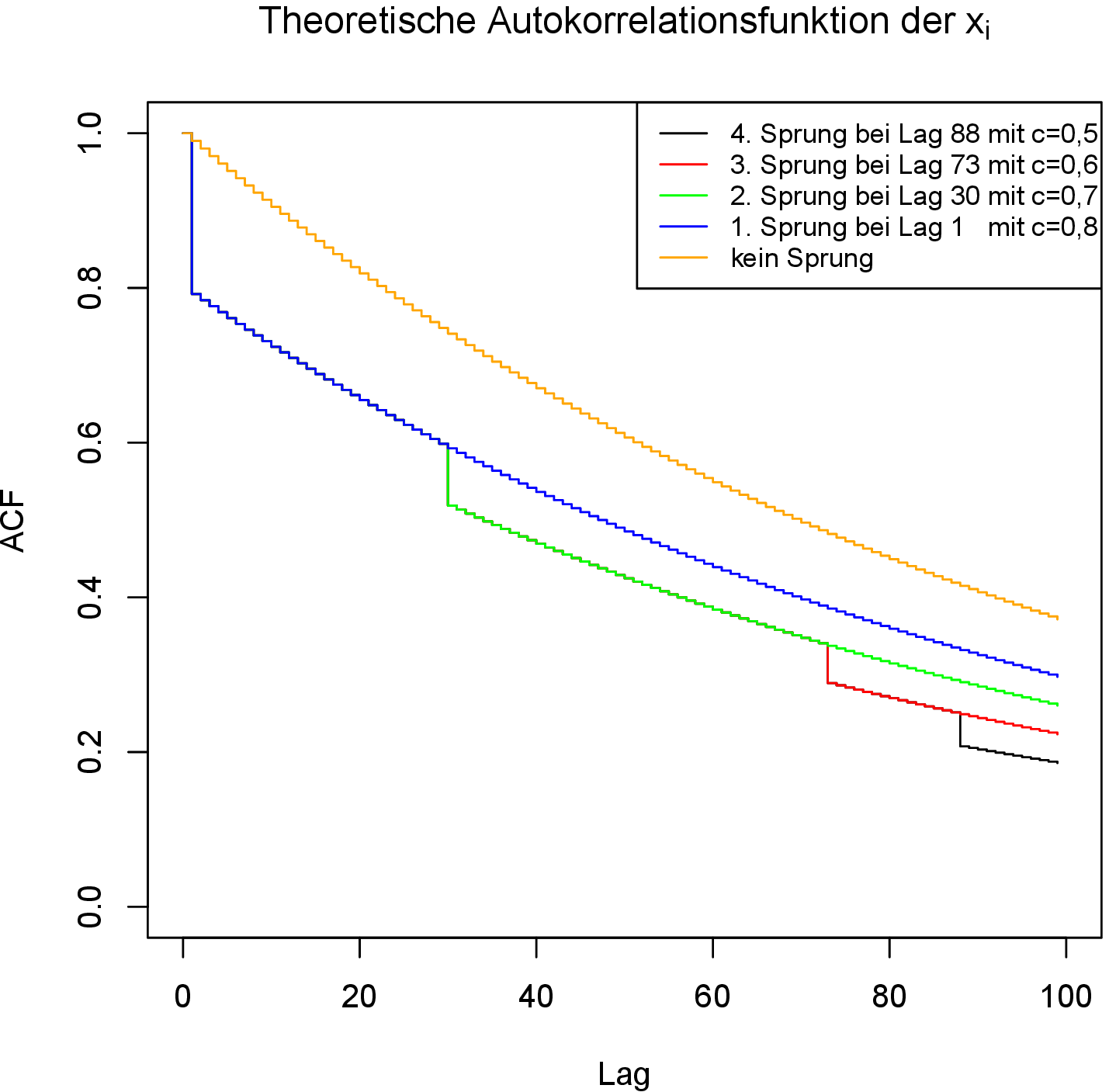}}
\subfigure[Path of $Y_t$ with up to four jumps]{\includegraphics[width = 0.45\textwidth, trim = 0 0 0 35,clip]{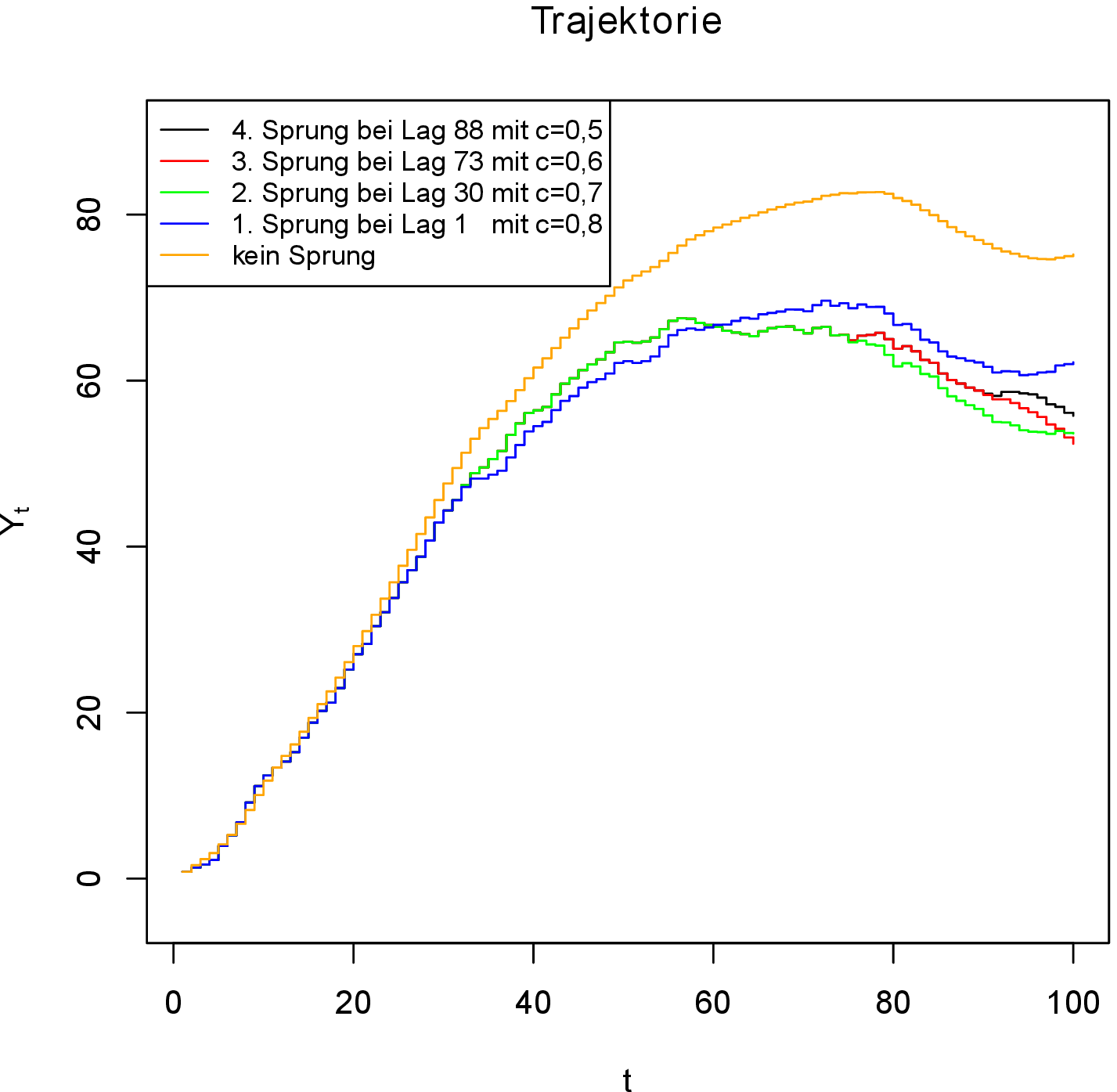}}
\caption{(a) Theoretical autocorrelation function of the $x_i$ and (b) a path of $Y_t$ with up to four jumps with $r=0,01$.}
\label{fig:model4}
\end{center}
\end{figure}

Through the parameters $r$ and $c$ the model is completely specified. To fit this model to a specific data, the parameters $r$ and $c$ have to be estimated.

\subsection{Paths and Simulations for power exponential abc class}
For the next simulation setup covariance structure was chosen as power exponential:
\begin{equation}
Cov(x_s,x_t)=\left\{\begin{matrix}1&\ldots s=t\\
exp^{-\left (|  t-s\right |^p)} &\ldots s\neq t
\end{matrix},\right.
\end{equation}
for different values of $p= 1,2 $ and $10.$
In order to give a graphical overview of the impacts of the parameters on the covariance structure and as a consequence on the trajectory, the paths of $Y_t$ are plotted for different values of the parameters. Therefore the known covariance structure is extended to be
\begin{equation}
Cov(x_s,x_t)=\left\{\begin{matrix}1&\ldots s=t\\
c\;exp^{-r\left (|  t-s\right |)^p} &\ldots s\neq t
\end{matrix}.\right.
\end{equation}
It is now possible to check for the impacts of every parameter on the path of $Y_t$, ceteris paribus. The visualization of the differences can be seen in Figure \ref{fig:Paths}. Main differences can be observed for a value of $r = 0.35$ where the path is shifted upwards. The highest difference in the other direction can be observed for a value of $c = 0.2$, however, changes in the parameters from $r = 1, c = 1$ and $p = 1$ can lead to changes to both directions in the trajectory.
\begin{figure}[!ht]
	\centering
		\includegraphics[width=0.45\textwidth, trim = 0 20 0 0, clip]{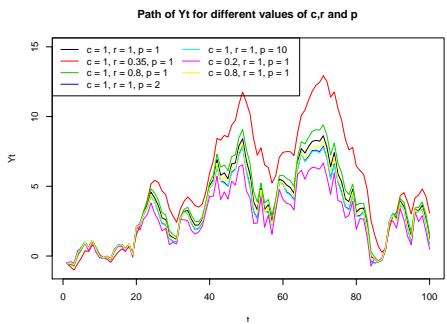}
	\caption{Paths for different values of parameters c,r and p}
	\label{fig:Paths}
\end{figure}

\subsection{Scaling and difference between nugget and later jumps}
As proven by \cite{Crum}, if $C$ is isotropic and positive definite on $R^m,\ m>1$ then $C$ is continuous except perhaps at the origin  $d=0$ (nugget effect). However, for $m=1$ the latter is not true anymore and here we illustrate difference between
both cases, i.e.

\textbf{A)} \emph{Nugget effect} (discontinuity at the origin  $d=0$)

\textbf{B)} \emph{several jumps}, i.e.~covariance function of equation (\ref{severaljumps})

Figure \ref{ABcomparison} shows simulated differences between both cases, A and B for the process $X_i$ itself and its related random walk $Y_t$.
It can be seen that for small values of $r (< 0.1)$ the  differences between both cases are obvious.
The effect of scaling parameter $r$ is obvious, since the differences between A and B are negligible (i.e.~differences around $9\times 10^{-15}$) for e.g.~$r=1.$

\begin{figure}[h]
		\subfigure[Difference between cases A and B for various r]{\includegraphics[width = 0.45\textwidth]{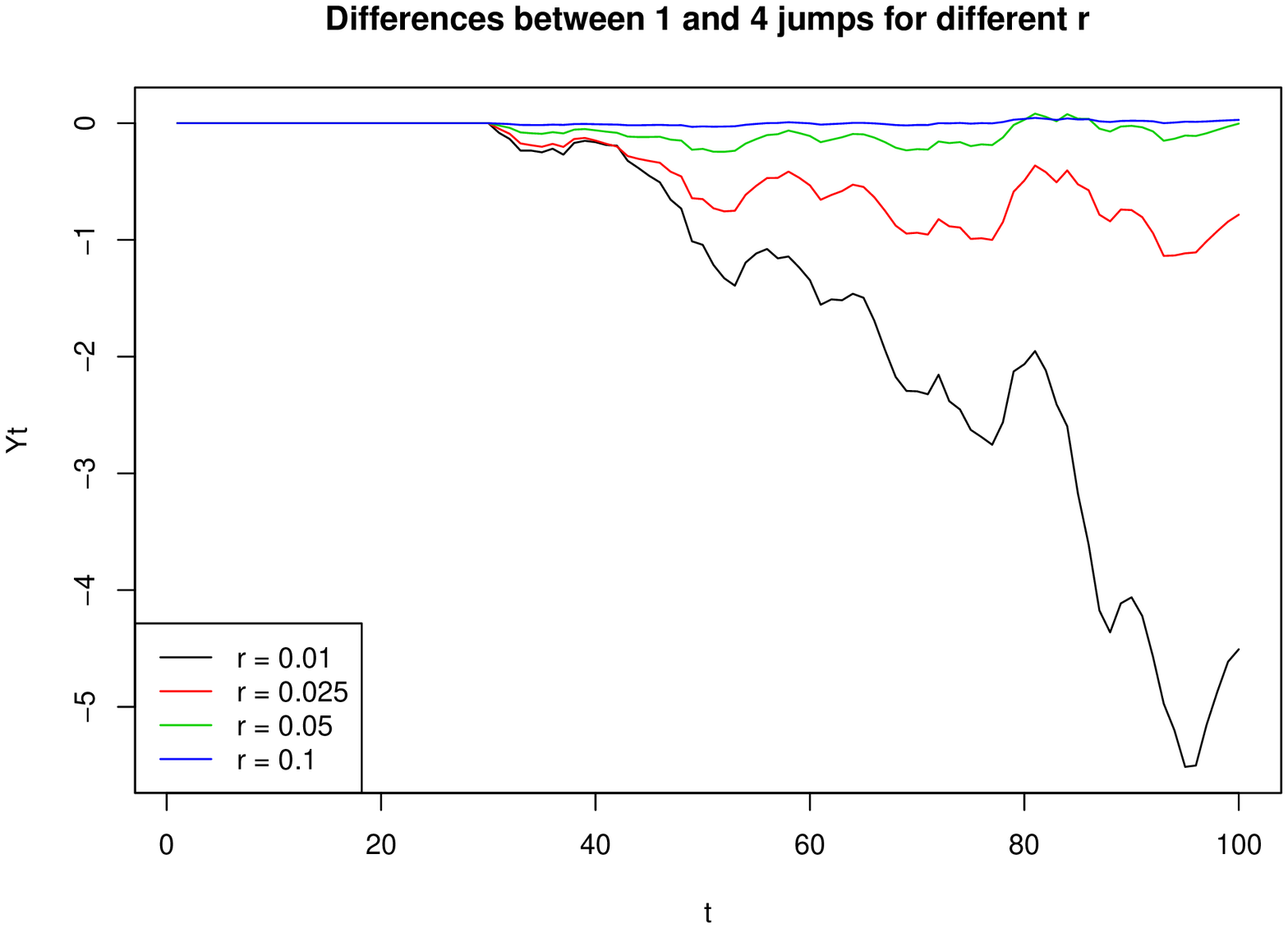}}
		\subfigure[Comparison between cases A and B: random walks for  $r = 0.025.$]{\includegraphics[width = 0.45\textwidth]{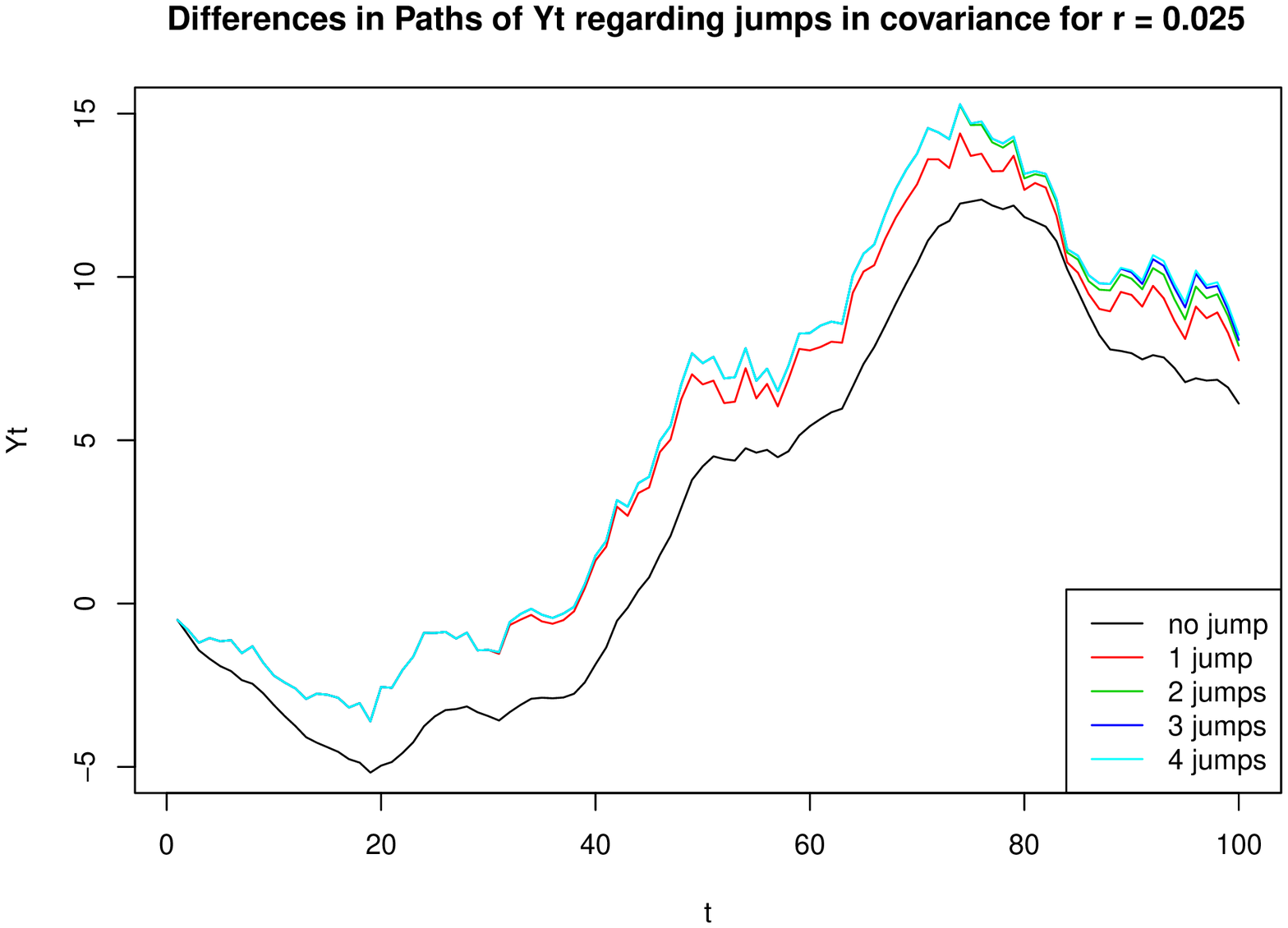}}
	\caption{Comparison between cases A and B}
	\label{ABcomparison}
\end{figure}

\subsection{Convergence of the empirical autocorrelation function}

Up to this point, the empirical autocorrelation function, which can be estimated from simulated data, was disregarded. By plotting both theoretical and empirical autocorrelation functions in one plot, it can be seen that the true correlation between the data is underestimated by the empirical autocorrelation function. Top row of Figure \ref{fig:model5} does not show differences in the empirical covariance function visible with regard to having continuous or no continuous covariance matrix. If the parameter space $T$ is changed to the interval  $[1;6,000]$, it can be seen in bottom row of Figure \ref{fig:model5}, that the empirical autocorrelation function converges to the theoretical autocorrelation function, at least for the first 100 Lags.

\begin{figure}[!ht]
\begin{center}
\includegraphics[width = 0.45\textwidth, trim = 0 0 0 35,clip]{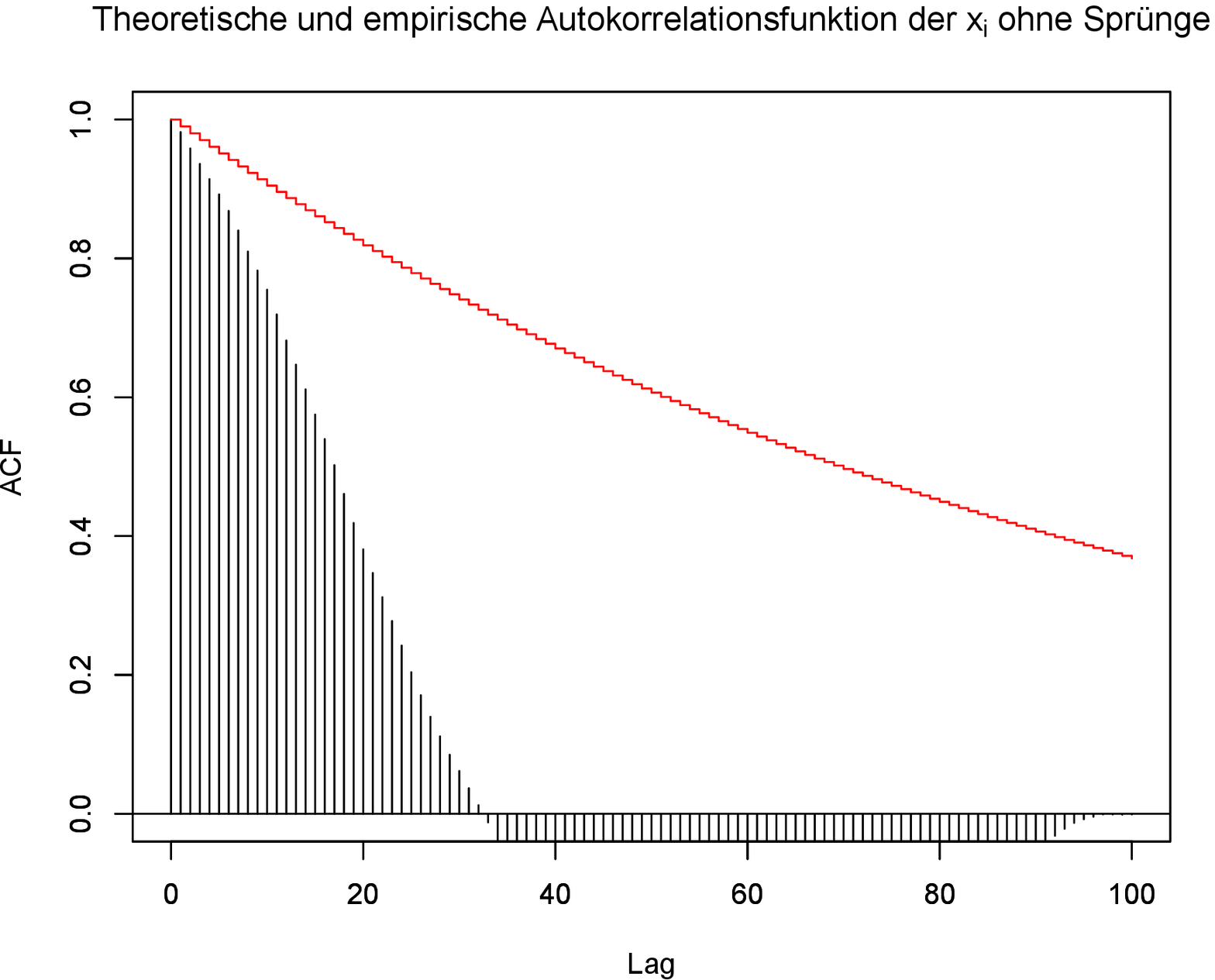}
\includegraphics[width = 0.45\textwidth, trim = 0 0 0 35,clip]{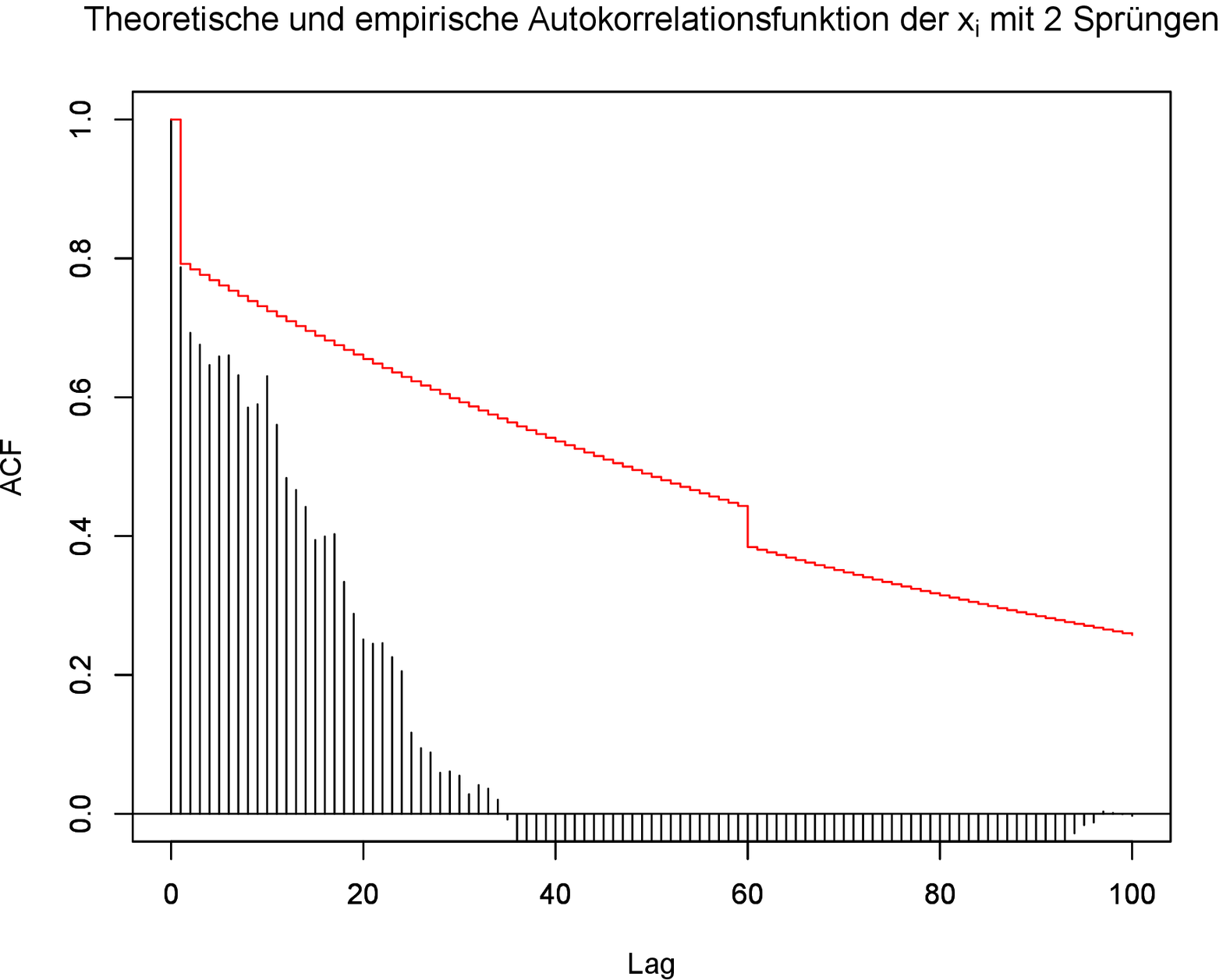}
\includegraphics[width = 0.45\textwidth, trim = 0 0 0 35,clip]{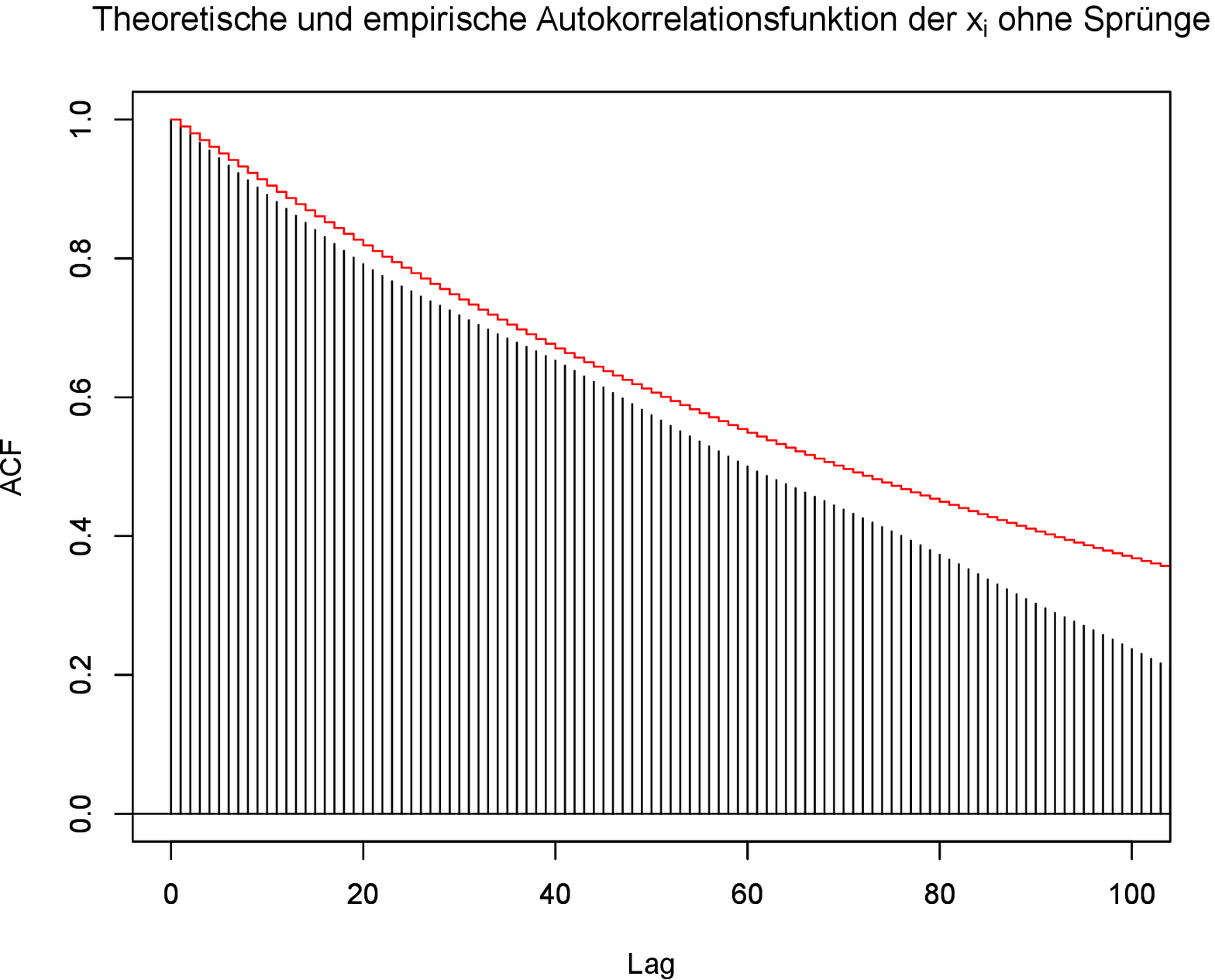}
\includegraphics[width = 0.45\textwidth, trim = 0 0 0 35,clip]{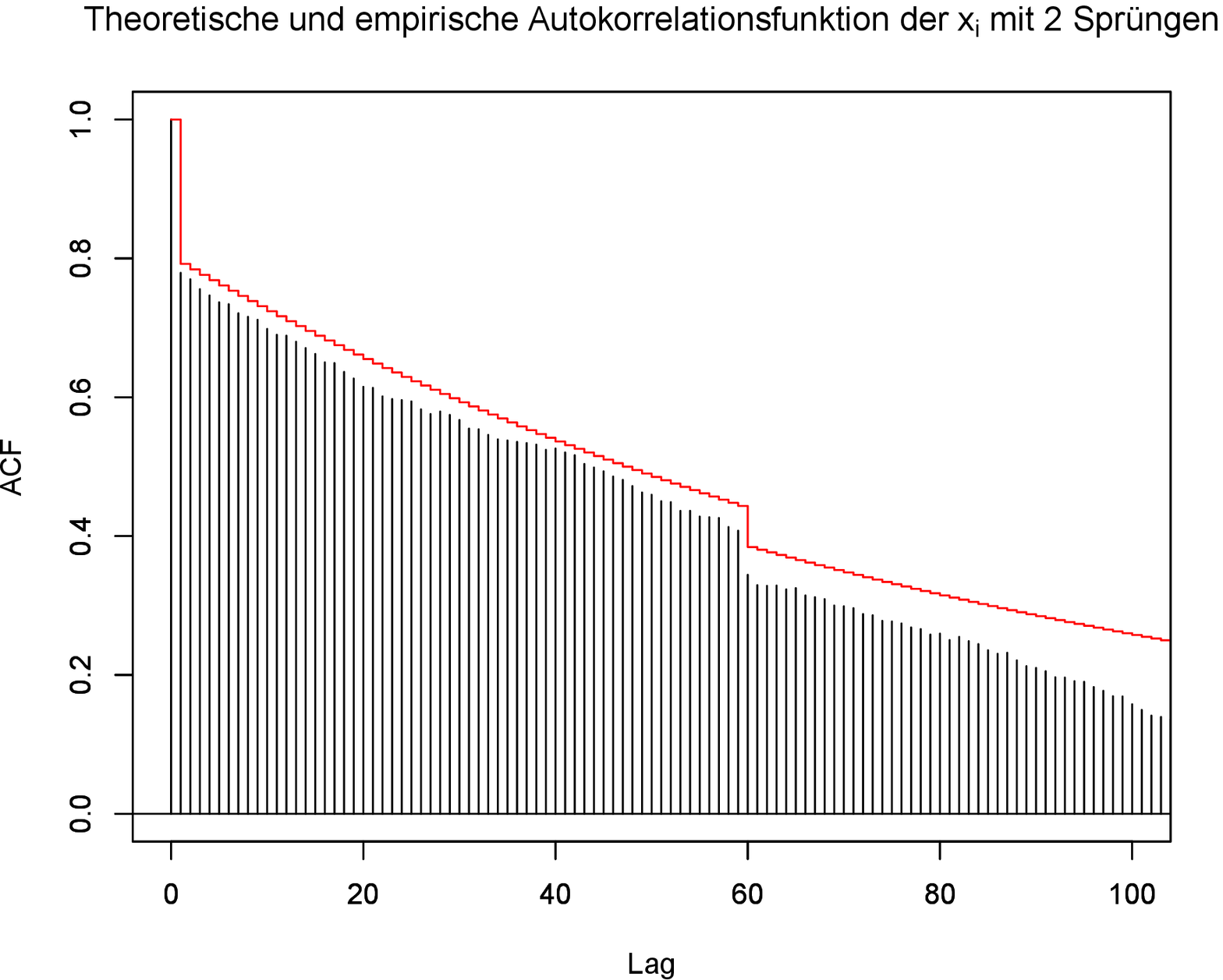}
\caption{Theoretical and empirical autocorrelation function of the $x_i$ with continuous covariance matrix (left plot) and with 2 jumps in the covariance matrix with $r=0.01$ (top right plot). Extending the parameter space to the interval $[1;6,000]$ is performed for the plot bottom right.}
\label{fig:model5}
\end{center}
\end{figure}

The empirical autocorrelation function fails to clearly identify the jumps. Comparing both empirical autocorrelation functions, it can be seen that the data is more scattered when there are jumps. Having a larger sample enables to identify the jumps in the covariance matrix and the convergence of the empirical to the theoretical autocorrelation function is properly visible for the first 100 lags. Considering that in practice, small samples are used to analyze time series, it is supposed that it is impossible to assume that there are jumps in the correlation matrix, although jumps may be well present. This fact confirms the relevance of the model, because it supposes that the data arises from a distribution with discontinuous covariance matrix. Only, if there is no proof of jumps, a continuous covariance matrix may be  accepted.

\subsection{Test for continuity of the covariance matrix}

Differences between the theoretical and the empirical autocorrelation functions can be used to analyze whether data arises from a process with or without jumps in the covariance. Hence, the differences are calculated at every lag and its absolute value is summed up. This sum is called sum of residuals $T$ and defined as
\begin{equation} 	T = \sum_{L=0}^{n-1}|\rho(L)-\hat{\rho}(L)|. \end{equation}
%
%

Calculating the sum of residuals from data of Figure \ref{fig:model5} leads to a value of 63.4 for the data with continuous covariance matrix (top left) and 48.96 for the data with two jumps in the covariance matrix (top right). Data from plots in the second row lead to sums of residuals of 352.62 and 271.54, respectively. The sum of residuals is smaller if there are jumps in the covariance matrix. Therefore, we analyzed if there is a general difference in the parameter space of $[1;100]$. To determine whether the sum of residuals is smaller when there are jumps in the covariance matrix, various samples, with fixed $r$ and sample size, were generated. The jump points and jump heights $c$ are varied in a way, that the covariance matrix remains positive semi-definite. Table \ref{tab1} shows the results and confirm the assumption.

\begin{table}[!ht]
    \resizebox{0.99\textwidth}{!}{\begin{tabular}{|l|l|l|l|l|l|l|l|l|l|}
        \toprule
 \multicolumn{9}{|c|}{1 jump at Lag 1 with various $c$} \\
    \multicolumn{3}{|c|}{$c$}   & 0     & 0.2    & 0.4    & 0.6    & 0.8    & 1      \\ \midrule
        \multicolumn{3}{|c|}{sum of residuals} & 4.74 & 13.38 & 25.51 & 38.02 & 50.52 & 63.03 \\         \midrule\midrule
		\multicolumn{9}{|c|}{2 jumps, 1.~ with $c = 0.8$ and 2.~jump at Lag s with c = 0.7}\\
			\multicolumn{1}{|c|}{$s$}   & 30    & 40    & 50    & 60    & 70   & 80   & 90   & 99   \\ \hline
        sum of residuals & 46.78 & 47.48 & 48.12 & 48.71 & 49.23 & 49.70 & 50.13 & 50.49   \\
        \midrule\midrule
   \multicolumn{9}{|c|}{3 jumps, 1-2. as above and 3.~jump at Lag $s$ with c  = 0.6} \\
\multicolumn{5}{|c|}{$s$}     & 73     & 75     & 85     & 98        \\ \hline
        \multicolumn{5}{|c|}{sum of residuals} & 45.63 & 45.73 & 46.18 & 46.70    \\
        \midrule\midrule
  \multicolumn{9}{|c|}{ 4 jumps, 1-3. jumps as above, 4.~jump at Lag 88 (c = 0.5)} \\\midrule
	\multicolumn{9}{|c|}{sum of residuals: 45.18}   \\
        \bottomrule\bottomrule
    \end{tabular}}
		\caption{$r$ = 0.01, $n$ = 100. Comparison with respect to sum of residuals of 1 to 4 jumps in the correlation structure given parameters of $c$ and $s$.}
				\label{tab1}
\end{table}

The smaller the jump height $c$, the more decreases the sum of residuals. Additionally, the lag, i.e.~when the jump takes place, influences the sum of residuals. As it is impossible to simulate every combination from $r$, number of jumps, jump height and jump point, following restrictions are made to perform hypothesis testing for continuous covariance matrix: there is only one jump at Lag 1 in the covariance matrix. In addition $r$ and $n$ are fixed.  Assuming, that $c$ arises from an unknown distribution function $F_c$, 10,000 different paths are simulated and their sum of residuals is calculated. $c$ is in this case a random number from $F_c$. The sums of residuals are illustrated with a histogram and scatter plot afterwards. Figure \ref{fig:model8} shows the results for various distribution functions $F_c$.

\begin{figure}[!ht]
\begin{center}
\subfigure[U(0,1)]{\includegraphics[width = 0.23\textwidth, trim = 0 0 0 38,clip]{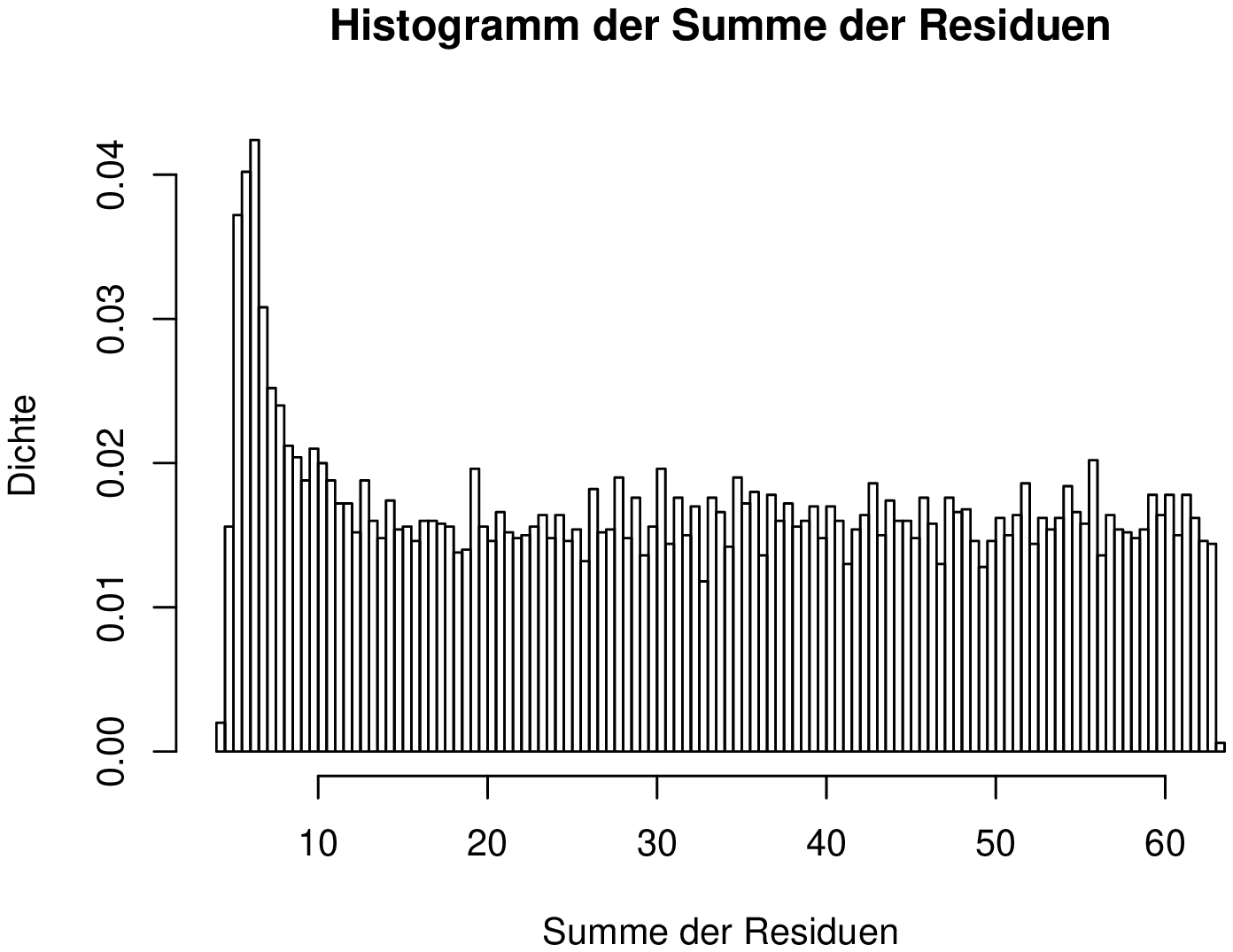}}
\subfigure[U(0,1)]{\includegraphics[width = 0.23\textwidth, trim = 0 0 0 38,clip]{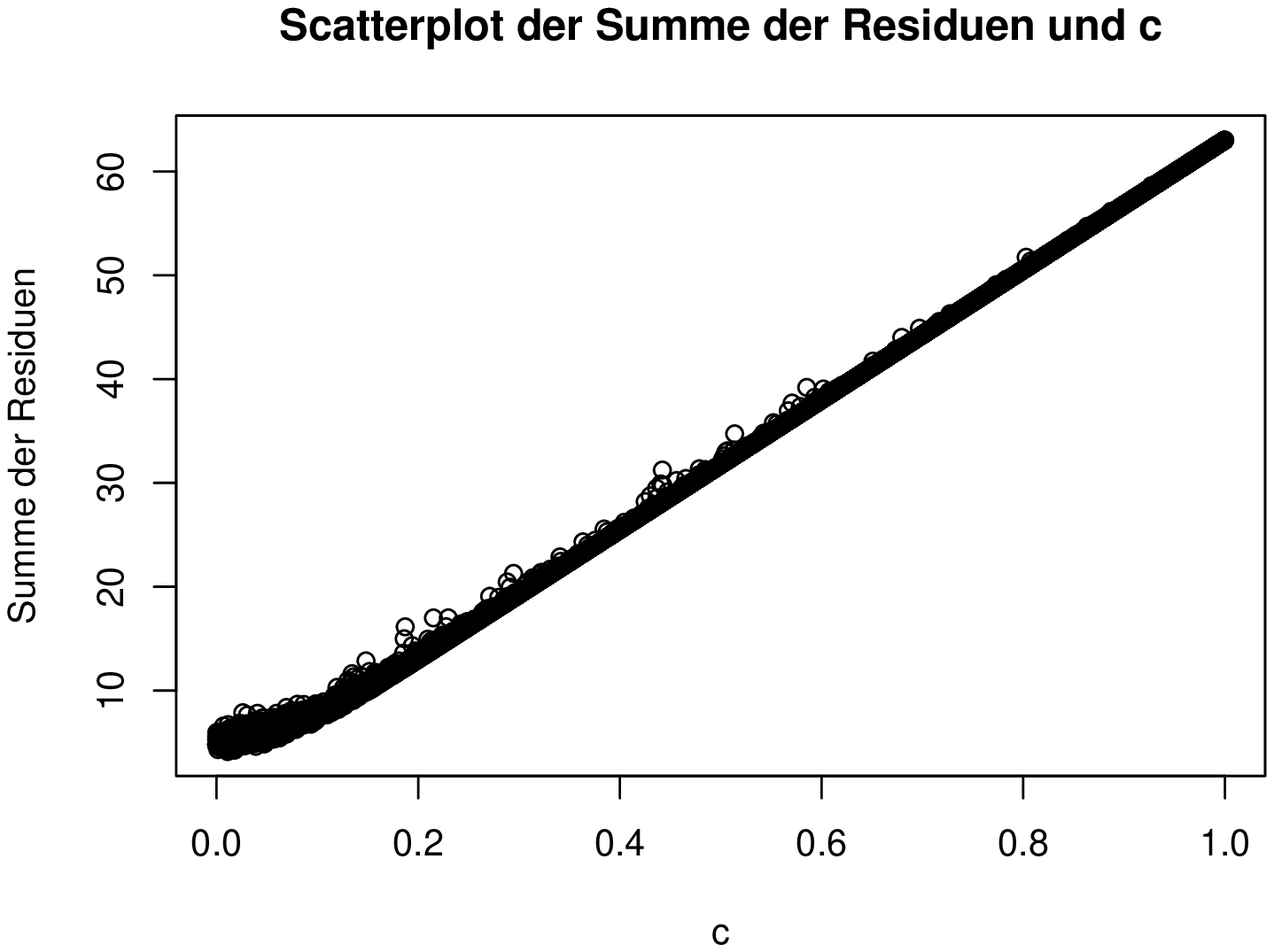}}
\subfigure[Gamma(2,1)]{\includegraphics[width = 0.23\textwidth, trim = 0 0 0 38,clip]{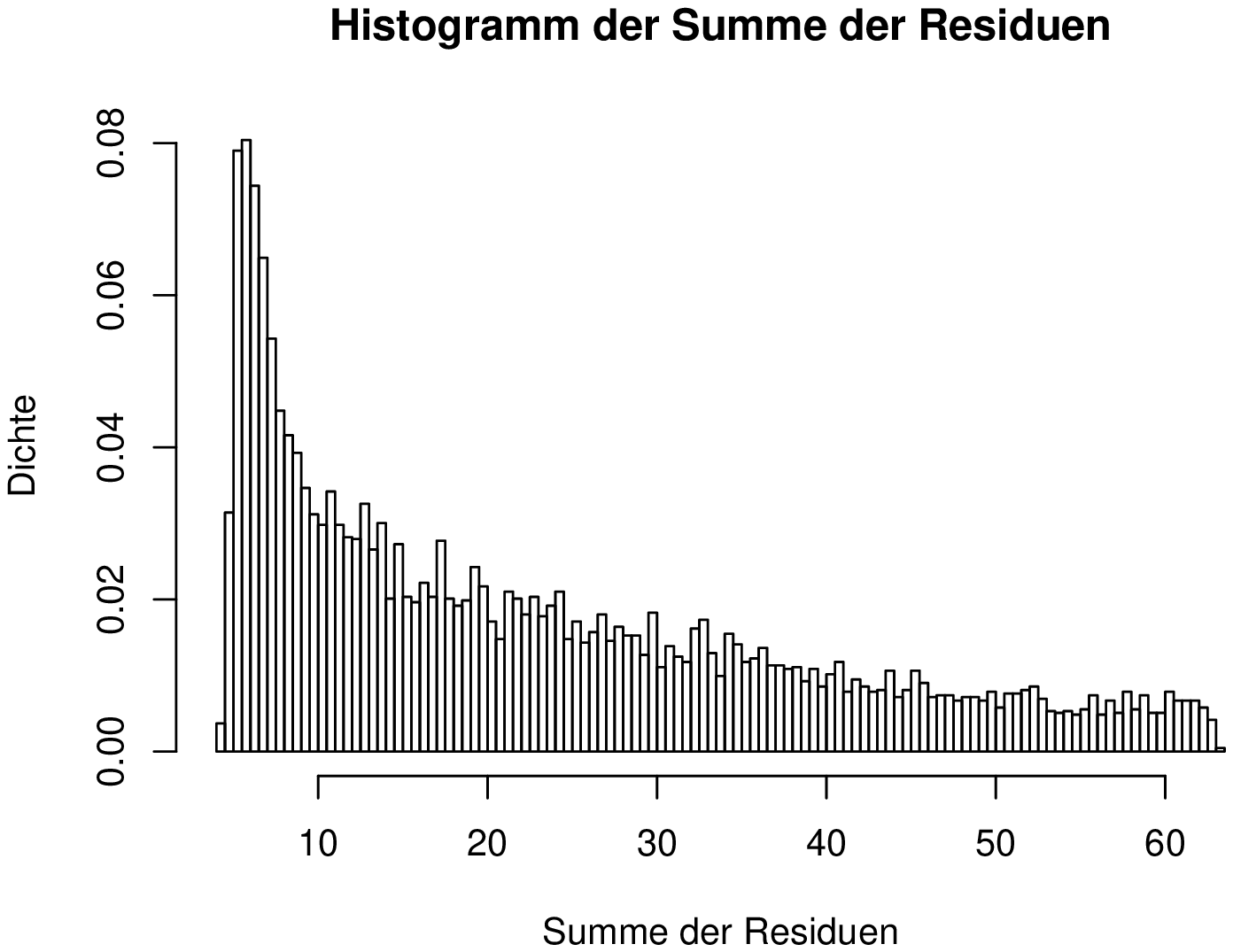}}
\subfigure[Gamma(2,1)]{\includegraphics[width = 0.23\textwidth, trim = 0 0 0 38,clip]{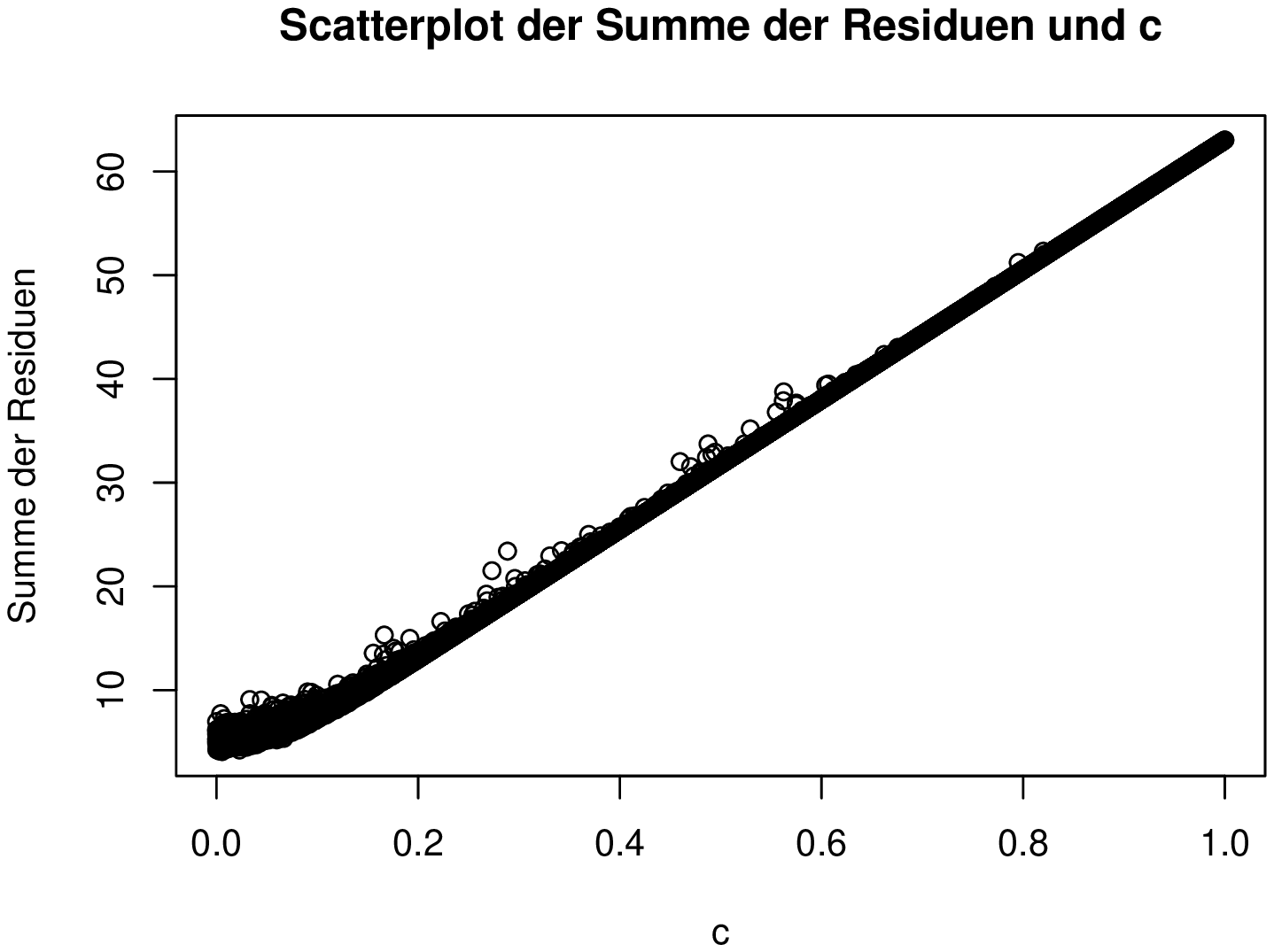}}
\subfigure[Poisson(5)/10]{\includegraphics[width = 0.23\textwidth, trim = 0 0 0 38,clip]{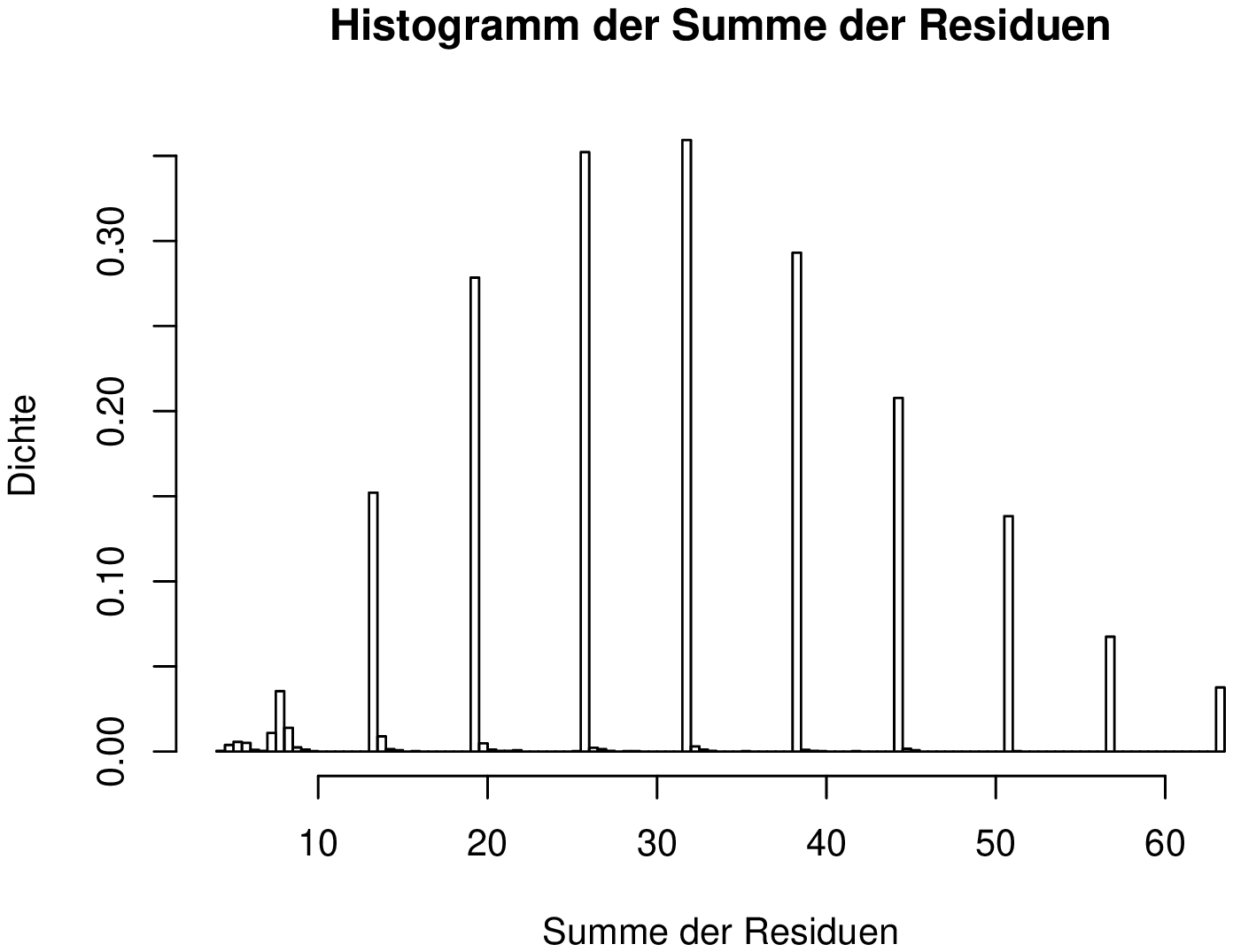}}
\subfigure[Poisson(5)/10]{\includegraphics[width = 0.23\textwidth, trim = 0 0 0 38,clip]{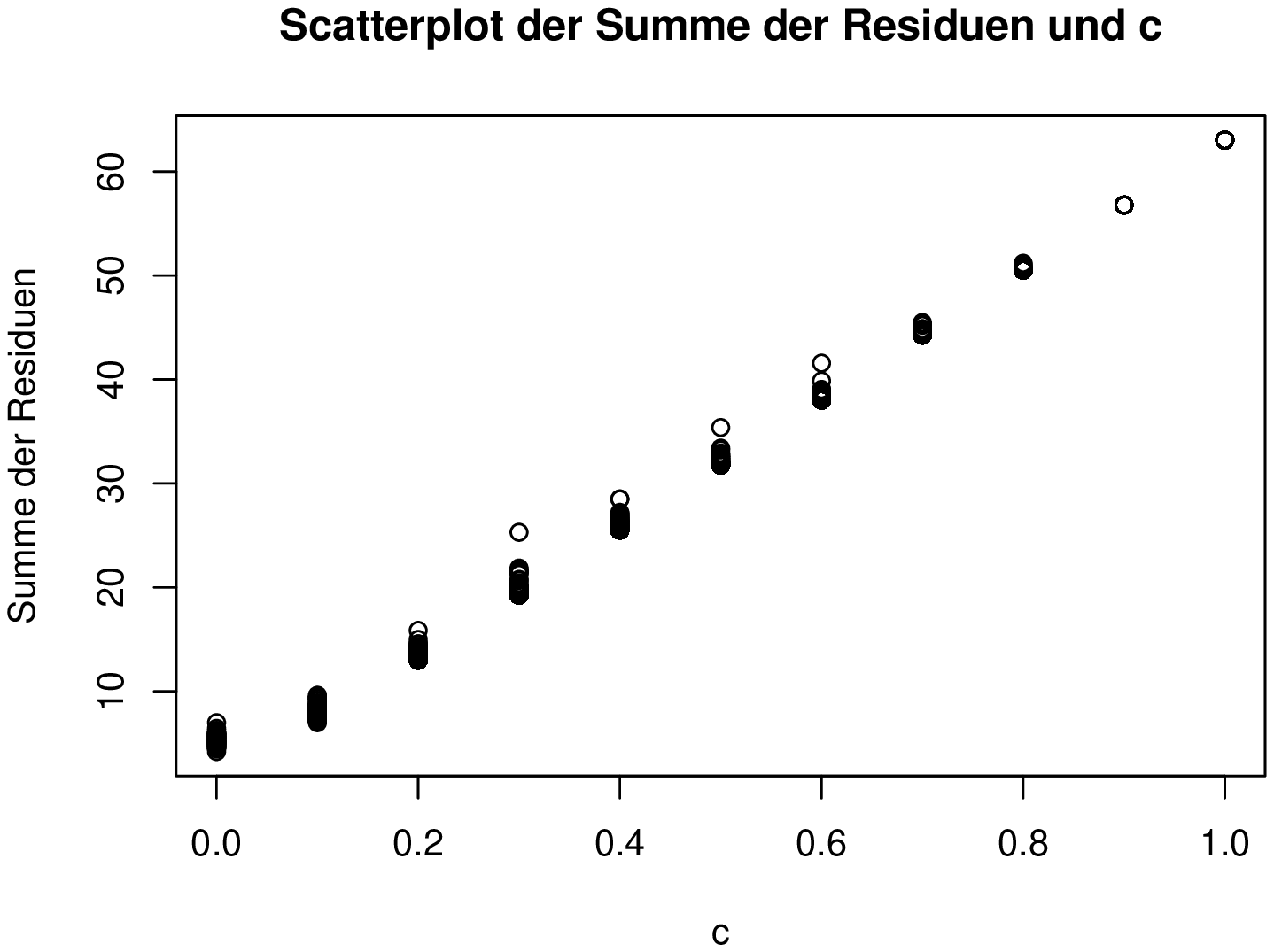}}
\subfigure[Bin(20,0.5)/20]{\includegraphics[width = 0.23\textwidth, trim = 0 0 0 38,clip]{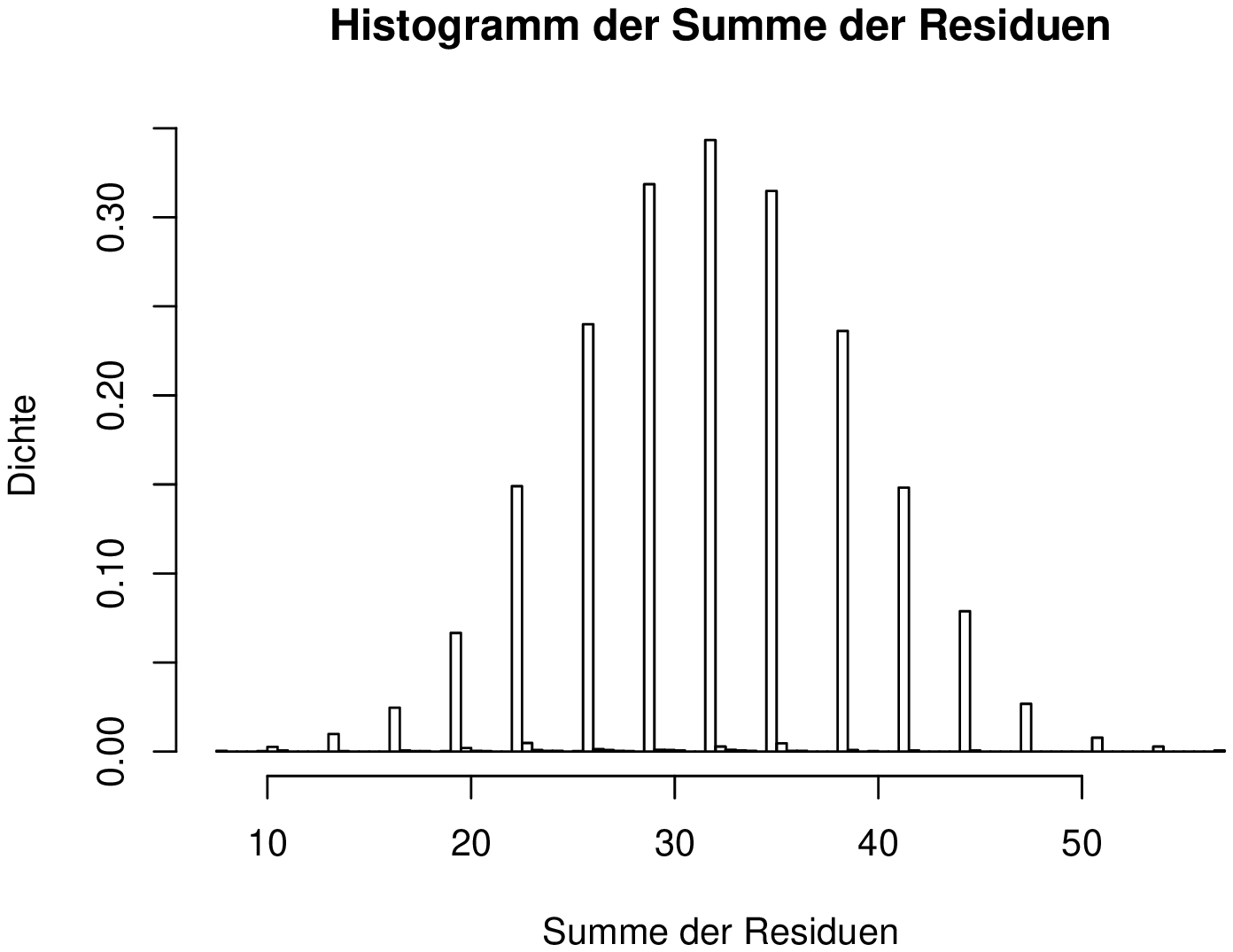}}
\subfigure[Bin(20,0.5)/20]{\includegraphics[width = 0.23\textwidth, trim = 0 0 0 38,clip]{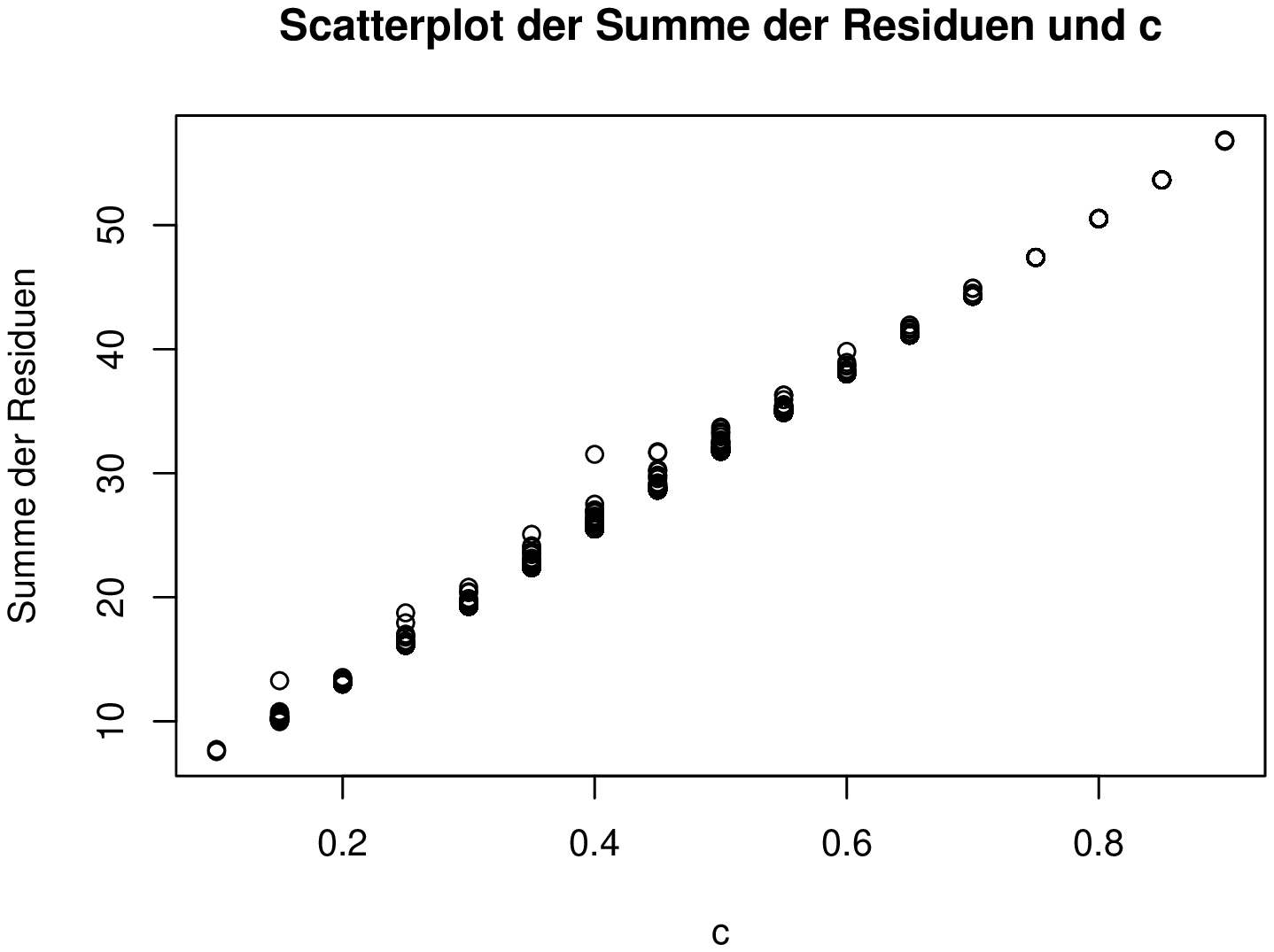}}
\caption{Sum of residuals with $r=0.01$, $n=100$ and $c$ distributed (a)(b) Uniform ($c\sim U(0;1)$), (c)(d) Gamma ($c\sim \Gamma(2;1)$), (e)(f) Poisson ($c\sim \frac{P(5)}{10}$) and (g)(h) Binomial ($c \sim \frac{Bin(20;0,5)}{20}$).}
\label{fig:model8}
\end{center}
\end{figure}

When using Gamma or Poisson distribution values of $c$ greater than one were neglected. The fact, that $c$ arises from a truncated distribution function, was disregarded. Figure \ref{fig:model8} shows, that if the distribution function $F_c$ is known, there is a correlation to the distribution function of the sum of residuals. The scatter plots show a linear correlation between the jump heights $c$ and the sum of residuals. The core statement from Figure \ref{fig:model8} is that if we know the distribution function $F_c$ , then the sum of residuals $T$ can be used as a hypothesis test for a continuous covariance matrix.

In the following, it is shown how the histogram of the sum of residuals develops when there are two or three jumps in the correlation matrix. Again, $r$, $n$ and the jump points are fixed. To simplify the problem, it is assumed that $c$ arises from a Uniform distribution. According to the definition: $c_1$ is in the interval [0;1],$c_2$ must be in the interval [0;$c_1$] and $c_3$ in the interval [0;$c_2$], respectively. The simulations are illustrated in Figure \ref{fig:model9}.

\begin{figure}[!ht]
\begin{center}
\includegraphics[width = 0.45\textwidth, trim = 0 0 0 39,clip]{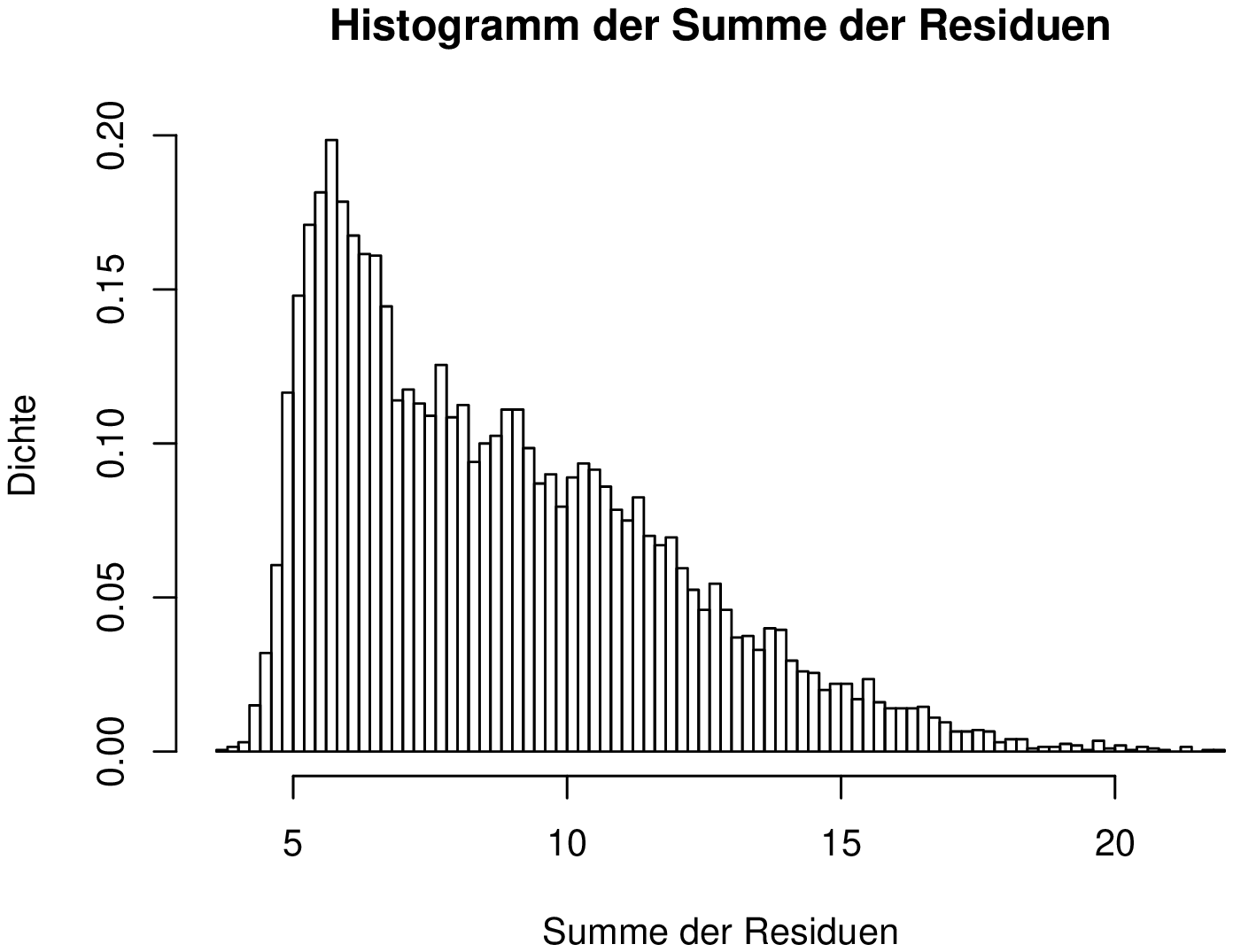}
\includegraphics[width = 0.45\textwidth, trim = 0 0 0 38,clip]{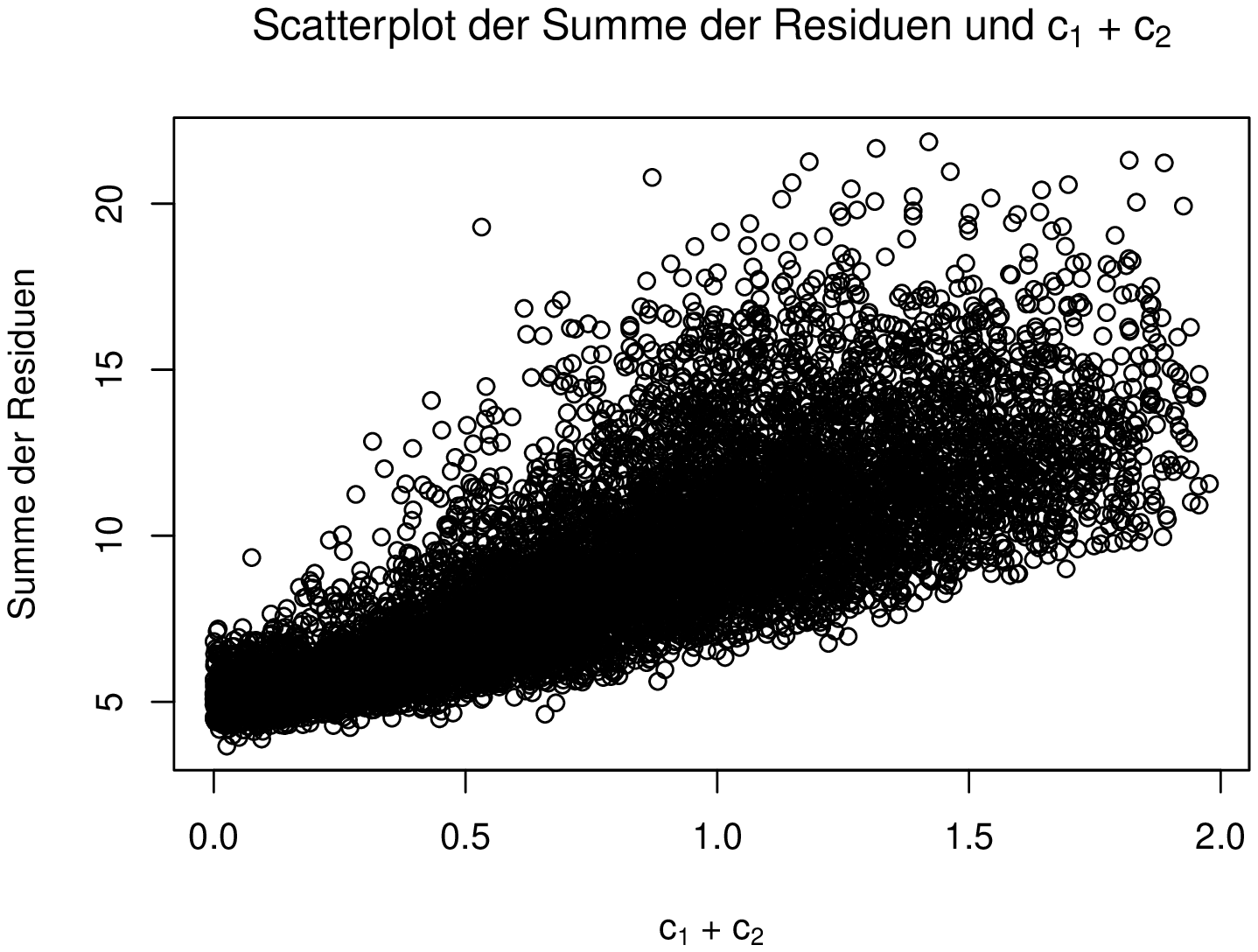}
\includegraphics[width = 0.45\textwidth, trim = 0 0 0 39,clip]{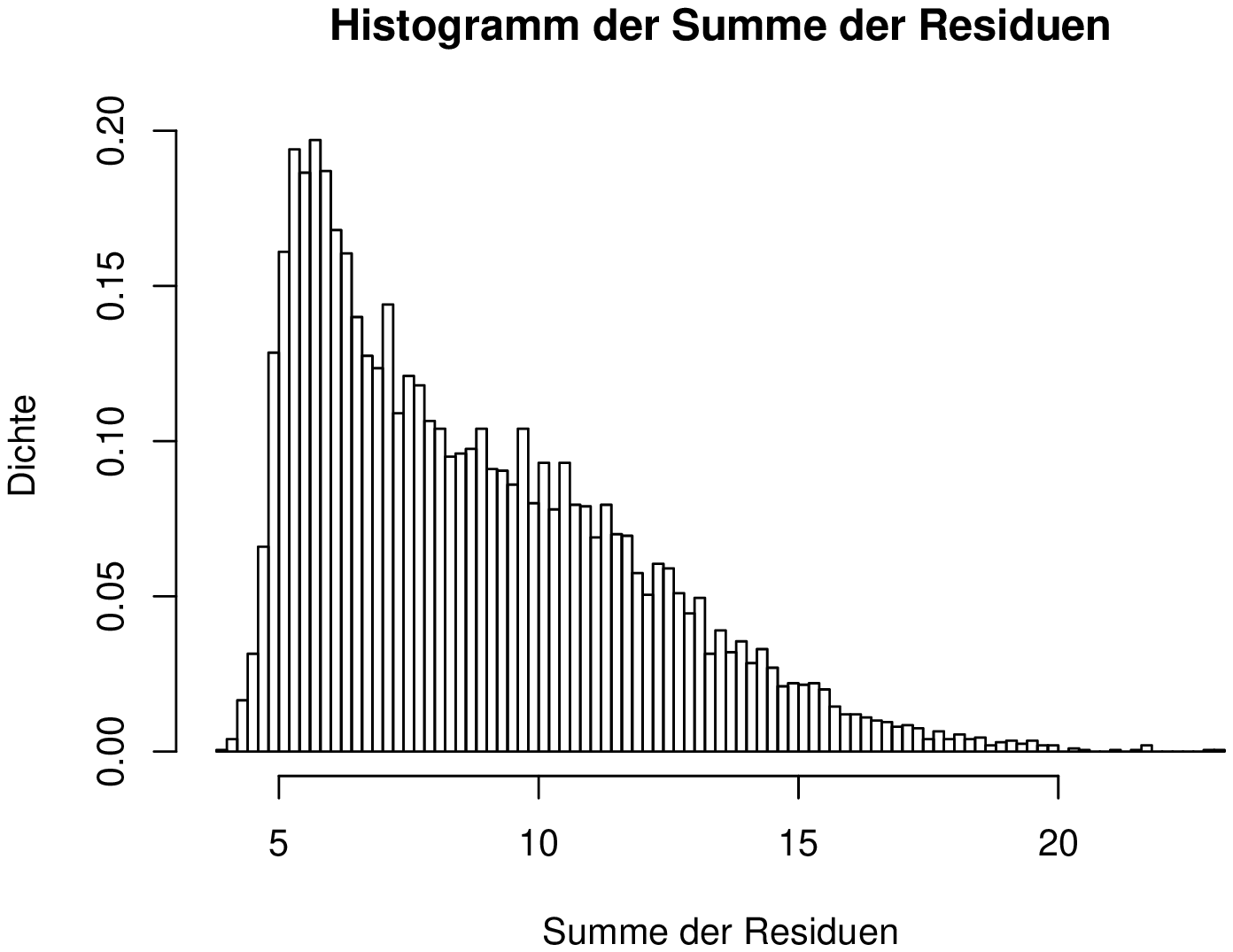}
\includegraphics[width = 0.45\textwidth, trim = 0 0 0 38,clip]{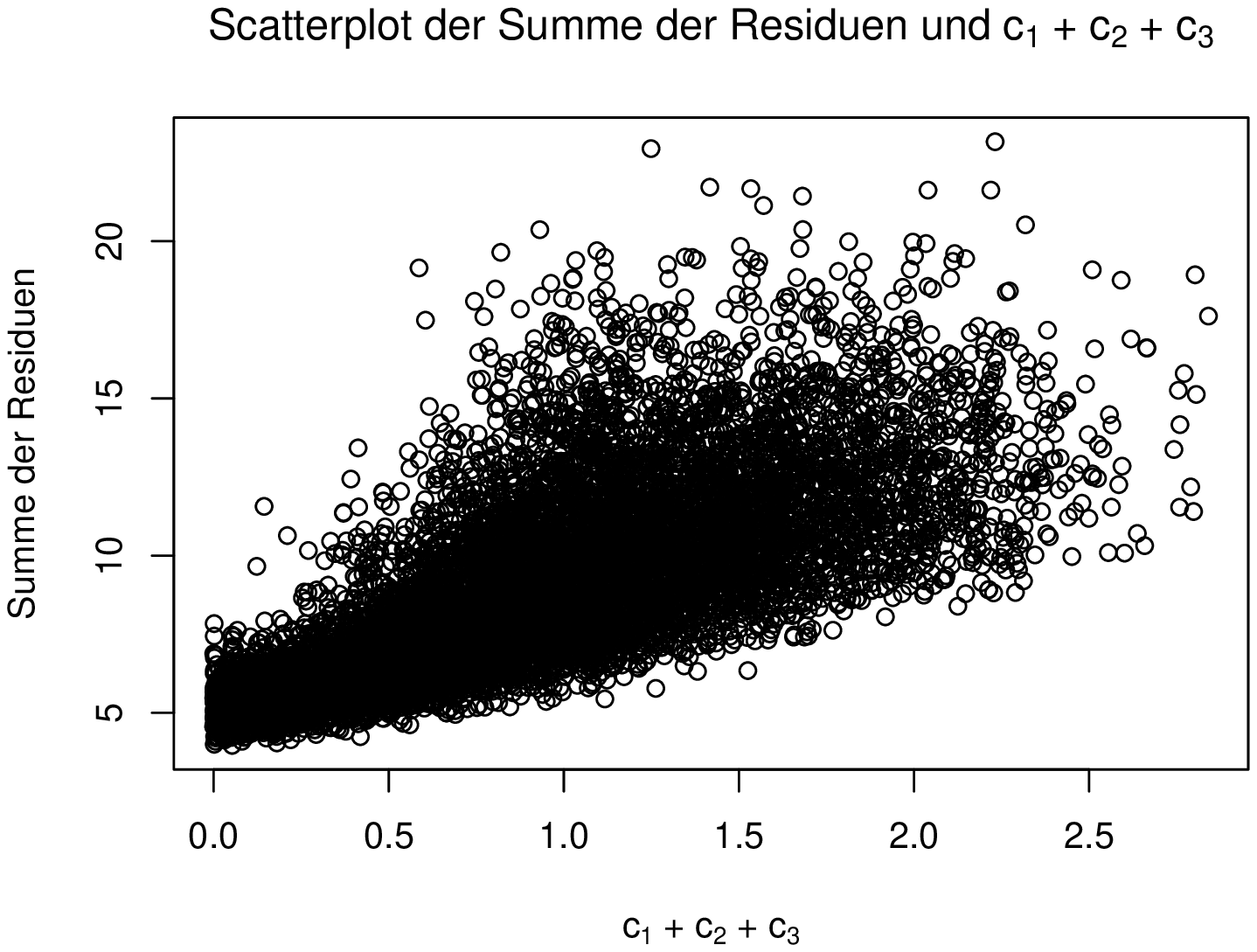}
\caption{Sum of residuals with $r=0.1$ and $n=100$ with 2 jumps ($c_1\sim U(0;1);\;c_2\sim U(0;c_1)$) (top row) and 3 jumps ($c_1\sim U(0;1);\;c_2\sim U(0;c_1);\;c_3\sim U(0;c_2)$) (bottomm row).}
\label{fig:model9}
\end{center}
\end{figure}

Contrary to the case, where covariance matrix consisted of one jump, it is impossible to derive the distribution function of $T$ when there are two or three jumps. Nevertheless, a correlation between the jump height and the sum of residuals can be assumed through the scatter plot.

\subsection{Forecast of stock markets}

A simulation experiment was conducted in order to show how forecast can be performed. We consider that data from Figure \ref{fig:anw1} shows the stock price of the last 90 days of a company (black line). The aim is to predict the stock price of the next ten days.

\begin{figure}[!ht]
\begin{center}
\subfigure[95\% confidence interval for the forecast of the stock price]{\includegraphics[width = 0.45\textwidth, trim = 0 0 0 38,clip]{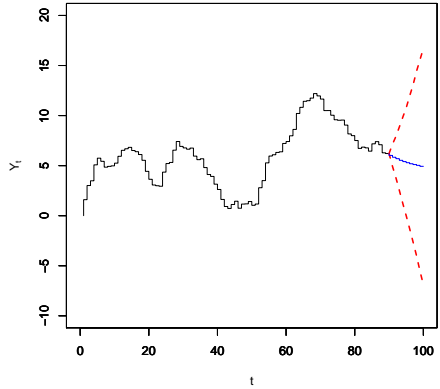}}
\caption{Fictitious stock price of the last 90 days of a company with 95\% confidence interval for its forecast}
\label{fig:anw1}
\end{center}
\end{figure}

Taking first differences leads to the $x_i$. Normally, $x_i$ are used to estimate the parameters $r$ and $c$ and to test whether the covariance matrix is continuous or not. Since this is a fictitious example and the data is simulated, the parameters are known. There are two jumps in the covariance matrix with $r=0.1$. The first jump is at lag 1 with $c_1=0.9$ and the second jump is at lag 30 with $c_2=0.8$. So the covariance matrix is discontinuous in this example.

The forecast for the future stock price is determined by dynamic simulation. First, the correlated $x_i$ are transformed into standard normal distributed random variables by $z_i = \textbf{A}^{-1}\cdot(x_1,...,x_{90}) '$ such that $\textbf{AA} ' = \Sigma.$
The matrix \textbf{\textit{A}} corresponds to the lower triangular matrix of the Cholesky-Decomposition. For one random prediction, the vector z will be completed by 10 standard normal random numbers and back transformed to correlated random variables
by $x_i^{\ast} = \textbf{A}^{\ast}\cdot(z_1,...,z_{90},z_{91}^{\ast},...,z_{100}^{\ast}) '.$
The covariance matrix has to be used in the right dimension. Summing up the vector cumulatively, leads to the original time series including the ten predicted values. If this procedure is repeated 10,000 times, it is possible to create a $(1-\alpha)$- confidence interval like illustrated in Figure \ref{fig:anw1} (blue as forecast and red dashed lines as confidence bounds).

\subsection{Application on Probability of Ruin}
As soon as insurance companies are unable to pay the claims for damages of the assured person, one defines this as ruin of the insurance company.  For prediction of probability of ruin, Cram\'{e}r-Lundberg-model or an alternative approach by \cite{Yuen2001} can be used.  We refer to  \cite{Yuen2001} for more details on the latter  model.

To be able to calculate probabilities of ruin with the presented model, a process is formulated, which indicates the available capital of an insurance company. In order to the collective risk model, the process is called surplus process $U_t$. For $t \in \mathbb{N}_0$ and $U(0)=u$:

\begin{equation} 	U_t=u+\sum_{i=1}^{t}x_{i},\; \text{for}\; t>0, \end{equation}	
\[
	\textbf{x}=(x_1, x_2,\ldots, x_{t})\sim N(0,\Sigma),\; \text{with}
\]
\begin{equation}
	Cov(x_s,x_t)= c\; \text{e}^ {-r |  t-s |}\;\;,
\end{equation}

where $u$ stands for the initial surplus of the insurer at $t=0$ and $x_i$ stands for the profit or loss of the insurance company at point $i$. An insurer generates a profit, if it gets more money by rates than it has to spend at point $i$. In the other way around the insurer has a loss of money while considering, that the insurer provides competitive rates, $E(x_i)=0$. The probability of ruin $\psi(u,t)$ is calculated in the same way like \cite{Yuen2001} showed in their paper:

\begin{equation}	\psi(u,t)=1-Pr(U_j\geq0;j=0,1,2,...,t) \end{equation}


A simulated experiment is conducted in order to show how the forecast is done. Figure \ref{fig:anw3}(a) shows the surplus process $U_t$ of the last 90 weeks of an insurance company.

\begin{figure}[!ht]
\begin{center}
\subfigure[Fictitious surplus process $U_t$ of the last 90 weeks of an insurance company]{\includegraphics[width=0.45\textwidth, trim = 0 0 0 38,clip]{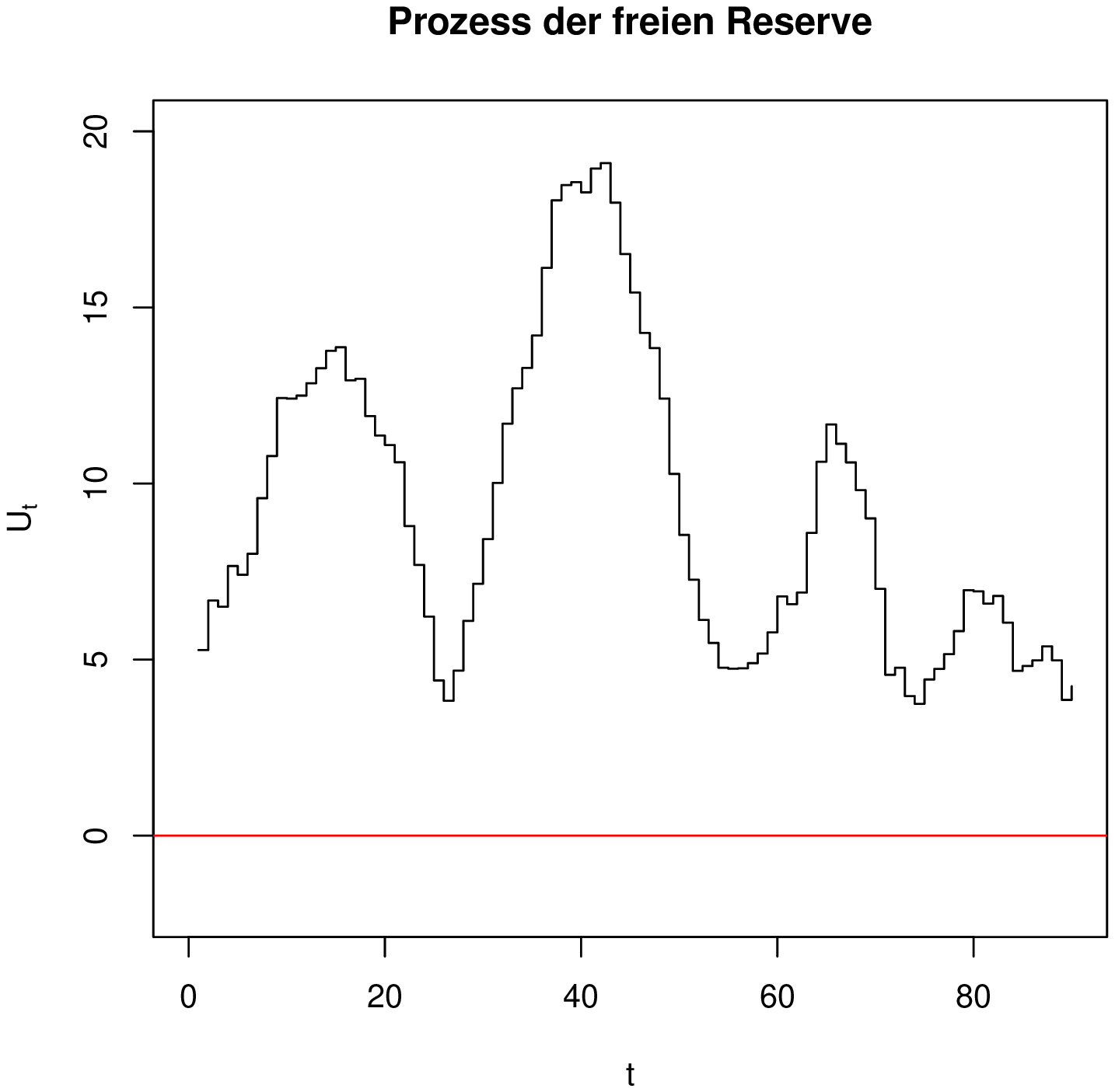}}
\subfigure[Predicted probability of ruin]{\includegraphics[width=0.45\textwidth, trim = 0 0 0 38,clip]{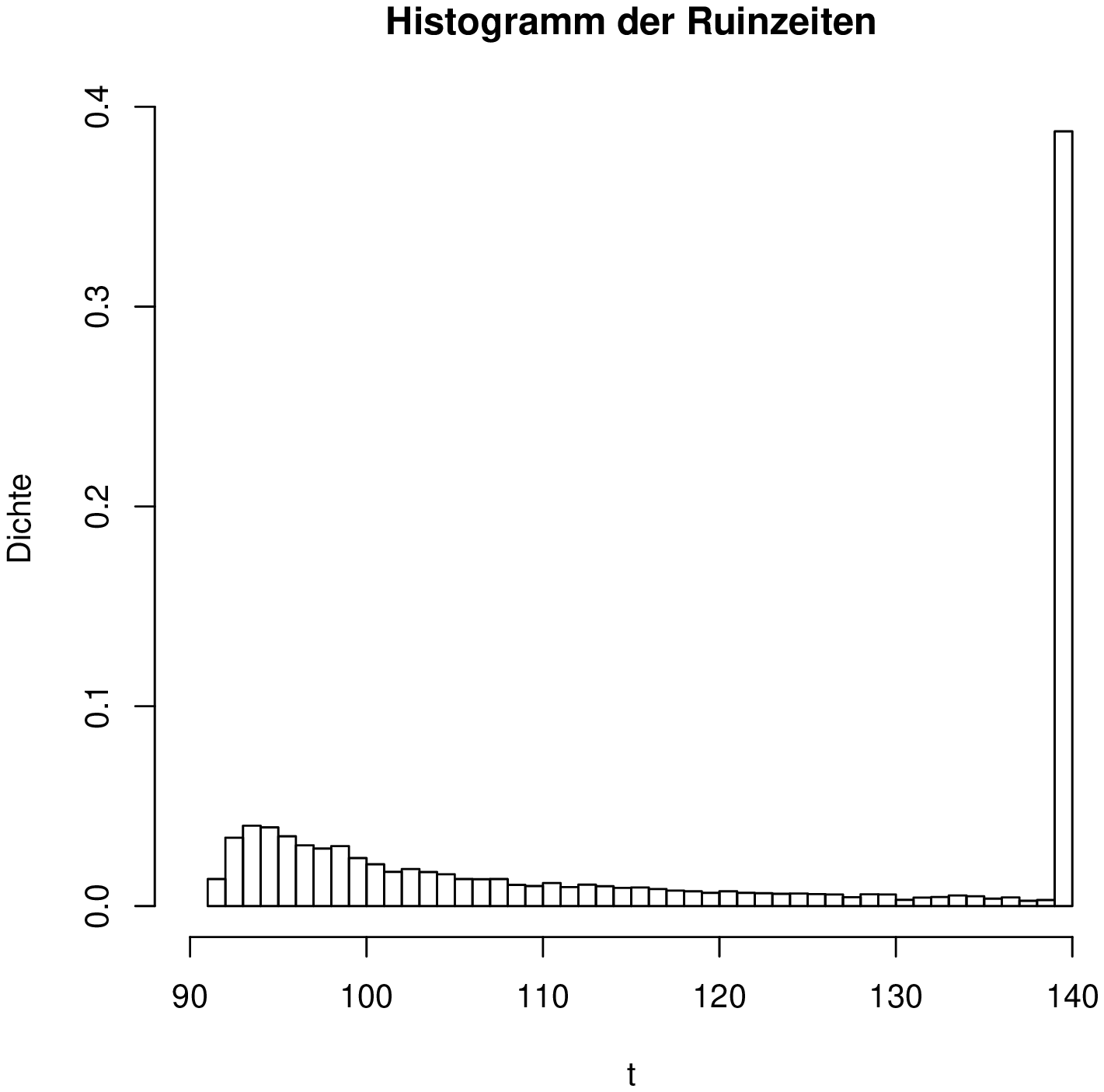}}
\caption{Fictitious surplus process $U_t$ of the last 90 weeks of an insurance company and its predicted probability of ruin}
\label{fig:anw3}
\end{center}
\end{figure}

The prediction of the probability of ruin is done the same way like the forecast of stock markets. For simulating the data the values $u=4$ and $r=0.3$ were used and there are two jumps in the covariance matrix with $c_1=0.9$ at Lag 1 and $c_2=0.8$ at Lag 30. Again, the values are transformed into standard normal distributed random variables and completed by standard normal random numbers. To be able to calculate the probability of ruin for the next 50 weeks, 50 random numbers have to be added. After that, the data are back transformed to correlated variables. Summing up the vector cumulatively and adding $u$, leads to a predicted path. If this procedure is repeated 10,000 times, the points at which $U_t$ gets negative can be illustrated with a histogram. Such a histogram is shown in Figure \ref{fig:anw3}(b) and illustrates the predicted probability of ruin $\psi(u,t)$.
The value at the end of the prediction interval expresses the probability that the insurance company will not have a ruin in the next 50 weeks. To be able to compare the results, the data from Figure \ref{fig:anw3} are analyzed  as they if were uncorrelated. This corresponds to the application of the Cram\'{e}r-Lundberg-Model. The histogram of the simulated points of ruin is shown in Figure \ref{fig:anw5} and illustrates that the probability of ruin is underestimated at the beginning compared to Figure \ref{fig:anw3}(b).

\begin{figure}[!ht]
\begin{center}
\subfigure[Predicted probability of ruin with uncorrelated data]{\includegraphics[width=0.45\textwidth, trim = 0 0 0 38,clip]{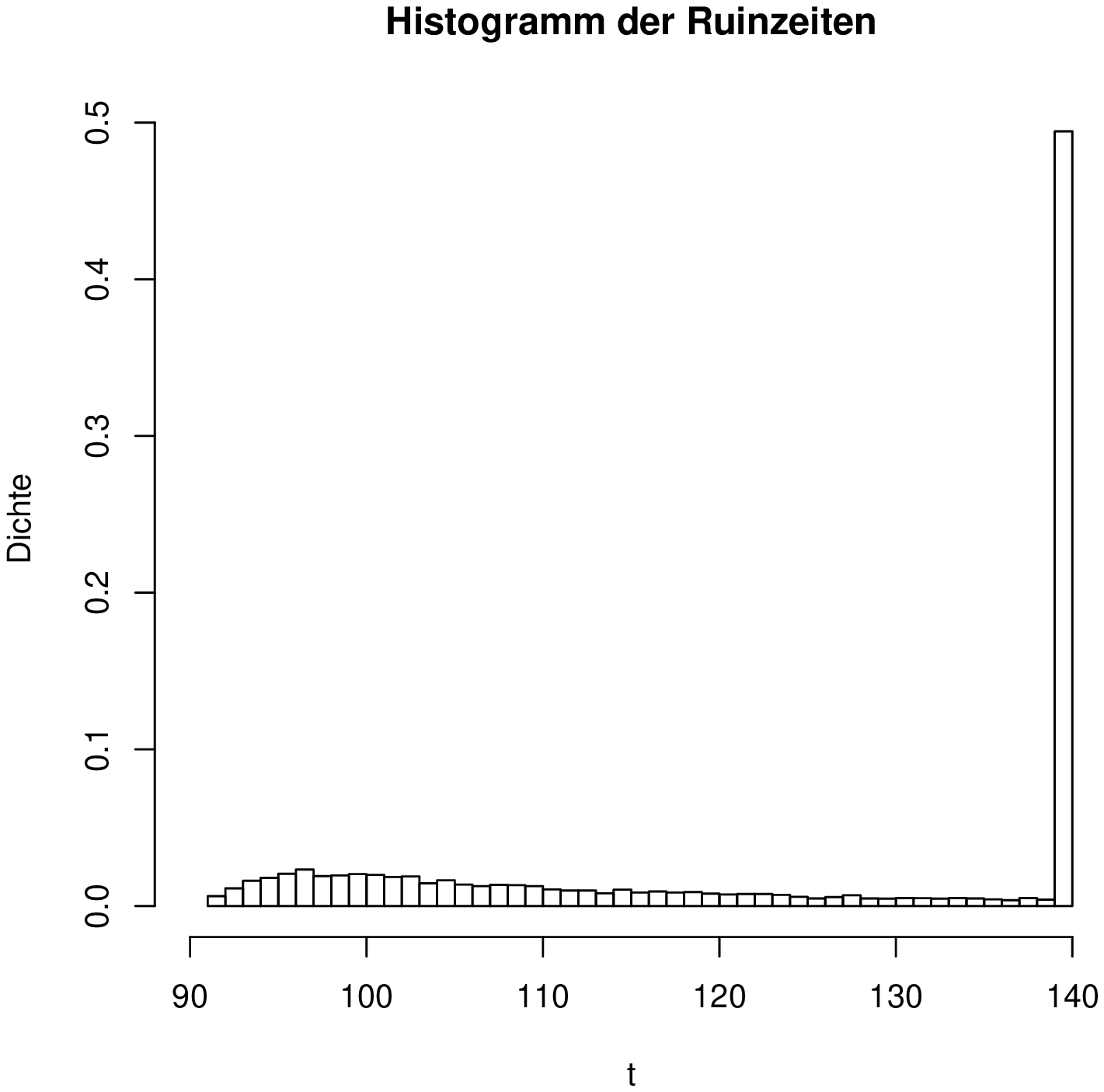}}
\subfigure[Quotients of the predicted probabilities of ruin]{\includegraphics[width=0.45\textwidth, trim = 0 0 0 38,clip]{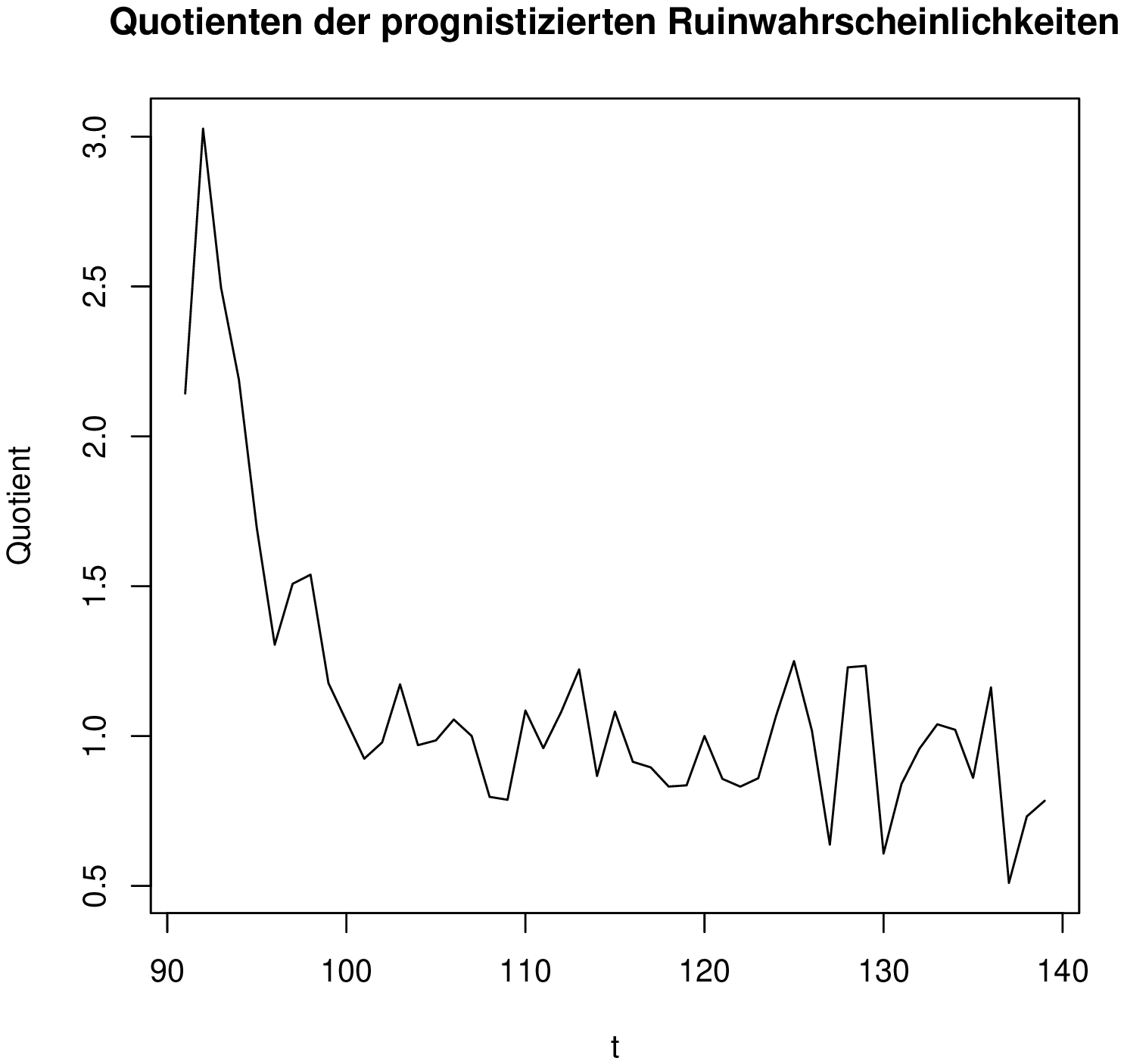}}
\caption{Predicted probability and quotients of predicted probabilites of ruin with uncorrelated data}
\label{fig:anw5}
\end{center}
\end{figure}

The quotients $Q$ of the predicted probabilities of ruin are shown in Figure \ref{fig:anw5}(b) and calculated via
\begin{equation} 	Q=\frac{\psi(u,t)_k}{\psi(u,t)_{ua}} \end{equation}

\section{Conclusions}

Recently, semicontinuous covariance functions have been used by many authors. However, an appropriate discussion on the regularity conditions and statistical properties is up to the best knowledge of the authors still missing. As we have shown in the paper covariance function satisfying conditions abc still possesses some important features of the continuous covariance, justifying the increasing domain asymptotics, e.g.~equidistant optimal design for trend parameter. Moreover, it may have advantage of new desirable features, e.g.~possible existence of admissible designs for correlation parameter. Also in class abc compromise designs seems to be appropriate, like GPD. Another possibility is to construct compound designs. As has been discussed in \cite{Meyer}, kriging is providing discontinuous surfaces even for smooth covariance functions. Therefore a natural idea may appear to employ semicontinuous covariance functions from class abc, which may be more flexible for such modeling.

Due to the fact that the hypothesis of efficient markets of \cite{Fama} is doubted by results of \cite{Lim2010}, a new random walk model for finance data was developed in this paper. For this reason a stationary Ornstein-Uhlenbeck process was used and its covariance matrix modified. This modification enables to adapt the process as good as possible to real data and moreover discontinuous covariance structure.  Parameter $r$ affects the strength of dependency and $c$ permits jumps in covariance structure.  It was shown that both parameters have big impact on resulting paths of the processes. Generally valid formulations were not possible to be done, because the necessary property of positive semi-finiteness of the covariance matrix was not given for all combinations of $r$ and $c$.

In practical applications it is hard to prove the jumps in covariance structure. This could be difficult especially in small samples, however it does not mean that no jumps are existing in the covariance. This is why a test-statistic was suggested to check for a discontinuous covariance matrix. Distribution of the sum of the residuals with one jump, known jump discontinuity and fixed $r$ only enables assumptions about the continuity if distribution function of jump height $c$ is known. More jumps lead to higher complexity and a solution of this problem needs further research in this area.

\section{Acknowledgements}    This work has been partially supported by project ANR project Desire FWF I 833-N18 and Fondecyt Proyecto Regular No 1151441.

\bibliographystyle{elsarticle-harv} 
\bibliography{Semicontinuity}

\appendix
\section{Proofs}

\begin{proof}{Lemma 1}

Let us have $n=2,$ i.e. we have two point design $x_1,x_2.$ Then optimal design for parameters $(\theta, r)$ is collapsing (see \cite{KS}), i.e.~$M_rM_\theta$ attains its maximum at $x_1=x_2.$

Let us have $n>2.$ Then we have $\frac{\partial F}{\partial d_i}<0.$ Therefore also the directional derivative is negative in all canonical directions with the start at the beginning of coordinate system. Therefore  $M_rM_\theta$ attains its maximum at $d_1=d_2=...=d_{n-1},\sum d_i=1,$ so the optimal design is equidistant.
\end{proof}

\begin{proof}{Theorem 1}

 Pointwise convergence of maps defined  on a locally separable metric space mapping to a metric space preserves the semicontinuity of maps (see \cite{Neubrunnova}). Thus $K\geq 0$ is non-increasing, and continuous. The interchange of the limits justifies c), i.e. $\lim_{d\to +\infty}C( d)=\lim_{d\to +\infty}\lim_{n\to +\infty}K_n(d,0) = $ \\ $\lim_{n\to +\infty} \lim_{d\to +\infty} K_n(d,0) = 0.$
\end{proof}

\begin{proof}{Lemma 2}

Condition a) is satisfied, since strong convergence implies the pointwise convergence when target space is regular in topological space (see \cite{KupkaToma}) and pointwise convergence preserves inequalities.

Condition b) is satisfied since strong convergence preserves continuity when target space is a regular topological space (see \cite{KupkaToma}).

Condition c) is satisfied, since the regularity of target space guaranties that strong convergence preserves continuity and implies pointwise convergence. Using these facts we finally get \\
 $\lim_{d\to +\infty}C( d)=\lim_{d\to +\infty}\lim_{\gamma\in \Gamma}K_\gamma(d,0)=\lim_{\gamma\in \Gamma}\lim_{d\to +\infty}K_\gamma(d,0)=0.$
\end{proof}

\begin{proof}{Theorem 3}

Condition a) follows from Theorem 5.

Condition b)  is satisfied because of Lemma 2 and regularity of target space.

Condition c) is satisfied because of Lemma 3, regularity and fully normality of target
space.
\end{proof}

\begin{proof}{Theorem 4}

 First, let us recall the Frobenius theorem (see \cite{Rao}, p.46). An irreducible positive matrix $A$ always has a positive characteristic value $\lambda_0(A)$ which is a simple root of the characteristic equation and not smaller than the moduli of other characteristic values. Moreover, if $A\geq B\geq 0$ then $\lambda_0(A)\geq \lambda_0(B).$

Now let $+\infty>d_1>d_2\geq 0.$ Then $C_{i,j}\left( d_1,r \right)\leq C_{i,j}\left( d_2,r \right)$ for all $i,j=1,..,n$ and thus $C\left( d_2,r \right)\geq C\left( d_1,r \right)\geq 0.$ Employing the Frobenius theorem we have $\lambda_0(C\left( d_2,r \right))\geq \lambda_0(C\left( d_1,r \right)).$ Our matrix is symmetric and real, thus we have

$\lambda_{min}(C^{-1}\left( d_2,r \right))\leq \lambda_{min}(C^{-1}\left( d_1,r \right)),$ where $\lambda_{min}(A)$ denotes the minimal eigenvalue of matrix $A.$

Now, $M_\theta(n)=1^TC^{-1}\left( d,r \right)1\geq n\inf_{x}\frac{x^TC^{-1}\left( d,r \right)x}{x^Tx}=\lambda_{min}(C^{-1}\left( d,r \right))$ and thus we have proven that for an equidistant design the lower bound function $d\to n\inf_{x}\frac{x^TC^{-1}\left( d,r \right)x}{x^Tx}$
is non decreasing. Similarly we can prove the rest of 1). Now let $C(d)$ be decreasing. We will show that $M_\theta$ is increasing.
Notice that $\lambda_{max}(C^{-1})=\rho(C^{-1})$ is decreasing with $d,$ more precisely if $0\leq A\leq B$ then $\rho(A)\leq \rho(B)$
and if $0\leq A< B$ and $A+B$ is irreducible, then $\rho(A)< \rho(B)$ (see \cite{Horn2}). Let us have $+\infty>d_1>d_2\geq 0.$
Then $C(d_1)<C(d)$ and we have $\rho(C^{-1}(d_2))<\rho(C^{-1}(d_1)).$  Let $\epsilon=\frac{\rho(C^{-1}(d_1))-\rho(C^{-1}(d_2))}2,$  then
 we have (see Lemma 5.6.10 in \cite{Horn}) such a matrix norm  $||.||_\star$ that $\rho(C^{-1}(d_2))\leq
 ||C^{-1}(d_2)||_\star<\rho(C^{-1}(d_2))+\epsilon<\rho(C^{-1}(d_1))\leq  ||C^{-1}(d_1)||_\star.$
Here we use the fact that $\rho(A)=\inf\{||A||,||.|| {\rm is\ a\ matrix\ norm}\}.$  Thus we have $||C^{-1}(d_2)||_\star<
||C^{-1}(d_1)||_\star$  and norms $||.||_\star$ and l-1 norm $||.||_1=\sum_{i,j} |A_{i,j}|$ are equivalent and
 $M_\theta(d)=||C^{-1}(d)||_1$. Thus we have $M_\theta(d_2)<M_\theta(d_1).$

To prove ii) let us consider the open set $U$ of all covariance matrices $C_r(d)$ with bounded inverse in a Banach space of real matrices $n\times n.$ Then the identity $I(n)=\lim_{\forall i:d_i\to +\infty}C_r(d)\in U$ and map $C(n) \to C(n)^{-1}$ is smooth. This implies
$$a(n,n-1)(+\infty)=\lim_{\forall i:d_i\to +\infty} \frac{1^TC(n)^{-1}_r(d)1}{1^TC(n-1)^{-1}_r(d)1}=\frac{1^TI(n)1}{1^TI(n-1)1}=\frac{n}{n-1}.\square$$

To prove iii) let us  consider representation $C(d)=\exp(-\psi_r(d)).$ Process considered in distances $\xi_i=\psi_r(d_i)$ is Ornstein Uhlenbeck and for such a process equidistant design is optimal (see \cite{KS}) and all neighboring point distances $\xi_i$ increase with same speed. $\psi_r(d)$ is nondecreasing function, therefore all neighboring point distances  $d_i$ of original process  increase with same speed. Therefore also for the original process (with covariance $C(d)$) is equidistant design optimal.

iv) Optimal design for parameter $r$ in stationary Ornstein Uhlenbeck process is collapsing (see \cite{KS} and \cite{Zagoraiou}). Process considered in distances $\xi_i=\psi_r(d_i)$ is Ornstein Uhlenbeck.  $\psi_r(d)$ is nondecreasing function with possible jump at point 0. Therefore it may exist an  admissible optimal design (nugget effect).
\end{proof}

\end{document}